\newcommand{\R}{\mathbb R}
\newcommand{\Z}{\mathbb Z}
\newcommand{\C}{\mathbb C}
\renewcommand{\v}{\mathfrak{v}}
\newtheorem{thm}{Theorem}[section]
\newtheorem{lem}[thm]{Lemma}
\newtheorem{prop}[thm]{Proposition}
\newtheorem{cor}[thm]{Corollary}
\theoremstyle{remark}
\newtheorem{rem}{\bf Remark}[section]
\theoremstyle{definition}
\newtheorem{defn}[thm]{Definition}
\newtheorem{ex}[thm]{Example}
\numberwithin{equation}{section}
\begin{document}

\title[Nash-Moser Localization]{Localization for Almost-Periodic Operators with Power-law Long-range Hopping: A Nash-Moser Iteration Type Reducibility Approach}
\author[Y. Shi]{Yunfeng Shi}
\address[Y. Shi] {School of Mathematics,
Sichuan University,
Chengdu 610064,
China}
\email{yunfengshi@scu.edu.cn}

\date{\today}

\keywords{Nash-Moser iteration, Reducibility, Almost-periodic operators, Localization, Power-law long-range hopping}

\begin{abstract}

In this paper we develop a Nash-Moser   iteration type reducibility  approach to prove the (inverse)  localization for some $d$-dimensional discrete almost-periodic operators with power-law  long-range hopping. We also provide a quantitative lower bound on the regularity  of the hopping. As an application,   some results of \cite{Sar82, Pos83, Cra83, BLS83} are generalized  to the power-law hopping case.
\end{abstract}

\maketitle

\maketitle
\section{Introduction}
 The study of the localization for noninteracting quantum particles in (pseudo) random media with \textit{finite-range} hopping has attracted great attention over the years since the seminal  work of Anderson (\cite{And58}). While models with \textit{finite-range} hopping work nicely in describing variety of materials, the {power-law} long-range hopping is often found in different physical systems, e.g., dipolar Frenkel exciton (\cite{Nab95}), nuclear spins in solid-state systems (\cite{ASK15})  and the quantum Kepler model (\cite{AL97}). Actually,   some  physical works have presented  unusual localization properties of both random (\cite{Rod03}) and quasi-periodic (\cite{Den19}) operators with power-law hopping.  Apart from the importance of the localization theory (of  power-law hopping models) itself, another key motivation comes from the  perspective of \textit{quantum suppression of chaos} related to the localization  in quantum chaos (\cite{FGP82,Izr90,SPK22})  which will be  elaborated below.  One of the main models  in quantum chaos is the so-called  $1$-dimensional quantum kicked rotor  which is given by
\begin{align}\label{qkrm}
\sqrt{-1}\frac{\partial \Psi(x, t)}{\partial t}= \left(\mathcal{H}_0+\check{\phi}(x)\sum_{n\in\Z}\delta(t-nT)\right)\Psi(x, t),
\end{align}
where
\begin{align*}
\ (x, t)&\in\R\times \R, \ T>0, \\
\mathcal{H}_0&=\frac{\partial^2 }{\partial x^2}\ {\rm or} \ \sqrt{-1}\frac{\partial}{\partial x},
\end{align*}
and $\check{\phi}$ is a potential of period $1$. The model \eqref{qkrm} with $\mathcal{H}_0=\frac{\partial^2 }{\partial x^2}$ and $\check{\phi}(x)=\cos(2\pi x)$ was first introduced by Casati \textit{et al} (\cite{Cas79}) as a quantum analogue of the usual Chirikov standard map. The  motion in  the quantum kicked rotor is generally  almost-periodic  even though  it is  typically chaotic in the classical one (\cite{FGP82,Bou02}).   This \textit{quantum suppression of chaos} feature was well understood  after the remarkable work of Fishman, Grempel and Prange (\cite{FGP82,GFP82}). By introducing an elegant transformation,  they reduced the quantum kicked rotor  to a  discrete operator $H$  of the form
\begin{align*}
(Hu)_n=\sum_{m\in\Z}\phi_{n-m}u_m+d_n(y)u_n,\ n\in\Z,\  y\in\R,
\end{align*}
where
\begin{align*}
\phi_n&=-\int_{0}^1\tan\pi\left(\frac{\check{\phi}(x)}{2}\right)e^{-2\pi\sqrt{-1} nx}\mathrm{d}x,\\
d_n(y)&=\tan\pi\left(y-n^2\frac{T}{2}\right)\ {\rm or}\ \tan\pi\left(y-n\frac{T}{2}\right)\ \mbox{depending on} \ \mathcal{H}_0.
\end{align*}
Moreover,  under the above reduction, the localization for $H$ could imply the almost-periodicity of solutions for the quantum kicked rotor \eqref{qkrm}, and then the absence of chaos (cf. \cite{FGP82,SPK22}).  For the analytic potential $\check{\phi}$, the operator $H$ admits an exponential  hopping, and \textit{all coupling} localization for $H$ has been obtained both in physics (\cite{GFP82}) and mathematics (\cite{FP84}) under certain non-resonant condition of $T/2$. However, the non-analytic (even singular) potential $\check{\phi}$ which yields a slowly decaying  hopping also appears naturally in some important physical models, e.g., the quantum Fermi accelerator (\cite{JC86}). In fact, the  work \cite{GW05} has provided numerical evidence for  power-law localization in quantum chaos with  the  singular potential  $\check{\phi}(x)=|x|^\alpha\mod 1$ ($\alpha> -1$), which induces a power-law hopping $|\phi_n|\sim |n|^{-1-\alpha}$. Note however that the analytic derivation of the localization  in \cite{GW05} relies on physical perspective of localization  for random operators with power-law hopping (cf. \cite{Rod03}).  In addition, the  on-site energy sequence $\{d_n(y)\}_{n\in\Z}$ is only pseudo-random and the localization properties of $H$ should depend  on arithmetic properties of $T/2$. Thus, a mathematically rigorous  treatment of the localization for quasi-periodic operators with (realistic) power-law  hopping becomes significantly relevant. We would also like to mention that certain quantum kicked rotors in higher dimensions (\cite{DF88}) or with a quasi-periodic potential (\cite{CGS89})  will give rise to quasi-periodic operators on the higher dimensional lattice. The present paper aims to develop a Nash-Moser iteration type reducibility approach to prove the (inverse)  localization for some $d$-dimensional almost-periodic operators with power-law  hopping. In particular, our result applies to some physical realistic hopping  $|\phi_n|\sim |n|^{-1-\alpha}$ for $\alpha>1$ (e.g., the case $\alpha=2$  corresponds to the $1$-dimensional Frenkel exciton hopping \cite{RMD00}).



More precisely, let us start with the  discrete operator
\begin{align}\label{lrop}
H=\varepsilon T_\phi+d_\mathbf{i}\delta_{\mathbf{ii}'},\ \mathbf{i} ,\mathbf{i}'\in\Z^d,\ \varepsilon\geq0,
\end{align}
where $(d_\mathbf{i})_{\mathbf{i}\in\Z^d}$ is a $d$-dimensional sequence and $T_\phi$ is the long-range hopping operator defined by
\begin{align*}
(T_\phi u)_\mathbf{i}&=\sum_{\mathbf{j}\in\Z^d}{\phi_{\mathbf{i}-\mathbf{j}}u_\mathbf{j}},\ u=(u_\mathbf{i})_{\mathbf{i}\in\Z^d}\in \ell^2({\Z^d})
\end{align*}
  with a symbol $\phi=(\phi_\mathbf{i})_{\mathbf{i}\in\Z^d}\in \R^{\Z^d}$. If we let $\phi_\mathbf{i}=\delta_{\mathbf{ie}}$ with $|\mathbf{e}|_1=\sum\limits_{i=1}^d|e_i|=1$, then \eqref{lrop} becomes the standard discrete Schr\"odinger operator and we write $T_\phi=\Delta$. In the present we focus on the almost-periodic sequences  $(d_\mathbf{i})_{\mathbf{i}\in\Z^d}$, and some special cases include $d_\mathbf{i}=\tan\pi(\mathbf{i}\cdot\vec\omega),\ d_\mathbf{i}=\exp{(2\pi\sqrt{-1}\mathbf{i}\cdot\vec\omega)},\  d_\mathbf{i}=(\mathbf{i}\cdot\vec\omega\mod 1)$ with $\mathbf{i}\cdot\vec\omega=\sum\limits_{v=1}^di_v\omega_v$, $\vec\omega\in\R^d.$ If $\varepsilon=0$, we know that the spectrum of $H$ is pure point and $\{\delta_\mathbf{i}\}_{\mathbf{i}\in \Z^d}$ forms a complete set of eigenfunctions. A natural question is  whether such localization  preserves or not for $\varepsilon\neq0$. It turns out that  this is quite a delicate problem and the localization depends  sensitively  on certain parameters associated to the operator (\cite{Sim82, Bou05, MJ17, Dam17}). However, it is an intuition  that $H$ could be ``diagonalizable'' and thus shows localization if the coupling $\varepsilon$ is small enough and $(d_\mathbf{i})_{\mathbf{i}\in\Z^d}$ is reasonably ``separated''. In this context the celebrated  Kolmogorov-Arnold-Moser (KAM) (\cite{Kol54,Arn63, Mos62a}) method becomes a good candidate to handle such issues.

Indeed, ever since Dinaburg-Sinai \cite{DS75} first introduced the KAM method  in the field of almost-periodic operators, it has become a very powerful tool to achieve both delocalization  and  localization (see e.g., \cite{Cra83, BLS83, Pos83, MP84, Sin86, Eli92, CD93, Eli97, AFK11} just for a few). Particularly, in the $1$-dimensional Laplacian hopping case, the KAM method can be modified to give reducibility of corresponding transfer matrix, and thus to obtain some delocalization results in the ``small'' potentials case. By the remarkable Aubry-Audr\'e duality \cite{AA80}, the delocalization at the small potential may imply localization of its duality at the large potential \cite{BLT83,JK16, AYZ17}.

An alternative method  is to ``diagonalize'' (or reduce) the infinite matrix $H$ directly  \cite{Cra83, BLS83, Pos83, CD93, Eli97}. This idea was first introduced by Craig \cite{Cra83}.  In  \cite{Cra83} Craig  performed an inverse spectral procedure relied on a modified KAM method, and obtained the existence of almost-periodic Schr\"odinger operators satisfying the Anderson localization. Let $D={\rm diag}_{\mathbf{i}\in\Z^d}(d_\mathbf{i})$. The core of the proof in \cite{Cra83} is to  find a unitary transformation and a diagonal operator $D'$ so that
\begin{align*}
Q^{-1}(\varepsilon\Delta+D')Q=D,
\end{align*}
where $Q$ is derived from  the limit of a sequence of invertible operators  in the KAM iteration steps. To get such transformations, Craig imposed some non-resonant condition on $(d_\mathbf{i})_{\mathbf{i}\in\Z^d}$, i.e.,
\begin{align}\label{dccond}
|d_\mathbf{i}-d_\mathbf{j}|\geq \Omega(|\mathbf{i}-\mathbf{j}|)\ {\rm for}\ \forall \  \mathbf{i}\neq \mathbf{j},
\end{align}
where $|\mathbf{i}|=\sup_{1\leq v\leq d}|i_v|$ and $\Omega:\ \R_{+}\to \R_{+}$ is some weight function which decays slower than the exponential one (\cite{Rus80}). The special case that $\Omega(t)=\gamma t^{-\tau}$ with some  $\tau>d, \gamma>0$  corresponds to the standard Diophantine condition. At every iteration step, some new diagonal operator will emerge, which would not satisfy the condition  \eqref{dccond}  in general. Then Craig  placed those new diagonal terms in $D'$, and as a result, only inverse spectral type results were obtained. Later, Bellissard-Lima-Scoppola (\cite{BLS83}) dealt with the direct problem  with
\begin{align}\label{lrcond}
|\phi_\mathbf{i}|\leq Ce^{-\rho|\mathbf{i}|}\ {\rm for\ some}\  C>0,\ \rho>0\ {\rm and}\ \forall\ \mathbf{i}\in\Z^d.
\end{align}
They observed that for some special almost-periodic  potentials, the condition \eqref{dccond} is stable under small perturbations. Then by using again the KAM method, they showed  for such potentials $(d_\mathbf{i})_{\mathbf{i}\in\Z^d}$, there exists a unitary operator $Q$ and some diagonal operator $D'$ so that for $|\varepsilon|\ll1$,
\begin{align}\label{blsmd}
Q^{-1}(\varepsilon T_\phi+D)Q=D'.
\end{align}
This implies in particular that $H$ has pure point spectrum. Since in this KAM procedure,  $Q\approx I$ (with $I$ being the identity operator) in the analytic norm, the eigenfunctions of $D'$ under the transformation $Q$ form a complete set of exponentially localized eigenfunctions of $H$. While the potentials of \cite{BLS83} seem restrictive, they contain in fact the Maryland potential (cf. \cite{GFP82}) and Sarnak's potential (cf. \cite{Sar82})  as special cases. Subsequently, P\"oschel \cite{Pos83} presented a general KAM approach (in the setting of translation invariant Banach algebras) and provided new examples of limit-periodic Schr\"odinger potentials having Anderson localization. P\"oschel's proof also requires the operator to  satisfy both \eqref{dccond} and \eqref{lrcond}.

All mathematical results as mentioned above concern operators with exponential long-range hopping. In this paper we try to generalize some results of \cite{Cra83,BLS83, Pos83} to the polynomial long-range hopping case, i.e.,
\begin{align*}
|\phi_\mathbf{i}|\leq |\mathbf{i}|^{-s}\ {\rm for\ some}\ s>0\ {\rm and}\ \forall\ \mathbf{i}\in\Z^d\setminus\{\mathbf{0}\}.
\end{align*}
We assume additionally $(d_\mathbf{i})_{\mathbf{i}\in\Z^d}$ satisfies the Diophantine condition (this is reasonable since we have a slower decay of the hopping). We again want to find invertible transformation $Q$  and some diagonal operator $D'$ so that \eqref{blsmd} holds true. The transformation $Q$ should  be the limit of some sequence of invertible  operators $Q_k$ along the iteration  steps  in \textit{some operator norm}.
In  \cite{BLS83, Pos83},  they introduced an exponential norm for the Toeplitz type operator $A$:
\begin{align*}
\|A\|_a=\sum_{\mathbf{k}\in\Z^d}|A_\mathbf{k}|e^{a|\mathbf{k}|},\  a>0,
\end{align*}
where $A_\mathbf{k}$ denotes the $\mathbf{k}$-diagonal of $A$ (see \eqref{sob} in the following for details). Thus at the $k$-th KAM  step, it needs to find  $Q_k$ and $D_{k-1}$ so that
\begin{align}\label{kamqk}
Q_k^{-1}\left(\varepsilon T_\phi+D+\sum_{i=1}^{k}D_{i-1}\right)Q_k=D+R_k,\ \|R_k\|_{\rho_k}=O(\varepsilon^{(\frac32)^k}),
\end{align}
where  $\inf\limits_{k\geq 0}\rho_k\geq \rho/2.$ The key point is to  determine $Q_k$ and $D_{k-1}$ which turn out to be the solutions of so-called homological equations. Since the small divisors difficulty, one  may lose  some  regularity at each iteration step. Fortunately, the condition \eqref{dccond} and the analytic norm permit a loss of order $\delta$ for \textit{arbitrary} $\delta>0$. This combined with the sup-exponential smallness of the Newton error will lead to the convergence of $Q_k$ and $\sum_{i=1}^{k}D_{i-1}$ in the $\|\cdot\|_{\rho/2}$ norm eventually, and then the exponential decay preserves! However, when we deal with the polynomial one, we should use the Sobolev type norm
\begin{align*}
\|A\|_s^2=\sum_{\mathbf{k}\in\Z^d}|A_\mathbf{\mathbf{k}}|^2\langle \mathbf{k}\rangle^{2s}, \ \langle \mathbf{k}\rangle=\max\{1, |\mathbf{k}|\},\ s>0.
\end{align*}
If we want to perform a similar  scheme as in \cite{BLS83, Pos83}, there comes a serious difficulty: Since we are in the polynomial case, the loss of regularity when solving the homological equations is  of order $\tau>d$ which is fixed at each step! We can imagine such iterations must fail in some finite steps (i.e., a loss of all regularities). This motivates us to employ instead a Nash-Moser iteration  type scheme.  Hence at the $k$-th step we would like to get
\begin{align*}
Q_k^{-1}\left(\varepsilon \sum_{i=1}^kT_{i-1}+D+\sum_{i=1}^{k}D_{i-1}\right)Q_k=D+R_k'+R_k'',
\end{align*}
where $T_{i}=(S_{\theta_i}-S_{\theta_{i-1}})T_\phi$ and $S_{\theta_i}$ denotes the smoothing operator (see Definition \ref{smdf} for details). As compared with \eqref{kamqk}, our reminder consists of two parts. Namely,  the $R_k''$ is the usual ``square error'', while $R_k'$ represents  a new ``substitution error'' which comes from a further smoothing procedure when we try to solve the homological equations. It is actually the $R_k'$ that dominates the convergence of the scheme:  The total reminder in our scheme  is  now an exponentially small one, rather than a  sup-exponential one as in \eqref{kamqk}. In the iteration scheme, another key ingredient called the tame  property (of the Sobolev norm) plays an essential role. Roughly speaking, the tame  property means in the present that the $s$-norm of the product $\prod\limits_{i=1}^n{A_i}$ is bounded above by a \textit{linear} function of $\|A_i\|_s$ ($i=1,\cdots, n$).

{Definitely,  our proof is based on the natural KAM reducibility idea originated from \cite{Cra83,BLS83,Pos83}.  However,  in the reducibility procedure we  introduce  a Nash-Moser  iteration type  \cite{Nas56, Mos61}  argument  within  the spirit of the scheme introduced by H\"ormander \cite{Hor76, Hor77} (see also \cite{AG07} for an excellent exposition).  This would be explained as follows. First, our proof  employs  some smoothing operators (acting on both  $T_\phi$ and the homological equations) to supply the loss of regularities. Second, we directly perform the iterations  in  the Sobolev norms, and the tame property becomes significantly essential in our proof.   In contrast,  to handle the differentiable Hamiltonians, there is a different approach which is based on approximating differentiable functions by analytic ones (i.e., the Moser-Jackson-Zehnder argument, cf. \cite{Sal04}) known in the KAM field.  The  iteration scheme of the later method uses the analytic norms  as in the standard analytic KAM  one.}

The above argument is a very schematic,  only formal, overview of the proof. It is our main purpose here to actually \textit{develop the whole iteration scheme from scratch}. We also want to point out that we can prove a \textit{quantitative (probably sharp) lower bound} on the regularity required by $T_\phi$, which is given by
\begin{align*}
 \|T_\phi\|_s<\infty\ {\rm with}\ s>\tau+d.
\end{align*}
Similar regularity bound has been previously encountered  in   
localization theory of random models (cf. \cite{AM93}). Finally, while we can't handle the general analytic quasi-periodic potentials,  we believe the method may be improved to resolve at least some special cases, such as the almost Mathieu one (i.e., $d_\mathbf{i}=\cos2\pi(\mathbf{i}\cdot\vec\omega)$).

Once the above reducibility is obtained via the Nash-Moser type iteration, we can apply it to some interesting examples of \cite{Cra83, Sar82, Pos83, BLS83} and establish the (inverse) power-law localization.

\subsection{Structure of paper}

{The structure of paper is then as follows. 

The main results are introduced in \S 2.    In \S 2 our new results  are stated in  theorems and corollaries.  In this section we mainly extend the results of \cite{Cra83,Pos83,BLS83} to the power-law long-range hopping case. So to prove these corollaries, we  need  some  results obtained in \cite{Cra83,BLS83,Pos83},  which are all  cited as lemmas (cf. Lemma \ref{BLS} and Lemma \ref{Pos}).   

Some preliminaries including the tame inequality  (cf. Lemma \ref{tame}) and  properties  of  smoothing operator (cf. Lemma \ref{smooth})  are collected in \S 3. Both Lemma \ref{tame} and Lemma \ref{smooth}  which are essential for our analysis, did  not appear in \cite{Cra83,Pos83,BLS83}. 

The main part of the paper is \S 4, where a Nash-Moser iteration scheme  (cf. Proposition \ref{itprop} and Proposition \ref{prop0}) is stated and proved.  In the proof, we need  a couple of  lemmas,   which  are  all designed for power-law hopping operators and  proved in detail.

 A complete Nash-Moser iteration type  reducibility theorem (cf. Theorem \ref{itthm}) in our setting is stated in  \S5.  
 
 The proofs of our main theorems are contained in \S6.  To prove the main theorems, it suffices to establish the convergence of transformations in Theorem \ref{itthm}. We prove  Theorem \ref{mthm1}  in detail.  We only  outline the proof of  Theorem \ref{mthm2}  since its proof  is just a variant of the preceding one.
 
 A proof of the  tame  inequality  is  given in the appendix.}

\section{Main results}
We  present our results in the language of translation invariant Banach algebra as in P\"oschel \cite{Pos83}.

\subsection{Translation invariant Banach algebra}
Our start point is a Banach algebra $(\mathfrak{B}, \|\cdot\|_\mathfrak{B})$ of complex $d$-dimensional sequences $(a_\mathbf{i})_{\mathbf{i}\in\Z^d}$. The operations are pointwise addition and multiplication of sequences. We assume the sequence $(1)_{\mathbf{i}\in\Z^d}$ belongs to $\mathfrak{B}$ and has norm $1$. Denote by $\sigma_\mathbf{j}, \mathbf{j}\in\Z^d$ the translations on $\mathfrak{B}$, which are defined by  $(\sigma_\mathbf{j}a)_\mathbf{i}=a_{\mathbf{i}-\mathbf{j}}$ for each $a\in \mathfrak{B}$. We assume further that $\mathfrak{B}$ is \textit{translation invariant}, i.e., $\|\sigma_\mathbf{j}a\|_\mathfrak{B}=\|a\|_\mathfrak{B}$ for all $\mathbf{j}\in\Z^d$ and $a\in \mathfrak{B}$.

 It was proved in  \cite{Pos83} that the translation invariance implies for every $a\in \mathfrak{B},$
 \begin{align}\label{infb}
 \|a\|_\infty=\sup_{\mathbf{i}\in\Z^d}|a_\mathbf{i}|\leq\|a\|_\mathfrak{B}.
 \end{align}
 \subsection{The $(\tau,\gamma)$-distal sequence}
  Let $p=(p_\mathbf{i})_{\mathbf{i}\in\Z^d}$ be an arbitrary complex sequence. Fix $\tau>0,\gamma>0$. We call $p$ a $(\tau,\gamma)$-\textit{distal sequence} for $\mathfrak{B}$ if  for $\forall\ \mathbf{k}\in\Z^d\setminus\{\mathbf{0}\}$,
 \begin{align*}
(p-\sigma_\mathbf{k} p)^{-1} \in \mathfrak{B},\ \|(p-\sigma_\mathbf{k} p)^{-1}\|_\mathfrak{B}\leq {\gamma^{-1}}{|\mathbf{k}|^\tau}\
 \end{align*}
 with  $(p-\sigma_\mathbf{k} p)^{-1}_\mathbf{i}=\frac{1}{p_\mathbf{i}-p_{\mathbf{i}-\mathbf{k}}}$ and $|\mathbf{k}|=\sup\limits_{1\leq v\leq d}|k_v|$. We remark that $p$ itself is not necessarily in $\mathfrak{B}.$

 Denote by ${DC}_\mathfrak{B}(\tau,\gamma)$ the set of  all $(\tau,\gamma)$-distal sequence for $\mathfrak{B}$.

 \subsection{The Sobolev Toeplitz operator}
  Now we define the Toeplitz operator with polynomial off-diagonal decay. Let $\mathcal{M}$ be the set of all infinite matrices $A=(a_{\mathbf{i},\mathbf{j}})_{\mathbf{i},\mathbf{j}\in\Z^d}$ satisfying for every $\mathbf{k}\in\Z^d,$
  \begin{align*}
 A_\mathbf{k}=(a_{\mathbf{i},\mathbf{i}-\mathbf{k}})_{\mathbf{i}\in\Z^d}\in \mathfrak{B},
  \end{align*}
 where $A_\mathbf{k}$ is called a $\mathbf{k}$-diagonal of $A$. We can then define the Toeplitz operator of  the Sobolev type. For any $s\in[0,+\infty)$ and $A\in \mathcal{M}$,  define
 \begin{align}\label{sob}
 \|A\|_s^2=\sum_{\mathbf{k}\in\Z^d}\|A_\mathbf{k}\|_\mathfrak{B}^2\langle \mathbf{k}\rangle^{2s},\ \langle \mathbf{k}\rangle=\max\{1, |\mathbf{k}|\}.
 \end{align}
 Then let
  \begin{align*}
 \mathcal{M}^s=\{A\in \mathcal{M}:\ \|A\|_s<\infty\},\ \mathcal{M}^{\infty}=\bigcap_{s\geq 0}\mathcal{M}^s.
 \end{align*}

 For any $A=(a_{\mathbf{i},\mathbf{j}})_{\mathbf{i},\mathbf{j}\in\Z^d}$,  let $\overline{A}={\rm diag}_{\mathbf{i}\in\Z^d}(a_{\mathbf{i},\mathbf{i}})$ be the main diagonal part of $A$.
 Denote by $\mathcal{M}_0^\infty=\{A\in \mathcal{M}^\infty:\  A=\overline{A}\}$ the subspace of all diagonal operators.

  \subsection{The main results}

Let $D={\rm diag}_{\mathbf{i}\in\Z^d}(d_\mathbf{i})$. Denote $D\in DC_\mathfrak{B}(\tau,\gamma)$ if $(d_\mathbf{i})_{\mathbf{i}\in\Z^d}\in DC_\mathfrak{B}(\tau, \gamma).$

 Fix $I\in \mathcal{M}$ to be the identity operator. The first main result is
\begin{thm}[]\label{mthm1}
Fix  $\delta>0, \alpha_0>d/2, \tau>0$ and $ \gamma>0$. Let
\begin{align*}
D\in DC_\mathfrak{B}(\tau, \gamma),\  T\in \mathcal{M}^{\alpha+4\delta},\ \alpha>0.
\end{align*}
Then for
\begin{align*}
\alpha>\alpha_0+\tau+7\delta,
\end{align*}
there exists some $\epsilon_0=\epsilon_0(\delta,\tau,\gamma,\alpha_0,\alpha)>0$ such that the following holds true.  If $\|T\|_{\alpha+4\delta}\leq \epsilon_0$,  then there exist invertible $Q_+\in \mathcal{M}^{\alpha-\tau-7\delta}$  and some  $D_+\in \mathcal{M}_0^\infty$ such that
\begin{align*}
Q_+^{-1}( T+D+D_+)Q_+=D
\end{align*}
with
\begin{align*}
\|Q_+-I\|_{\alpha-\tau-7\delta}&\leq C\|T\|_{\alpha+4\delta}^{\frac{\delta}{\alpha-\alpha_0}},\\
 \|Q_+^{-1}-I\|_{\alpha-\tau-7\delta}&\leq C\|T\|_{\alpha+4\delta}^{\frac{\delta}{\alpha-\alpha_0}},\\
 \|D_+\|_0&\leq C\|T\|_{\alpha+4\delta}^{\frac{1}{\alpha-\alpha_0}},
\end{align*}
where $C=C(\delta,\tau, \alpha_0, \alpha)>0$. Moreover, if both $T$ and $D$ are real symmetric, then
\begin{align*}
U^{-1}(T+D+D_+)U&=D,\\
\|U-I\|_{\alpha-\tau-7\delta}&\leq C\|T\|_{\alpha+4\delta}^{\frac{\delta}{\alpha-\alpha_0}},
\end{align*}
where $U=Q_+(Q_+^tQ_+)^{-\frac12}$ is a unitary operator with $Q_+^t$ denoting the transpose of $Q_+$.

\end{thm}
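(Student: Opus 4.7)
The plan is to implement a Nash--Moser--H\"ormander iteration whose $k$-th step produces an approximate conjugator $Q_k=(I+W_1)\cdots(I+W_k)$ and a partial diagonal correction $D_+^{(k)}=\sum_{i\le k}\overline{B_i}$ such that
\begin{equation*}
Q_k^{-1}\bigl(S_{\theta_k}T+D+D_+^{(k)}\bigr)Q_k = D + E_k,
\end{equation*}
where $S_{\theta_k}$ is the frequency-truncation smoothing of Definition \ref{smdf} with $\theta_k=\theta_0\kappa^k$ for some fixed $\kappa\in(1,2)$, and $E_k\to 0$ in an appropriate Sobolev norm. One fixes a decreasing schedule of regularities $s_k$ starting at $s_0=\alpha+4\delta$ (the available regularity of $T$) and terminating at $s_\infty=\alpha-\tau-7\delta$ (the target regularity of $Q_+$), with an intermediate floor above $\alpha_0>d/2$ so that the Banach-algebra and tame product estimates from \S3 remain applicable throughout.

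The induction step is built around the linearised (``homological'') equation $[D,W_{k+1}]+(B_k-\overline{B_k})=0$, where $B_k$ is the current off-diagonal residual, namely $E_k$ together with the newly-introduced piece $(S_{\theta_{k+1}}-S_{\theta_k})T$ pulled back to the current frame. Since $D\in DC_{\mathfrak{B}}(\tau,\gamma)$, this equation is solvable diagonal-by-diagonal,
\begin{equation*}
(W_{k+1})_\mathbf{k} = -(p-\sigma_\mathbf{k}p)^{-1}\cdot(B_k)_\mathbf{k},\qquad \mathbf{k}\neq\mathbf{0},
\end{equation*}
at the fixed cost of $\tau$ derivatives, $\|W_{k+1}\|_{s-\tau}\le\gamma^{-1}\|B_k\|_s$, while the diagonal part $\overline{B_k}$ is absorbed into $D_+^{(k+1)}$. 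Substituting $Q_{k+1}=Q_k(I+W_{k+1})$ back produces two qualitatively different residuals. The usual Newton ``square'' error $R_{k+1}''$ comes from the quadratic terms in $W_{k+1}$ and, via the tame product estimate, obeys $\|R_{k+1}''\|_{s_{k+1}}\lesssim \|W_{k+1}\|_{s_{k+1}}^{2}\cdot(\mbox{moderate factor})$, so it decays super-geometrically in $k$. The genuinely new ``substitution'' error $R_{k+1}'=(I-S_{\theta_{k+1}})T$ is controlled only by the smoothing bound $\|(I-S_{\theta})A\|_{s}\lesssim\theta^{-(s'-s)}\|A\|_{s'}$ and therefore decays just geometrically; it is this slower error that dictates the final convergence rate.

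Feeding these estimates into the induction shows that, provided $\|T\|_{\alpha+4\delta}\le\epsilon_0$ is sufficiently small, both $\|W_{k+1}\|_{s_{k+1}}$ and $\|E_{k+1}\|_{s_{k+1}}$ shrink geometrically, so that $Q_+=\prod_{k\ge 1}(I+W_k)$ converges in $\mathcal{M}^{\alpha-\tau-7\delta}$ by tameness; Neumann inversion of $(I+W_k)$ then yields the analogous bound on $Q_+^{-1}-I$. A standard interpolation between the base level $\alpha_0$ and the high level $\alpha+4\delta$ produces the fractional-power estimate $\|Q_+-I\|_{\alpha-\tau-7\delta}\lesssim\|T\|_{\alpha+4\delta}^{\delta/(\alpha-\alpha_0)}$, the exponent being the natural gap between the two scales. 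The bound $\|D_+\|_0\le\|D_+\|_\mathfrak{B}$ supplied by \eqref{infb}, together with $\sum_k\|\overline{B_k}\|_0<\infty$, yields the claimed estimate on $D_+$.

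If $T$ and $D$ are real symmetric then so is $T+D+D_+$, so transposing $Q_+^{-1}(T+D+D_+)Q_+=D$ gives $(Q_+^tQ_+)D=D(Q_+^tQ_+)$. Because the entries of $D$ are pairwise distinct (a consequence of $D\in DC_\mathfrak{B}(\tau,\gamma)$), any operator commuting with $D$ must be diagonal; hence $Q_+^tQ_+$ is a positive diagonal operator close to $I$, and its inverse square root lies in $\mathcal{M}^{\alpha-\tau-7\delta}$. Setting $U=Q_+(Q_+^tQ_+)^{-1/2}$ produces a unitary operator with $U^{-1}(T+D+D_+)U=D$, and the norm estimate on $U-I$ follows from that on $Q_+-I$ via a functional-calculus bound for the square root of an operator close to the identity. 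The main obstacle throughout is the delicate balancing of the regularity budget: the schedule $\theta_k$ must grow fast enough for the substitution error to be overcome, yet slowly enough that the cumulative $\tau$-loss at each homological step does not exhaust the $4\delta$ cushion in finitely many iterations. It is precisely the linear-in-norms tame estimate that averts the geometric pile-up that would otherwise derail the scheme after finitely many steps, as emphasized in the introduction.
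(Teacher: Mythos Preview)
Your overall strategy—Nash--Moser iteration, homological equation solved via the distal condition, tame products, polar decomposition for the unitary—is the paper's. But two technical choices in your sketch would keep the scheme from closing.

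The main gap is the placement of the smoothing. You truncate only $T$ (via $S_{\theta_k}T$) and then solve $[D,W_{k+1}]+(B_k-\overline{B_k})=0$ with the \emph{unsmoothed} residual $B_k$, getting $\|W_{k+1}\|_{s-\tau}\le\gamma^{-1}\|B_k\|_s$. Along a decreasing schedule $s_k\to s_\infty$ this means each step consumes a full $\tau$ of regularity; since the total budget is $s_0-s_\infty=(\alpha+4\delta)-(\alpha-\tau-7\delta)=\tau+11\delta$, you are out after essentially one iteration, and no tame estimate compensates for a fixed per-step loss. The paper instead applies $S_{\theta_{k+1}}$ to the \emph{entire} current residual $G=Q_k^{-1}(T_k+D_k)Q_k+R_k$ \emph{inside} the homological equation (see \eqref{hme2}), so that $W_{k+1}$ is band-limited and can be estimated at \emph{every} level $s\in[\alpha_0,\alpha_1]$ simultaneously; there is no decreasing schedule at all. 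Correspondingly the substitution error is $R_{k+1}'=(I-S_{\theta_{k+1}})G$, not $(I-S_{\theta_{k+1}})T$. The scale factor is also not some $\kappa\in(1,2)$ but a fixed \emph{large} $\Theta\ge\max\{8^{2/\delta}C_0^{4/\delta},10^{1/\delta},10^{1/\alpha_0}\}$; this largeness is exactly what makes Lemmas~\ref{Rk1lem1}--\ref{Rk1lem2} close and yields the bound $\|R_k\|_s\le\theta_k^{s-\alpha}$.

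A second point you skip: because the diagonal correction sits on the \emph{left} of the conjugation, the vanishing-diagonal condition is $\overline{Q_k^{-1}(T_k+D_k)Q_k+R_k}=0$, which is \emph{implicit} in $D_k$ (since $\overline{Q_k^{-1}D_kQ_k}\ne D_k$ when $Q_k\ne I$). The paper solves it by a Banach fixed-point argument (Lemma~\ref{fixlem}) under the a priori smallness $C_0\|Q_k^{\pm1}-I\|_{\alpha_0}\le 1/10$; ``absorb $\overline{B_k}$'' is not enough. Your treatment of the unitary step is correct, and the exponent $\delta/(\alpha-\alpha_0)$ in the paper comes not from interpolation but from the calibration $\|T\|_{\alpha+4\delta}=\theta_0^{\alpha_0-\alpha}$, i.e.\ $\theta_0^{-\delta}=\|T\|_{\alpha+4\delta}^{\delta/(\alpha-\alpha_0)}$.
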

\begin{rem}
We would like to remark that we have obtained a quantitative (probably sharp) lower bound on the regularity of $T$. Namely,
\begin{align}\label{sharpt}
T\in \mathcal{M}^s\  {\rm with}\  s>d/2+\tau.
\end{align}
 In fact, we have that $\delta>0$ can be arbitrary. If $s>d/2+\tau$, we let $\zeta=s-d/2-\tau>0, \ \alpha=d/2+\tau+\zeta/10,\  \delta=\zeta/1000$ and $ \alpha_0=d/2+\zeta/100$. Then it is easy to check that all the conditions of the theorem are satisfied in this case. Indeed, the lower bound $d/2+\tau$ can be explained as follows, i.e.,   $d/2$ gives the tame information, and $\tau$ the loss of derivatives.
\end{rem}

The second main result is
\begin{thm}[]\label{mthm2}
Fix  $\delta>0, \alpha_0>d/2, \tau>0$ and $\gamma>0$. Let
\begin{align*}
T\in \mathcal{M}^{\alpha+4\delta},\ \alpha>0
\end{align*}
 Assume $D+D'\in DC_{\mathfrak{B}}(\tau,\gamma)$ for any $D'\in \mathcal{M}_0^\infty$ with $\|D'\|_0\leq \eta$ ($\eta>0$).
Then for
\begin{align*}
\alpha>\alpha_0+\tau+7\delta,
\end{align*}
 there exists some $\epsilon_0=\epsilon_0(\eta, \delta,\tau,\gamma,\alpha_0,\alpha)>0$ such that the following holds true.  If $\|T\|_{\alpha+4\delta}\leq \epsilon_0$, then there exist  invertible $Q_+\in \mathcal{M}^{\alpha-\tau-7\delta}$  and some  $D_+\in \mathcal{M}_0^\infty$ such that
\begin{align*}
Q_+^{-1}(T+D)Q_+=D+D_+
\end{align*}
with
\begin{align*}
\|Q_+-I\|_{\alpha-\tau-7\delta}&\leq C\|T\|_{\alpha+4\delta}^{\frac{\delta}{\alpha-\alpha_0}},\\
 \|Q_+^{-1}-I\|_{\alpha-\tau-7\delta}&\leq C\|T\|_{\alpha+4\delta}^{\frac{\delta}{\alpha-\alpha_0}},\\
 \|D_+\|_0&\leq C\|T\|_{\alpha+4\delta}^{\frac{1}{\alpha-\alpha_0}},
\end{align*}
where $C=C(\delta,\tau, \alpha_0, \alpha)>0$. Moreover, if both $T$ and $D$ are real symmetric, then
\begin{align*}
U^{-1}(T+D)U&=D+D_+,\\
\|U-I\|_{\alpha-\tau-7\delta}&\leq C\|T\|_{\alpha+4\delta}^{\frac{\delta}{\alpha-\alpha_0}},
\end{align*}
where $U=Q_+(Q_+^tQ_+)^{-\frac12}$ is a unitary operator.
\end{thm}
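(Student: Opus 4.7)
The plan is to apply the abstract Nash-Moser iteration scheme built in Sections 4--5 with only a bookkeeping modification in the treatment of the emerging diagonals. The structural difference from Theorem \ref{mthm1} is that we are solving the \emph{direct} problem: instead of absorbing each newly emerging diagonal correction into an external corrector so as to preserve the prescribed spectrum $D$, we let those diagonal pieces \emph{accumulate} and declare the resulting sum to be the output $D_+$. The stability hypothesis $D+D' \in DC_\mathfrak{B}(\tau,\gamma)$ for every $D' \in \mathcal{M}_0^\infty$ with $\|D'\|_0\le \eta$ is precisely what licenses this accumulation.

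More concretely, at the $k$-th Nash-Moser step I would seek $Q_k = I + P_k$ and a diagonal increment $\widetilde D_k \in \mathcal{M}_0^\infty$ so that after conjugation
\begin{align*}
Q_k^{-1}\bigl(T + D + E_{k-1} + R_{k-1}\bigr) Q_k = D + E_{k-1} + \widetilde D_k + R_k,
\end{align*}
where $E_{k-1} = \sum_{j<k} \widetilde D_j$ is the accumulated diagonal, $R_{k-1}$ is the previous off-diagonal remainder, $\widetilde D_k$ is chosen to be the (smoothed) main-diagonal part of $R_{k-1}$, and $R_k$ collects the unsmoothed off-diagonal tail (the substitution error) together with the quadratic Newton error. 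The homological equation takes the form
\begin{align*}
[D + E_{k-1},\,P_k] = S_{\theta_k}\bigl(R_{k-1} - \overline{R_{k-1}}\bigr),
\end{align*}
and because $D + E_{k-1} \in DC_\mathfrak{B}(\tau,\gamma)$ so long as $\|E_{k-1}\|_0 \le \eta$, the loss-of-derivatives estimate produced in Section 4 gives $\|P_k\|_s$ controlled by $\gamma^{-1}\|S_{\theta_k}(R_{k-1}-\overline{R_{k-1}})\|_{s+\tau}$ with the same constants as in the proof of Theorem \ref{mthm1}.

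With the homological equation solvable at a uniform loss $\tau$ at every step, the rest of the convergence analysis (tame multiplicative estimate, exponentially small substitution error, super-exponentially small quadratic error, and ultimately convergence of $Q_k \to Q_+$ in $\mathcal{M}^{\alpha-\tau-7\delta}$ and of $E_k \to D_+$ in $\mathcal{M}_0^\infty$) is literally the output of the main iteration theorem of Section 5. The three quantitative norm bounds follow immediately from its conclusion. The real symmetric case is then treated by polar decomposition: since $Q_+^tQ_+$ is close to $I$ in $\mathcal{M}^{\alpha-\tau-7\delta}$ by the tame bilinear estimate, $(Q_+^tQ_+)^{-1/2}$ is well defined through its binomial series in $\mathcal{M}^{\alpha-\tau-7\delta}$, and $U = Q_+(Q_+^tQ_+)^{-1/2}$ is the desired orthogonal operator satisfying the stated estimate.

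The main obstacle I anticipate is maintaining the invariant $\|E_k\|_0 \le \eta$ throughout the entire iteration, since this is exactly what freezes the Diophantine constant $\gamma$ from one step to the next and keeps the single-step loss of derivatives at a fixed $\tau$. Because $\|\widetilde D_k\|_0$ is controlled by the $0$-norm of $R_{k-1}$ and the $R_k$ sequence decays geometrically in the Nash-Moser step index (dictated by the substitution error), summing gives $\|E_\infty\|_0 \le C\|T\|_{\alpha+4\delta}^{1/(\alpha-\alpha_0)}$; choosing $\epsilon_0 = \epsilon_0(\eta,\delta,\tau,\gamma,\alpha_0,\alpha)$ small enough that this is bounded by $\eta$ both secures the invariant and yields the claimed bound on $\|D_+\|_0$, closing the proof.
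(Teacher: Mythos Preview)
Your proposal is correct and follows essentially the same route as the paper: accumulate the emerging diagonal corrections on the right-hand side, use the stability hypothesis $D+D'\in DC_\mathfrak{B}(\tau,\gamma)$ for $\|D'\|_0\le\eta$ to keep the homological equation $[D+E_{k-1},W_{k+1}]+\cdots=0$ solvable with a fixed loss $\tau$ at every step, and then invoke the iteration machinery of Sections~4--5 verbatim; the paper likewise reduces to this single modification and omits the remaining details. One minor bookkeeping point: in the paper the cumulative conjugation still acts on the progressively smoothed operator $\sum_{l=1}^k T_{l-1}+D$ (not on the full $T+D+E_{k-1}+R_{k-1}$ as your display suggests), so your $k$-th step equation should read $Q_k^{-1}\bigl(\sum_{l=1}^k T_{l-1}+D\bigr)Q_k=D+E_k+R_k$ with $Q_k$ the cumulative transformation---but this does not affect the argument.
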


\begin{rem}
This theorem shows that if the non-resonant condition is stable under small perturbations, then we can treat the direct problem and obtain the reducibility  of the original operator $T +D$.
\end{rem}

\subsection{Power-law localization}

Now we can apply the above theorems to  obtain the (inverse) \textit{power-law} localization.

Consider the \textit{polynomial} long-range hopping operator 
\begin{align}
\label{tphi}(T_\phi u)_\mathbf{i}&=\sum_{\mathbf{j}\in\Z^d}{\phi_{\mathbf{i}-\mathbf{j}}u_\mathbf{j}},\ \phi=(\phi_\mathbf{i})_{\mathbf{i}\in\Z^d}\in \R^{\Z^d},\\
\label{phi}|\phi_\mathbf{i}|&\leq |\mathbf{i}|^{-s}\ {\rm for}\ \mathbf{i}\in\Z^d\setminus \{\mathbf{0}\},\  \phi_\mathbf{0}=0.
\end{align}
Obviously, $T_\phi\in \mathcal{M}^{s'}$ for $0\leq s'<s-d/2$. 

The first application is
\begin{cor}[]\label{mthm3}
Fix $\alpha_0>d/2, \tau>0, \gamma>0$ and $\delta>0$. Assume that $T_\phi$ is given by \eqref{tphi} and \eqref{phi} with
\begin{align}\label{slbd}
s>\alpha_0+\tau+d/2+12\delta.
\end{align}
Let $D={\rm diag}_{\mathbf{i}\in\Z^d}(d_\mathbf{i})\in DC_\mathfrak{B}(\tau, \gamma)$.
Then there is $\epsilon_0=\epsilon_0(s, d, \delta, \tau, \gamma, \alpha_0)>0$ so that for $\|T_\phi\|_{{s-d/2-\delta}}\leq \epsilon_0$,  the following holds true.  There exists some $D'={\rm diag}_{\mathbf{i}\in\Z^d}(d_\mathbf{i}')$ with
\begin{align*}
D'-D\in \mathcal{M}_0^\infty,\ \|D-D'\|_0\leq C\| T_\phi\|_{{s-d/2-\delta}}^{\frac{1}{s-d/2-5\delta-\alpha_0}}
\end{align*}
such that
\begin{align*}
H'=T_\phi+D'
\end{align*}
has pure point spectrum with a complete set of polynomially localized eigenfunctions $\{e_\mathbf{k}\}_{\mathbf{k}\in\Z^d}$ satisfying
\begin{align*}
|(e_\mathbf{k})_\mathbf{i}|\leq 2\langle \mathbf{i}-\mathbf{k}\rangle^{-s+\tau+d/2+12\delta}\ {\rm for}\ \mathbf{i},  \mathbf{k}\in\Z^d.
\end{align*}
Moreover, if $(d_\mathbf{i})_{\mathbf{i}\in\Z^d}$ is a real-valued sequence, then the spectrum of $H'$ is equal to  the closure of the  sequence $(d_\mathbf{i})_{\mathbf{i}\in\Z^d}$ in $\R$.
\end{cor}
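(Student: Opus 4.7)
The plan is to derive this corollary as a direct application of Theorem \ref{mthm1}, reading off the eigenfunctions from the conjugating operator and translating the Sobolev bound on $Q_+ - I$ into pointwise polynomial decay. I would first calibrate the parameters so that the regularity threshold of Theorem \ref{mthm1} coincides with the hypothesis \eqref{slbd}. Set $\alpha = s - d/2 - 5\delta$; then $\alpha + 4\delta = s - d/2 - \delta$ and $\alpha - \tau - 7\delta = s - d/2 - \tau - 12\delta$, and the requirement $\alpha > \alpha_0 + \tau + 7\delta$ is exactly \eqref{slbd}. By the observation following \eqref{phi}, $T_\phi \in \mathcal{M}^{\alpha+4\delta}$, so after shrinking $\epsilon_0$ if necessary to meet the smallness threshold supplied by Theorem \ref{mthm1}, that theorem produces an invertible $Q_+ \in \mathcal{M}^{s-d/2-\tau-12\delta}$ and a diagonal $D_+ \in \mathcal{M}_0^\infty$ with
\begin{equation*}
Q_+^{-1}(T_\phi + D + D_+) Q_+ = D,
\end{equation*}
and with the quantitative bounds on $\|Q_+^{\pm 1} - I\|_{s-d/2-\tau-12\delta}$ and $\|D_+\|_0$. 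Setting $D' := D + D_+$ makes $D' - D = D_+ \in \mathcal{M}_0^\infty$ and immediately gives the bound on $\|D - D'\|_0$; the identity becomes $H' Q_+ = Q_+ D$ with $H' = T_\phi + D'$.

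Next I would read off the eigenfunctions by defining $e_\mathbf{k} := Q_+ \delta_\mathbf{k}$ for each $\mathbf{k} \in \Z^d$. The conjugation identity yields $H' e_\mathbf{k} = Q_+ D \delta_\mathbf{k} = d_\mathbf{k} e_\mathbf{k}$, so each $e_\mathbf{k}$ is an eigenfunction with eigenvalue $d_\mathbf{k}$; since $Q_+$ is a bounded invertible operator on $\ell^2(\Z^d)$ and $\{\delta_\mathbf{k}\}$ is complete, so is $\{e_\mathbf{k}\}$. To obtain the pointwise decay, I unpack the Sobolev norm \eqref{sob}: for every $\mathbf{n} \in \Z^d$,
\begin{equation*}
\|(Q_+ - I)_\mathbf{n}\|_\mathfrak{B}^2 \langle \mathbf{n}\rangle^{2(s-d/2-\tau-12\delta)} \leq \|Q_+ - I\|_{s-d/2-\tau-12\delta}^2,
\end{equation*}
while \eqref{infb} gives $|(Q_+ - I)_{\mathbf{i},\mathbf{i}-\mathbf{n}}| \leq \|(Q_+-I)_\mathbf{n}\|_\mathfrak{B}$. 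Taking $\mathbf{n} = \mathbf{i}-\mathbf{k}$ and combining with the entrywise bound on the identity yields
\begin{equation*}
|(e_\mathbf{k})_\mathbf{i}| \leq (1 + \|Q_+ - I\|_{s-d/2-\tau-12\delta}) \langle \mathbf{i}-\mathbf{k}\rangle^{-s+\tau+d/2+12\delta},
\end{equation*}
and shrinking $\epsilon_0$ so that $C\|T_\phi\|_{s-d/2-\delta}^{\delta/(s-d/2-5\delta-\alpha_0)} \leq 1$ gives the constant $2$ in the statement.

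For the final claim, when $(d_\mathbf{i})_{\mathbf{i}\in\Z^d}$ is real valued the operator $D$ is real symmetric, and under the standing symmetry of the hopping kernel ($\phi_\mathbf{i} = \phi_{-\mathbf{i}}$, which is the Hermiticity of $T_\phi$) so is $T_\phi$. The ``moreover'' clause of Theorem \ref{mthm1} then furnishes a unitary $U$ with $U^{-1}(T_\phi + D + D_+)U = D$, i.e. $U^{-1} H' U = D$, so $H'$ is self-adjoint and unitarily equivalent to the diagonal operator $D$, whose spectrum is $\overline{\{d_\mathbf{i} : \mathbf{i} \in \Z^d\}} \subset \R$. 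I do not anticipate any serious obstacle here; all the substantive analytic work is concentrated in Theorem \ref{mthm1}. The only step deserving mild care is the transition from the $\ell^2$-type Sobolev bound on the $\mathbf{n}$-diagonals to entrywise decay of $(e_\mathbf{k})_\mathbf{i}$, which is precisely what the translation invariance of $\mathfrak{B}$ and the embedding \eqref{infb} are designed to deliver.
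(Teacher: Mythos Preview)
Your proof is correct and follows essentially the same route as the paper: the same calibration $\alpha=s-d/2-5\delta$, the same invocation of Theorem~\ref{mthm1}, the same definition $e_\mathbf{k}=Q_+\delta_\mathbf{k}$, and the same passage from the Sobolev bound on $Q_+-I$ to entrywise decay via \eqref{infb}. The only notable difference is in the completeness step: the paper argues by contradiction (if $\varphi\perp e_\mathbf{k}$ for all $\mathbf{k}$, the decay of $Q_+-I$ forces $\varphi_\mathbf{k}=o(1)\varphi_\mathbf{k}$ at a maximizing index), whereas you invoke bounded invertibility of $Q_+$ on $\ell^2(\Z^d)$ --- which is legitimate since $Q_+,Q_+^{-1}\in\mathcal{M}^{s'}$ with $s'>d/2$ implies $\ell^2$-boundedness by a Schur-test argument, but you should say a word to that effect rather than assert it.
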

\begin{rem}
Recalling  \eqref{sharpt} and \eqref{phi}, we also have a quantitative lower bound on the regularity of $\phi$, i.e.,
\begin{align*}
|\phi_\mathbf{i}|\leq |\mathbf{i}|^{-s}\ {\rm with}\ {s}>d+\tau.
\end{align*}
This is interesting if one is familiar with the random power-law localization. Consider now $
H_\omega=\varepsilon T_\phi+d_\mathbf{i}(\omega)\delta_{\mathbf{i}\mathbf{i}'},$
where $(d_\mathbf{i}(\omega))_{\mathbf{i}\in\Z^d}$ is a sequence of \textit{i.i.d.} random variables having uniform distribution  on $[0,1]$. Then by developing the remarkable fractional moment method, Aizenman-Molchanov   \cite{AM93} showed  that  for
\begin{align*}
|\phi_\mathbf{i}|\leq |\mathbf{i}|^{-s}\ {\rm with}\ {s}>d,
\end{align*}
 $H_\omega$ exhibits power-law localization for small $\varepsilon$ and a.e. $\omega$.
\end{rem}

\begin{proof}
We apply Theorem \ref{mthm1} with
\begin{align*}
\alpha=s-d/2-5\delta.
\end{align*}
Since \eqref{slbd} and \eqref{phi}, we have
\begin{align*}
&T_\phi\in \mathcal{M}^{s-d/2-\delta}=\mathcal{M}^{\alpha+4\delta},\\
&\alpha=s-d/2-5\delta>\alpha_0+\tau+7\delta.
\end{align*}
Hence using Theorem \ref{mthm1} implies that if $\|T_\phi\|_{s-d/2-\delta}\ll1,$ there are $Q_+\in \mathcal{M},\  D_+\in \mathcal{M}_0^\infty$ so that
\begin{align}
\nonumber&Q_+^{-1}(T_\phi+D+D_+)Q_{+}=D
\end{align}
and
\begin{align}
\label{thm3-1}&\|D_+\|_0\leq  C\|T_\phi\|_{s-d/2-\delta}^{\frac{1}{s-d/2-5\delta-\alpha_0}},\\
\label{thm3-2}&\|Q_+-I\|_{s-d/2-\tau-12\delta}\leq C\| T_\phi\|_{s-d/2-\delta}^{\frac{\delta}{s-d/2-5\delta-\alpha_0}}.
\end{align}
We let $D'=D+D_+$. Then the estimate on $D'-D$ follows from  \eqref{thm3-1}.

Next, note that $D$ is  a diagonal operator. Then the standard basis $\{\delta_\mathbf{k}\}_{\mathbf{k}\in\Z^d}$ of $\ell^2(\Z^d)$  is a complete set of eigenfunctions of $D$ with eigenvalues $\{d_\mathbf{k}\}_{\mathbf{k}\in\Z^d}$.
Then
\begin{align*}
H'Q_+\delta_\mathbf{k}=Q_+D\delta_\mathbf{k}=d_\mathbf{k}Q_{+}\delta_\mathbf{k},
\end{align*}
which implies $\{e_\mathbf{k}\}_{\mathbf{k}\in\Z^d}$ $( e_\mathbf{k}=Q_+\delta_\mathbf{k})$ may be  a \textit{complete set of eigenfunctions} of $H'$. Indeed, we obtain since  \eqref{thm3-2} that
\begin{align*}
\|Q_+\|_{s-d/2-\tau-12\delta}\leq  2 \ {\rm if}\  \|T_\phi\|_{s-d/2-\delta}\ll1,
\end{align*}
which together with \eqref{infb} implies
\begin{align*}
|(Q_+\delta_\mathbf{k})_\mathbf{i}|&=|(Q_+)_{\mathbf{i},\mathbf{k}}|=|(Q_+)_{\mathbf{i}-\mathbf{k}}(\mathbf{i})|\leq \|(Q_+)_{\mathbf{i}-\mathbf{k}}\|_\infty\\
&\leq \|(Q_+)_{\mathbf{i}-\mathbf{k}}\|_\mathfrak{B}\leq \|Q_+\|_{s-d/2-\tau-12\delta}\langle\mathbf{ i}-\mathbf{k}\rangle^{-s+d/2+\tau+12\delta}\\
&\leq 2\langle \mathbf{i}-\mathbf{k}\rangle^{-s+d/2+\tau+12\delta},
\end{align*}
where $(Q_{+})_{\mathbf{i}-\mathbf{k}}$ denotes the $\mathbf{i}-\mathbf{k}$ diagonal of $Q_+.$ Since
$$s-d/2-\tau-12\delta>\alpha_0>d/2,$$
we have $e_\mathbf{k}\in\ell^2(\Z^d)$. This shows that $\{e_\mathbf{k}\}_{\mathbf{k}\in\Z^d}$ are eigenfunctions of $H'$. We then show the \textit{completeness}. Denote by $(\cdot, \cdot)$ the standard inner product on $\ell^2(\Z^d).$ Let $\varphi\in\ell^2(\Z^d)$ and suppose for all $\mathbf{k}\in\Z^d$, $(\varphi, e_\mathbf{k})=0$. It suffices to show $\varphi=0$. Assume $\varphi\neq 0.$ Choose a $\mathbf{k}\in\Z^d$ so that
$|\varphi_\mathbf{k'}|\leq |\varphi_\mathbf{k}|$ for all $\mathbf{k'}\in\Z^d.$  Then $\varphi_\mathbf{k}\neq0$
However,  we have
\begin{align*}
0=(\varphi, e_\mathbf{k})=\varphi_\mathbf{k}+(\varphi, (Q_+-I)\delta_\mathbf{k}),
\end{align*}
which together with \eqref{thm3-2} implies
\begin{align*}
|\varphi_\mathbf{k}|=|(\varphi, (Q_+-I)\delta_\mathbf{k})|
&\leq |\varphi_\mathbf{k}|\sum_{\mathbf{j}\in\Z^d}|(Q_+-I)_{\mathbf{j},\mathbf{k}}|\\
&\leq |\varphi_\mathbf{k}|\sum_{\mathbf{j}\in\Z^d}\|(Q_+-I)_{\mathbf{j}-\mathbf{k}}\|_\mathfrak{B}\\
&\leq \|Q_+-I\|_{\alpha_0}|\varphi_\mathbf{k}|\sum_{\mathbf{j}\in\Z^d}{\langle \mathbf{j}-\mathbf{k}\rangle}^{-\alpha_0}\\
 &\leq C \|T_\phi\|_{s-d/2-\delta}^{\frac{\delta}{s-d/2-5\delta-\alpha_0}}|\varphi_\mathbf{k}|=o(1)|\varphi_\mathbf{k}|.
\end{align*}
This contradicts $\varphi_\mathbf{k}\neq 0.$ We have proven the \textit{completeness}.

Finally, if $(d_\mathbf{i})_{\mathbf{i}\in\Z^d}\in\R^{\Z^d}$,  then $Q_+$ can be replaced with a unitary operator. Consequently, the spectrum of $H'$ is equal to that of $D.$

This proves Corollary \ref{mthm3}.
\end{proof}

\begin{cor}[]\label{mthm4}
Fix $\alpha_0>d/2, \tau>0, \gamma>0$ and $\delta>0$. Assume that $T_\phi$ is given by \eqref{tphi} and \eqref{phi} with
\begin{align*}
s>\alpha_0+\tau+d/2+12\delta.
\end{align*}
Let $D={\rm diag}_{\mathbf{i}\in\Z^d}(d_\mathbf{i})\in DC_\mathfrak{B}(\tau, \gamma)$.   Assume further $(p_\mathbf{i}+d_\mathbf{i})_{\mathbf{i}\in\Z^d}\in DC_\mathfrak{B}(\tau, \gamma)$  for each $ P={\rm diag}_{\mathbf{i}\in\Z^d}(p_\mathbf{i})\in \mathcal{M}_0^\infty$ satisfying $\|P\|_0\leq \eta,\ \eta>0.$
Then there is $\epsilon_0=\epsilon_0(\eta, s, d, \delta, \tau, \gamma, \alpha_0)>0$ so that for $\|T_\phi\|_{{s-d/2-\delta}}\leq \epsilon_0$,  the following holds true.  There exists some $D'={\rm diag}_{\mathbf{i}\in\Z^d}(d_\mathbf{i}')$ with
\begin{align*}
D'-D\in \mathcal{M}_0^\infty,\ \|D-D'\|_0\leq C\|T_\phi\|_{{s-d/2-\delta}}^{\frac{1}{s-d/2-5\delta-\alpha_0}}
\end{align*}
such that
\begin{align*}
H=T_\phi+D
\end{align*}
has pure point spectrum with a complete set of polynomially localized eigenfunctions $\{e_\mathbf{k}\}_{\mathbf{k}\in\Z^d}$ satisfying
\begin{align*}
|(e_\mathbf{k})_\mathbf{i}|\leq 2\langle \mathbf{i}-\mathbf{k}\rangle^{-s+\tau+d/2+12\delta}\ {\rm for}\ \mathbf{i}, \mathbf{k}\in\Z^d.
\end{align*}
Moreover, if $(d_\mathbf{i})_{\mathbf{i}\in\Z^d}$ is a real-valued sequence, then the spectrum of $H$ is equal to the closure of the  sequence $(d_\mathbf{i}')_{\mathbf{i}\in\Z^d}$ in $\R$.
\end{cor}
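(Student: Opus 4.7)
The proof should closely parallel the argument for Corollary \ref{mthm3}, except that we invoke Theorem \ref{mthm2} instead of Theorem \ref{mthm1}. The key structural difference is that here the non-resonance stability hypothesis is built into the Diophantine assumption on $D$, so we conjugate the original operator $H = T_\phi + D$ directly (rather than $T_\phi + D + D_+$), ending up with a diagonal operator $D + D_+$ whose entries are perturbed eigenvalues.

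The plan is to set $\alpha = s - d/2 - 5\delta$ and verify the hypotheses of Theorem \ref{mthm2}: by \eqref{phi} we have $T_\phi \in \mathcal{M}^{s-d/2-\delta} = \mathcal{M}^{\alpha+4\delta}$, and the assumption $s > \alpha_0 + \tau + d/2 + 12\delta$ gives $\alpha > \alpha_0 + \tau + 7\delta$; the stability of $D \in DC_\mathfrak{B}(\tau,\gamma)$ under perturbations of size $\leq \eta$ in $\|\cdot\|_0$ is exactly the standing hypothesis of the corollary. Then for $\|T_\phi\|_{s-d/2-\delta} \leq \epsilon_0$ sufficiently small, Theorem \ref{mthm2} produces invertible $Q_+\in\mathcal{M}^{\alpha-\tau-7\delta}$ and $D_+\in\mathcal{M}_0^\infty$ with
\begin{align*}
Q_+^{-1}(T_\phi + D)Q_+ = D + D_+,
\end{align*}
together with the quantitative bounds
\begin{align*}
&\|D_+\|_0 \leq C\|T_\phi\|_{s-d/2-\delta}^{\frac{1}{s-d/2-5\delta-\alpha_0}},\\
&\|Q_+ - I\|_{s-d/2-\tau-12\delta} \leq C\|T_\phi\|_{s-d/2-\delta}^{\frac{\delta}{s-d/2-5\delta-\alpha_0}}.
\end{align*}

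Setting $D' = D + D_+ = \mathrm{diag}_{\mathbf{i}\in\Z^d}(d_\mathbf{i}')$ immediately yields the estimate on $D' - D$. The eigenfunction construction is then identical to Corollary \ref{mthm3}: since $\{\delta_\mathbf{k}\}$ diagonalizes $D + D_+$ with eigenvalues $d_\mathbf{k}'$, letting $e_\mathbf{k} = Q_+\delta_\mathbf{k}$ gives $H e_\mathbf{k} = Q_+(D+D_+)\delta_\mathbf{k} = d_\mathbf{k}' e_\mathbf{k}$. The pointwise polynomial decay of $e_\mathbf{k}$ comes from $|(e_\mathbf{k})_\mathbf{i}| = |(Q_+)_{\mathbf{i}-\mathbf{k}}(\mathbf{i})| \leq \|(Q_+)_{\mathbf{i}-\mathbf{k}}\|_\mathfrak{B}$ combined with \eqref{infb}, \eqref{sob} and $\|Q_+\|_{s-d/2-\tau-12\delta} \leq 2$.

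Completeness of $\{e_\mathbf{k}\}$ follows by the same contradiction argument as in Corollary \ref{mthm3}: if $\varphi\in\ell^2(\Z^d)\setminus\{0\}$ satisfies $(\varphi,e_\mathbf{k})=0$ for all $\mathbf{k}$, one picks $\mathbf{k}$ maximizing $|\varphi_\mathbf{k}|$ and uses the smallness of $\|Q_+ - I\|_{\alpha_0}$ (with $\alpha_0 > d/2$ to ensure summability of $\langle\mathbf{j}-\mathbf{k}\rangle^{-\alpha_0}$) to derive $|\varphi_\mathbf{k}| = o(1)|\varphi_\mathbf{k}|$. Finally, when $(d_\mathbf{i})_{\mathbf{i}\in\Z^d}$ is real-valued, both $T_\phi$ and $D$ are real symmetric, so Theorem \ref{mthm2} supplies a unitary $U$ conjugating $H$ to $D + D_+$, whence $\mathrm{spec}(H) = \mathrm{spec}(D') = \overline{\{d_\mathbf{i}'\}}$. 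I do not anticipate a genuine obstacle here: the only nontrivial point is bookkeeping to confirm that the stability hypothesis on $D$ is precisely what Theorem \ref{mthm2} requires for the direct (unshifted) problem, in contrast to the inverse-spectral framing of Corollary \ref{mthm3}.
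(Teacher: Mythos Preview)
Your proposal is correct and follows exactly the approach the paper indicates: the paper's own proof simply says ``Recalling Theorem \ref{mthm2}, the proof is similar to that of Corollary \ref{mthm3}. We omit the details here.'' You have faithfully spelled out those details---choosing $\alpha=s-d/2-5\delta$, invoking Theorem \ref{mthm2} in place of Theorem \ref{mthm1}, setting $D'=D+D_+$, and then repeating verbatim the eigenfunction decay and completeness arguments from Corollary \ref{mthm3}.
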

\begin{proof}
Recalling Theorem \ref{mthm2}, the proof is similar to that of Corollary \ref{mthm3}. We omit the details here.
\end{proof}
\subsection{Application to almost-periodic operators}
In this section we apply our results to some concrete examples of  almost-periodic operators and prove corresponding (inverse) power-law localization.
\begin{ex}
 We  first revisit P\"oschel's \cite{Pos83} limit-periodic potentials. Let $\mathcal{P}$ denote the set of all $(a_\mathbf{i})_{\mathbf{i}\in\Z^d}$ with period $2^n, n\geq0,$ i.e., $a_\mathbf{i}=a_\mathbf{j}\ {\rm if}\ \mathbf{i}-\mathbf{j}\in 2^n\Z^d$.
Define $\mathfrak{L}$ to be the closure of $\mathcal{P}$ in the $\|\cdot\|_\infty$ norm. Then $\mathfrak{L}$ is a translation invariant Banach algebra. In the following we generalize P\"oschel's results to the polynomial long-range hopping case: There exists some polynomial long-range hopping limit-periodic operator of which the spectrum is pure point and can be either $[0,1]$, or the standard Cantor set. Precisely, we have
\begin{cor}[]
Fix $\alpha_0>d/2$ and $\delta>0$. Assume that $T_\phi$ is given by \eqref{tphi} and \eqref{phi} with
\begin{align*}
s>\alpha_0+(\log_23+1/2)d+12\delta.
\end{align*}
Then there is $\epsilon_0=\epsilon_0(s, d, \delta,  \alpha_0)>0$ so that for $\|T_\phi\|_{{s-d/2-\delta}}\leq \epsilon_0$,  the following holds true.  There exists some $d'\in \mathfrak{L}$
such that
\begin{align*}
H'=T_\phi+d'_\mathbf{i}\delta_{\mathbf{ii'}}
\end{align*}
has pure point spectrum with a complete set of polynomially localized eigenfunctions $\{e_\mathbf{k}\}_{\mathbf{k}\in\Z^d}$ satisfying
\begin{align*}
|(e_\mathbf{k})_\mathbf{i}|\leq 2\langle \mathbf{i}-\mathbf{k}\rangle^{-s+(\log_23+1/2)d+12\delta}\ {\rm for}\ \mathbf{i}, \mathbf{k}\in\Z^d.
\end{align*}
Moreover,  the spectrum  of $H'$ can be either  $[0,1]$, or be a Cantor set of zero Lebesgue measure.
\end{cor}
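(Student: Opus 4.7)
The strategy is to apply Corollary \ref{mthm3} with Banach algebra $\mathfrak{B}=\mathfrak{L}$, Diophantine exponent $\tau=d\log_2 3$, and a carefully chosen limit-periodic diagonal $d=(d_\mathbf{i})_{\mathbf{i}\in\Z^d}\in\mathfrak{L}\cap DC_\mathfrak{L}(\tau,\gamma)$ for some $\gamma>0$. With this choice the regularity threshold $s>\alpha_0+\tau+d/2+12\delta$ of that corollary becomes exactly the hypothesis $s>\alpha_0+(\log_2 3+1/2)d+12\delta$ assumed here, and the localization exponent $-s+\tau+d/2+12\delta$ produced there matches the stated decay. Since $\mathcal{M}_0^\infty$ is built from the same Banach algebra $\mathfrak{L}$, the correction $D_+$ supplied by the corollary has its diagonal in $\mathfrak{L}$, so that $d':=d+d_+\in\mathfrak{L}$ and $H'=T_\phi+d_\mathbf{i}'\delta_{\mathbf{i}\mathbf{i}'}$ is a legitimate limit-periodic operator.

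The required $d$ is built inductively following P\"oschel, as $d=\sum_{n\geq 0}p^{(n)}$ where $p^{(n)}$ has period $2^n$, chosen at each stage so as to enforce the pointwise lower bound $|d_\mathbf{i}-d_{\mathbf{i}-\mathbf{k}}|\geq\gamma|\mathbf{k}|^{-d\log_2 3}$ for all $\mathbf{k}$ with $|\mathbf{k}|\asymp 2^n$. The exponent $d\log_2 3$ reflects the balance between the $2^{nd}$ residue classes in $(\Z/2^n\Z)^d$ that must be simultaneously separated at stage $n$ and the ternary freedom of the Cantor-like perturbation used to open gaps between them. By construction $d$ is a uniform limit of $2^n$-periodic sequences, hence $d\in\mathfrak{L}$; the reciprocal $(d-\sigma_\mathbf{k} d)^{-1}$ is pointwise bounded by $\gamma^{-1}|\mathbf{k}|^\tau$ and also a uniform limit of periodic sequences, so $d\in DC_\mathfrak{L}(\tau,\gamma)$.

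The spectral type is then controlled via the unitary part of Corollary \ref{mthm3}: since $d$ is real-valued, $\sigma(H')$ equals the closure in $\R$ of $(d_\mathbf{i})_{\mathbf{i}\in\Z^d}$. Taking the initial blocks $p^{(n)}$ with values filling $[0,1]$ densely (while respecting the distal bound at each scale) yields $\sigma(H')=[0,1]$; instead performing a middle-thirds construction that confines the values of $d$ to a Cantor set of zero Lebesgue measure produces the second alternative. Both choices appear already in P\"oschel's exponential treatment and transplant without change to the Sobolev algebra $\mathcal{M}^s$ over $\mathfrak{L}$, since they depend only on the combinatorics of the periodic approximants and on the translation invariance of $\mathfrak{L}$, both of which are inherited by $\mathcal{M}^s$.

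The main obstacle is purely at the level of the limit-periodic construction: one must simultaneously achieve the sharp distal exponent $\tau=d\log_2 3$, summability of the perturbations $p^{(n)}$ in $\|\cdot\|_\infty$ so that later stages do not erase gaps opened earlier, and the prescribed spectral topology. The combinatorics become delicate in dimension $d\geq 2$ because the translations $\mathbf{k}$ with $|\mathbf{k}|\asymp 2^n$ come with arbitrary directions, which is exactly what produces the factor $d$ in $d\log_2 3$. Once $d$ has been constructed, the rest of the proof is a direct invocation of Corollary \ref{mthm3}, which already supplies pure point spectrum, completeness, and the polynomial decay $\langle\mathbf{i}-\mathbf{k}\rangle^{-s+(\log_2 3+1/2)d+12\delta}$ of the eigenfunctions.
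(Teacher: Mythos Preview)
Your proposal is correct and follows essentially the same route as the paper: both reduce the corollary to an application of Corollary~\ref{mthm3} with $\mathfrak{B}=\mathfrak{L}$ and $\tau=d\log_2 3$, after invoking P\"oschel's construction of explicit limit-periodic $(\tau,\gamma)$-distal sequences whose values are dense either in $[0,1]$ or in the standard Cantor set. The only difference is presentational: the paper quotes P\"oschel's explicit formulas $d_\mathbf{i}''=\sum_{v,u}\chi_{A_v}(i_u)2^{-(v-1)d-u}$ and $d_\mathbf{i}'''=2\sum_{v,u}\chi_{A_v}(i_u)3^{-(v-1)d-u}$ (the first being $(d,16^{-d})$-distal and hence a fortiori $(d\log_2 3,16^{-d})$-distal, the second $(d\log_2 3,3^{-d})$-distal), whereas you describe the same construction schematically as a sum of dyadic-periodic blocks.
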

\begin{rem}
We refer to \cite{Avi09,DG10,DG11,DG13} for more results on Schr\"odinger operators with limit-periodic potentials.
\end{rem}

\begin{proof}
It needs to construct limit-periodic sequences and then apply Corollary \ref{mthm3}.

Such sequences were introduced by P\"oschel \cite{Pos83}.
First, for any $v>0$, define characteristic function $\chi_{A_v}:\ \Z\to\{0,1\},$ where $A_v=\bigcup_{n\in\Z}[n2^v, n2^v+2^{v-1})$ for even $v$, and $A_v=\bigcup_{n\in\Z}[n2^v+2^{v-1}, n2^v+2^{v})$ for odd $v$. Then the desired sequences $d''=(d''_\mathbf{i})_{\mathbf{i}\in\Z^d}$ and $d'''=(d'''_\mathbf{i})_{\mathbf{i}\in\Z^d}$  are defined by
\begin{align*}
d_\mathbf{i}''=\sum_{v=1}^\infty\sum_{u=1}^d\chi_{A_v}({i}_u)2^{-(v-1)d-u}, \ \mathbf{i}=(i_1,\cdots, i_d)\in\Z^d,\\
d_\mathbf{i}'''=2\sum_{v=1}^\infty\sum_{u=1}^d\chi_{A_v}(i_u)3^{-(v-1)d-u}, \ \mathbf{i}=(i_1,\cdots, i_d)\in\Z^d.
\end{align*}
P\"oschel (see \textbf{Example} \textbf{1} and \textbf{2} in \cite{Pos83}) has proven that $d''\in \mathfrak{L}$ is a $(d, 16^{-d})$-distal sequence for $\mathfrak{L}$, and $d'''$ is a $(d\log_23, 3^{-d})$-distal one. Moreover, $\{d_\mathbf{i}''\}$ is dense in $[0,1]$, and $\{d_\mathbf{i}'''\}$ is dense in the standard Cantor set of real numbers in $[0,1].$

Finally, it suffices to apply Corollary \ref{mthm3} with $\tau=d\log_23, \gamma=16^{-d}$ and $D={\rm diag}_{\mathbf{i}\in\Z^d}(d_\mathbf{i}'')$ or $D={\rm diag}_{\mathbf{i}\in\Z^d}(d_\mathbf{i}''')$.
\end{proof}

\end{ex}

In the following we apply Corollary \ref{mthm3} and \ref{mthm4} to the quasi-periodic operators case. For $x\in\R, \vec{\omega}=(\omega_1,\cdots, \omega_d)\in\R^d$ and $ \mathbf{i}\in\Z^d$,  define
\begin{align*}
\|x\|_{\R/\Z}=\inf_{k\in\Z}|x-k|,\ \mathbf{i}\cdot\vec\omega=\sum_{v=1}^di_v\omega_v.
\end{align*}

We have

\begin{cor}[]\label{mthm5}
Fix $\alpha_0>d/2, \tau>0, \gamma>0$ and $\delta>0$. Assume that $T_\phi$ is given by \eqref{tphi} and \eqref{phi} with
\begin{align*}
s>\alpha_0+\tau+d/2+12\delta.
\end{align*}
Let $(\mathfrak{F}, \|\cdot\|_{\mathfrak{F}})$ be a translation invariant Banach algebra of functions defined on $\C$, and let $\vec\omega\in\R^d.$ If $f$ is an arbitrary function satisfying
\begin{align}\label{DC}
(f-\sigma_{\mathbf{i}\cdot\vec\omega}f)^{-1}\in \mathfrak{F}\ {\rm and}\  \|(f-\sigma_{\mathbf{i}\cdot\vec\omega}f)^{-1}\|_\mathfrak{F}\leq {\gamma^{-1}}{|\mathbf{i}|^\tau}\ {\rm for}\ \forall\ \mathbf{i}\in\Z^d\setminus\{\mathbf{0}\}
\end{align}
with $(\sigma_{\mathbf{i}\cdot\vec\omega}f)(x)=f(x-\mathbf{i}\cdot\vec\omega),$ then there is some $\epsilon_0=\epsilon_0(s, d, \delta, \tau, \gamma, \alpha_0)>0$ so that for $\|T_\phi\|_{{s-d/2-\delta}}\leq \epsilon_0$,  the following holds true.  There exists a function $f'$ with $\|f'-f\|_\mathfrak{F}\leq C\|T_\phi\|_{{s-d/2-\delta}}^{\frac{1}{s-d/2-5\delta-\alpha_0}}$
such that
\begin{align*}
H'=T_\phi+f'(\mathbf{i}\cdot\vec\omega)\delta_{\mathbf{ii}'}
\end{align*}
has pure point spectrum with a complete set of polynomially localized eigenfunctions $\{e_\mathbf{k}\}_{\mathbf{k}\in\Z^d}$ satisfying
\begin{align*}
|(e_\mathbf{k})_\mathbf{i}|\leq 2\langle \mathbf{i}-\mathbf{k}\rangle^{-s+\tau+d/2+12\delta}\ {\rm for}\ \mathbf{i}, \mathbf{k}\in\Z^d.
\end{align*}
Moreover, if $f$ is a real-valued function,  then  spectrum of $H'$ is equal to the closure of the  sequence $(f(\mathbf{i}\cdot\vec\omega))_{\mathbf{i}\in\Z^d}$ in $\R$.

If the condition \eqref{DC} holds for all $f'$ with $\|f'-f\|_\mathfrak{F}\leq \eta$, then the above statement also holds true with the role of $f, f'$ interchanged.
\end{cor}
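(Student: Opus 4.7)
The plan is to reduce Corollary \ref{mthm5} directly to Corollary \ref{mthm3} (and, for the last assertion, Corollary \ref{mthm4}) by transporting the function algebra $\mathfrak{F}$ onto a sequence algebra $\mathfrak{B}$ via sampling along the orbit $\{\mathbf{i}\cdot\vec\omega\}_{\mathbf{i}\in\Z^d}$.

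First I would introduce the sampling map
\[
\Phi:\mathfrak{F}\longrightarrow \C^{\Z^d},\qquad \Phi(g)_{\mathbf{i}}=g(\mathbf{i}\cdot\vec\omega),
\]
and set $\mathfrak{B}=\Phi(\mathfrak{F})$ equipped with the quotient norm $\|a\|_\mathfrak{B}=\inf\{\|g\|_\mathfrak{F}:\Phi(g)=a\}$. Pointwise multiplication, addition and the constant sequence $(1)_{\mathbf{i}\in\Z^d}$ are inherited from the algebra structure of $\mathfrak{F}$. Since $\Phi\circ\sigma_{\mathbf{k}\cdot\vec\omega}=\sigma_\mathbf{k}\circ\Phi$, the translation invariance of $\|\cdot\|_\mathfrak{F}$ passes to $\|\cdot\|_\mathfrak{B}$, so $(\mathfrak{B},\|\cdot\|_\mathfrak{B})$ is a translation invariant Banach algebra of the type admissible in Corollaries \ref{mthm3} and \ref{mthm4}.

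Next, put $p_\mathbf{i}=f(\mathbf{i}\cdot\vec\omega)$ and $D=\mathrm{diag}_{\mathbf{i}\in\Z^d}(p_\mathbf{i})$. For $\mathbf{k}\neq \mathbf{0}$ one has $(p-\sigma_\mathbf{k}p)_\mathbf{i}=(f-\sigma_{\mathbf{k}\cdot\vec\omega}f)(\mathbf{i}\cdot\vec\omega)$, so hypothesis \eqref{DC} and the definition of $\mathfrak{B}$ give
\[
(p-\sigma_\mathbf{k}p)^{-1}=\Phi\bigl((f-\sigma_{\mathbf{k}\cdot\vec\omega}f)^{-1}\bigr)\in \mathfrak{B},\qquad \|(p-\sigma_\mathbf{k}p)^{-1}\|_\mathfrak{B}\leq \gamma^{-1}|\mathbf{k}|^\tau,
\]
whence $D\in DC_\mathfrak{B}(\tau,\gamma)$. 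The regularity assumption on $s$ in the corollary matches the hypothesis of Corollary \ref{mthm3} verbatim, so that corollary applies and provides $Q_+$ and $D_+\in \mathcal{M}_0^\infty$ together with the polynomially localized eigenfunctions. The diagonal of $D_+$ lies in $\mathfrak{B}$ by construction, so by definition of the quotient norm one may pick $g\in \mathfrak{F}$ with $(D_+)_{\mathbf{i},\mathbf{i}}=g(\mathbf{i}\cdot\vec\omega)$ and $\|g\|_\mathfrak{F}\leq 2\|D_+\|_0$; setting $f'=f+g$ yields the desired operator $H'=T_\phi+f'(\mathbf{i}\cdot\vec\omega)\delta_{\mathbf{ii}'}$, while the real symmetric case furnishes the spectral identification through the unitary version of Corollary \ref{mthm3}.

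For the final clause, the assumed stability of \eqref{DC} on the $\mathfrak{F}$-ball of radius $\eta$ around $f$ transports through $\Phi$ to the stability condition on $D+D'\in DC_\mathfrak{B}(\tau,\gamma)$ required in Corollary \ref{mthm4}, up to a harmless halving of the radius to absorb the quotient norm; invoking Corollary \ref{mthm4} then diagonalizes $H=T_\phi+f(\mathbf{i}\cdot\vec\omega)\delta_{\mathbf{ii}'}$ itself. The only delicate point I anticipate is the possible non-injectivity of $\Phi$ (when the orbit $\{\mathbf{i}\cdot\vec\omega\}$ is degenerate modulo the kernel of $\mathfrak{F}$): it forces the quotient construction and the selection of a representative $g$ rather than a canonical lift, but since only a constant-factor loss enters the quantitative estimates, this is cosmetic rather than substantive.
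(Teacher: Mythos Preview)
Your proposal is correct and follows essentially the same route as the paper: define the sequence Banach algebra $\mathfrak{B}$ as the image of $\mathfrak{F}$ under sampling along $\mathbf{i}\cdot\vec\omega$ with the quotient norm, verify that $D\in DC_\mathfrak{B}(\tau,\gamma)$ via \eqref{DC}, and then invoke Corollaries \ref{mthm3} and \ref{mthm4}. The paper simply cites P\"oschel \cite{Pos83} for the Banach algebra construction and omits the details you spelled out (the intertwining $\Phi\circ\sigma_{\mathbf{k}\cdot\vec\omega}=\sigma_\mathbf{k}\circ\Phi$, the choice of representative $g$ for $D_+$, and the non-injectivity remark), so your write-up is in fact more explicit than the original.
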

\begin{proof}
As done in P\"oschel \cite{Pos83}, we first define sequence Banach algebra $(\mathfrak{B}, \|\cdot\|_\mathfrak{B})$ via the following: Let $\mathfrak{B}$ be the set of all sequence $(a_\mathbf{i})_{\mathbf{i}\in\Z^d}$ satisfying $a_\mathbf{i}=f(\mathbf{i}\cdot\vec\omega)$ for some $f\in \mathfrak{F}$. Define $\|a\|_\mathfrak{B}=\inf_{f\in \mathfrak{F},\ f(\mathbf{i}\cdot\vec\omega)=a_\mathbf{i}}\|f\|_\mathfrak{F}$. P\"oschel has proven that $(\mathfrak{B}, \|\cdot\|_\mathfrak{B})$ is a translation invariant Banach algebra. Thus to prove the corollary, it suffices to use Corollary \ref{mthm3} and \ref{mthm4} in this setting.
\end{proof}

In the following we assume $\vec\omega$ satisfies the Diophantine condition
\begin{align}\label{DC1}
\|\mathbf{i}\cdot\vec\omega\|_{\R/\Z}\geq \frac{\gamma}{|\mathbf{i}|^\tau}\ {\rm for}\ \forall\ \mathbf{i}\in\Z^d\setminus\{\mathbf{0}\}.
\end{align}

\begin{ex}[Meromorphic  potential, \cite{BLS83, Sar82}]
This type of potentials was first introduced by \cite{BLS83}, which includes the Maryland potential \cite{GFP82} and Sarnak's potential \cite{Sar82} as special cases. More precisely,
if $r>0,$  $\mathfrak{H}_r$ denotes the set of period $1$ holomorphic functions on $\mathcal{D}_r=\{z\in\C:\ |\Im z|<r\}$ satisfying
\begin{align*}
\|f\|_{\mathfrak{H}_r}:=\sup_{z\in \mathcal{D}_r}\left(|f(z)|+\left|\frac{df(z)}{dz}\right|\right)<\infty.
\end{align*}
Then $(\mathfrak{H}_r, \|\cdot\|_{\mathfrak{H}_r})$ becomes a translation invariant Banach algebra (see \cite{Pos83}).

Let $\mathfrak{P}_r$  denote the set of period $1$ meromorphic functions on $\mathcal{D}_r$ satisfying
\begin{align*}
f\in \mathfrak{P}_r\Leftrightarrow\exists\  c=c(r)>0\ s.t.,\ \inf_{x\in\R,\ z\in \mathcal{D}_r}|f(z)-f(z-x)|\geq c\|x\|_{\R/\Z}.
\end{align*}
The following lemma is due to Bellissard-Lima-Scoppola. 
\begin{lem}[\cite{BLS83}]\label{BLS}
Fix $r>0$ and let $\vec\omega$ satisfy \eqref{DC1}. If $f\in \mathfrak{P}_r, 0<r_1<r$, then there are $C>0, \eta>0$ so that for every $g\in \mathfrak{H}_r$ with $\|g\|_{\mathfrak{H}_r}\leq \eta$,  one has $f_1=f+g\in \mathfrak{P}_{r-r_1}$ and
\begin{align*}
(f_1-\sigma_{\mathbf{i}\cdot\vec\omega}f_1)^{-1}\in \mathfrak{H}_{r-r_1},\ \|(f_1-\sigma_{\mathbf{i}\cdot\vec\omega}f_1)^{-1}\|_{\mathfrak{H}_{r-r_1}}\leq C\gamma^{-1}|\mathbf{i}|^\tau\ {\rm for}\ \forall\  {\mathbf{i}\neq\mathbf{0}}.
\end{align*}
Moreover, the functions $\tan(\pi z),\  \exp {(2\pi\sqrt{-1} z)}\in \mathfrak{P}_r$ for any $r>0$.
\end{lem}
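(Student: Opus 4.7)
The plan is to separate the proof into three components: a perturbation argument showing that the distal property of $f$ survives the addition of a small holomorphic $g$; an upgrade from a pointwise lower bound on $f_1-\sigma_{\mathbf{i}\cdot\vec\omega}f_1$ to a full $\mathfrak{H}_{r-r_1}$ estimate on its reciprocal via Cauchy's derivative inequality; and a direct verification of the two model examples. I would carry them out in this order.

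First, for the perturbation step, I would fix the constant $c=c(r)>0$ coming from $f\in \mathfrak{P}_r$. The key observation is that since $g$ is $1$-periodic and $|g'|\le \|g\|_{\mathfrak{H}_r}\le \eta$ on $\mathcal{D}_r$, for every $x\in\R$ and $z\in\mathcal{D}_r$ one may pick $n\in\Z$ with $|x-n|=\|x\|_{\R/\Z}$ and apply the fundamental theorem of calculus along the horizontal segment from $z$ to $z-(x-n)$, which stays in $\mathcal{D}_r$. By $1$-periodicity this yields
\begin{align*}
|g(z)-g(z-x)|=|g(z)-g(z-(x-n))|\le \eta\,\|x\|_{\R/\Z}.
\end{align*}
Combined with the distal lower bound for $f$ and the triangle inequality, this gives $|f_1(z)-f_1(z-x)|\ge (c-\eta)\|x\|_{\R/\Z}$ on $\mathcal{D}_r$, so $f_1\in\mathfrak{P}_r\subset\mathfrak{P}_{r-r_1}$ provided $\eta\le c/2$.

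Next, for the operator-norm estimate, I would set $h(z):=f_1(z)-f_1(z-\mathbf{i}\cdot\vec\omega)$. The previous step and the Diophantine condition \eqref{DC1} give $|h(z)|\ge (c\gamma/2)|\mathbf{i}|^{-\tau}$ on all of $\mathcal{D}_r$. The function $h$ is meromorphic on $\mathcal{D}_r$ with no zeros, so $w:=1/h$ has removable singularities at the poles of $h$ (which become zeros of $w$) and extends to an element of $H^\infty(\mathcal{D}_r)$ with $\|w\|_\infty\le 2(c\gamma)^{-1}|\mathbf{i}|^\tau$. For $z\in \mathcal{D}_{r-r_1}$ the disc of radius $r_1$ centered at $z$ lies in $\mathcal{D}_r$, so Cauchy's derivative estimate yields $|w'(z)|\le r_1^{-1}\|w\|_\infty$. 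Summing the two bounds gives
\begin{align*}
\|w\|_{\mathfrak{H}_{r-r_1}}\le 2(c\gamma)^{-1}(1+r_1^{-1})|\mathbf{i}|^\tau=C\gamma^{-1}|\mathbf{i}|^\tau,
\end{align*}
with $C=C(r,r_1)$; this is exactly where the loss of the margin $r_1$ is incurred.

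Finally, the two examples are explicit. For $e(z):=\exp(2\pi\sqrt{-1}\,z)$ one writes $e(z)-e(z-x)=e(z)(1-e^{-2\pi\sqrt{-1}\,x})$; the first factor has modulus at least $e^{-2\pi r}$ on $\mathcal{D}_r$, while $|1-e^{-2\pi\sqrt{-1}\,x}|=2|\sin(\pi x)|\ge 4\|x\|_{\R/\Z}$. For $t(z):=\tan(\pi z)$ the identity $t(z)-t(z-x)=\sin(\pi x)/(\cos(\pi z)\cos(\pi(z-x)))$ combined with $|\cos(\pi z)|\le \cosh(\pi r)$ on $\mathcal{D}_r$ and $|\sin(\pi x)|\ge 2\|x\|_{\R/\Z}$ yields the distal inequality with $c=2\cosh^{-2}(\pi r)$. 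The only subtlety I expect will be ensuring that the meromorphy of $f$ does not obstruct the holomorphy of $1/h$ on $\mathcal{D}_r$; this is precisely where the uniform lower bound on $|h|$ plays its role, forcing poles of $h$ to become removable zeros of $1/h$.
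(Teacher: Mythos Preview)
The paper does not prove this lemma; it is quoted from \cite{BLS83} without argument. Your proposal is therefore not competing with any proof in the present paper, and the three-step outline you give (perturbation of the distal lower bound via the mean value theorem, upgrade to the $\mathfrak{H}_{r-r_1}$-norm by Cauchy's derivative estimate, explicit check of the two model functions) is a correct and self-contained route to the result. The loss of the margin $r_1$ is used exactly where you indicate, namely to bound $|w'|$ on the smaller strip in terms of $\sup|w|$ on the larger one; your first step in fact yields the slightly stronger conclusion $f_1\in\mathfrak{P}_r$ (not merely $\mathfrak{P}_{r-r_1}$), since the horizontal segment stays in $\mathcal{D}_r$.

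One point worth making explicit when you write this out: the lower bound $|h(z)|\ge (c/2)\|\mathbf{i}\cdot\vec\omega\|_{\R/\Z}$ is a priori a pointwise statement away from the poles of $h$, and you are using it to conclude that $1/h$ extends holomorphically across those poles. This is fine, but if $f_1$ happened to have poles at both $z_0$ and $z_0-\mathbf{i}\cdot\vec\omega$, the difference $h$ could have a removable singularity rather than a pole there; you should then observe that the distal inequality, being satisfied on a punctured neighborhood, forces the removed value of $h$ to obey the same lower bound by continuity, so $1/h$ remains bounded. For the two explicit examples this situation never arises (the exponential has no poles, and the poles of $\tan(\pi z)$ form a lattice that cannot be shifted onto itself by an irrational $\mathbf{i}\cdot\vec\omega$), but the remark closes the argument for general $f\in\mathfrak{P}_r$.
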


In this example we deal with
\begin{align}\label{mary}
H=T_\phi+f(\mathbf{i}\cdot\vec\omega)\delta_{\mathbf{ii}'}, \ f\in\mathfrak{P}_r\ (r>0).
\end{align}

Combining Corollary \ref{mthm5} and Lemma \ref{BLS} yields
\begin{cor}[]
Fix $\alpha_0>d/2, \tau>0, \gamma>0$ and $\delta>0$. Assume that $T_\phi$ is given by \eqref{tphi} and \eqref{phi} with
\begin{align*}
s>\alpha_0+\tau+d/2+12\delta.
\end{align*}
Let $\vec\omega$ satisfy \eqref{DC1}, and let $H$ be defined by \eqref{mary}. Then there is some $\epsilon_0=\epsilon_0(f, r, s, d, \delta, \tau, \gamma, \alpha_0)>0$ so that for $\|T_\phi\|_{{s-d/2-\delta}}\leq \epsilon_0$,   $H$ has pure point spectrum with a complete set of polynomially localized eigenfunctions $\{e_\mathbf{k}\}_{\mathbf{k}\in\Z^d}$ satisfying
\begin{align*}
|(e_\mathbf{k})_\mathbf{i}|\leq 2\langle \mathbf{i}-\mathbf{k}\rangle^{-s+\tau+d/2+12\delta}\ {\rm for}\ \mathbf{i}, \mathbf{k}\in\Z^d.
\end{align*}
In particular, the above statements hold true for
\begin{align*}
H=T_\phi+\tan(\pi \mathbf{i}\cdot\vec\omega)\delta_{\mathbf{ii}'}
\end{align*}
and
\begin{align*}
H=T_\phi+\exp(2\pi \sqrt{-1}\mathbf{i}\cdot\vec\omega)\delta_{\mathbf{ii}'}.
\end{align*}
\end{cor}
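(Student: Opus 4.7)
The plan is to apply Corollary~\ref{mthm5} in its second (stability) form, with the translation invariant Banach algebra $\mathfrak{F}$ taken to be $\mathfrak{H}_{r-r_1}$ for an auxiliary $0<r_1<r$ (P\"oschel's Banach algebra of choice, and the natural output space of Lemma~\ref{BLS}). The non-resonance hypothesis~\eqref{DC} for $f$ in this algebra is immediate from the definition of $\mathfrak{P}_r$ combined with~\eqref{DC1}: for $\mathbf{i}\in\Z^d\setminus\{\mathbf{0}\}$, the function $z\mapsto f(z)-f(z-\mathbf{i}\cdot\vec\omega)$ is meromorphic on $\mathcal{D}_r$ with modulus bounded below by $c\gamma|\mathbf{i}|^{-\tau}$, so its inverse is holomorphic on $\mathcal{D}_r$ and, after a Cauchy estimate giving the derivative bound on the smaller disk, lies in $\mathfrak{H}_{r-r_1}$ with norm $\lesssim \gamma^{-1}|\mathbf{i}|^\tau$.

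The substantive step is the stability clause: verifying~\eqref{DC} for every $f'=f+g$ with $g\in\mathfrak{F}$ small. This is precisely what Lemma~\ref{BLS} delivers, once we match the norms involved. The Nash-Moser correction $D_+$ supplied by Theorem~\ref{mthm2} is diagonal, and via the infimum definition $\|a\|_\mathfrak{B}=\inf_g\|g\|_\mathfrak{F}$, any diagonal perturbation with $\|P\|_0\leq\eta$ admits a function representative small in $\mathfrak{H}_r$ (after shrinking $\epsilon_0$ accordingly). Invoking Lemma~\ref{BLS} on that representative yields the required distal bound for $f+g$ in $\mathfrak{H}_{r-r_1}$, with the constant $C\gamma^{-1}$ absorbed into $\gamma$. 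The conclusion of Corollary~\ref{mthm5} then supplies the pure-point spectrum of $H=T_\phi+f(\mathbf{i}\cdot\vec\omega)\delta_{\mathbf{ii}'}$ together with a complete set of eigenfunctions decaying as $\langle\mathbf{i}-\mathbf{k}\rangle^{-s+\tau+d/2+12\delta}$.

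The two explicit examples $f(z)=\tan(\pi z)$ and $f(z)=\exp(2\pi\sqrt{-1}z)$ require no extra work, since the last sentence of Lemma~\ref{BLS} certifies them as members of $\mathfrak{P}_r$ for every $r>0$. The main obstacle I anticipate is the norm bookkeeping already flagged: reconciling the $\mathfrak{H}_r$-smallness needed by Lemma~\ref{BLS} with the $\mathfrak{B}$-smallness produced by the Nash-Moser iteration. Once this reconciliation is handled via the infimum norm on $\mathfrak{B}$ and an appropriate choice of $r_1,\eta,\epsilon_0$, everything else amounts to a direct invocation of the already-established results.
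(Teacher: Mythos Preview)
Your approach is exactly the paper's: its entire proof is the single sentence ``Combining Corollary~\ref{mthm5} and Lemma~\ref{BLS}'', and you correctly select the stability clause of Corollary~\ref{mthm5} with Lemma~\ref{BLS} supplying the required perturbative stability of the distal condition, including the two concrete examples via the last line of Lemma~\ref{BLS}. One small slip in your bookkeeping paragraph: with $\mathfrak{F}=\mathfrak{H}_{r-r_1}$ the infimum defining $\|\cdot\|_\mathfrak{B}$ ranges over $\mathfrak{H}_{r-r_1}$, so the representative you extract lies only in $\mathfrak{H}_{r-r_1}$, not in $\mathfrak{H}_r$ as you claim---this is exactly the strip-width mismatch you already anticipated, and the paper itself does not resolve it explicitly either.
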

\begin{rem}
(1). The Maryland type potential $\tan (\textbf{i}\cdot\vec\omega)$ was first introduced in \cite{GFP82}. It is well-known that the  Maryland type models (even with an exponential long-range hopping) are  solvable ones \cite{GFP82, FP84, Sim85}. We refer to \cite{JL17} for an almost complete description of the spectral types for the standard Maryland model (i.e., the $1D$ Schr\"odinger operator with the Maryland potential). Very recently, a  Rayleigh-Schr\"odinger perturbation type method was   developed by \cite{KPS20} to prove Anderson localization for some quasi-periodic operators with Maryland type potentials.

(2). The Schr\"odinger operator with the potential $\exp(2\pi \sqrt{-1}\mathbf{i}\cdot\vec\omega)$ was first introduced in \cite{Sar82}, and is a typical non-normal operator. While the spectral theorem is not available for the non-normal operator, we can still define pure point spectrum for it: We say an arbitrary operator on $\ell^2(\Z^d)$ is of pure point spectrum if it has a complete set of eigenfunctions.

\end{rem}

\end{ex}

\begin{ex}[Discontinuous potential, \cite{Cra83}]
We then introduce the potential of \cite{Cra83}. Let $\mathfrak{I}$ denote the space of all functions of period $1$ and  bounded variation with the norm
\begin{align*}
\|f\|_\mathfrak{I}=\sup_{x\in\R}|f(x)|+\|f\|_{BV},
\end{align*}
where $\|\cdot\|_{BV}$ denotes the standard total variation. Then $(\mathfrak{I}, \|\cdot\|_\mathfrak{I})$ is a translation invariant Banach algebra.

Obviously,  a typical example of such a potential is $f(x)=x\mod 1$ which has been previously studied by  Craig \cite{Cra83}.

In this example  we study
\begin{align*}
 T_\phi+((\mathbf{i}\cdot\vec\omega)\mod 1)\delta_{\mathbf{ii}'}.
\end{align*}
The following result  can be found in  \cite{Pos83}.
\begin{lem}[\cite{Pos83}]\label{Pos}
Let $\vec\omega$ satisfy \eqref{DC1}, and let $f(x)=x\mod 1$. Then
\begin{align*}
(f-\sigma_{\mathbf{i}\cdot\vec\omega}f)^{-1}\in \mathfrak{I},\ \|(f-\sigma_{\mathbf{i}\cdot\vec\omega}f)^{-1}\|_{\mathfrak{I}}\leq \gamma^{-1} |\mathbf{i}|^\tau\ {\rm for}\ \mathbf{i}\neq\mathbf{0}.
\end{align*}
\end{lem}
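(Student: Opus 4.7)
The plan is to write the difference $g(x) := f(x) - (\sigma_{\mathbf{i}\cdot\vec\omega} f)(x) = \{x\} - \{x - \mathbf{i}\cdot\vec\omega\}$ (where $\{\cdot\}$ denotes the fractional part) explicitly as a two-valued step function, and then read off both its sup-norm and its total variation directly. By the $1$-periodicity of $f$, I may replace $\mathbf{i}\cdot\vec\omega$ by $\theta := (\mathbf{i}\cdot\vec\omega)\bmod 1$; the assumption $\mathbf{i}\neq\mathbf{0}$ together with \eqref{DC1} forces $\theta\in(0,1)$ with
\begin{equation*}
\min(\theta,\,1-\theta) \;=\; \|\mathbf{i}\cdot\vec\omega\|_{\R/\Z} \;\geq\; \gamma/|\mathbf{i}|^\tau .
\end{equation*}

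A direct manipulation of floor functions gives $g(x) = \theta + \lfloor x-\theta\rfloor - \lfloor x \rfloor$. Since $\theta\in(0,1)$, the integer $\lfloor x\rfloor - \lfloor x-\theta\rfloor$ equals $1$ when $\{x\}<\theta$ and $0$ otherwise, so $g$ is $1$-periodic and takes only the two values $\theta$ and $\theta-1$, with jumps only at the points $x\equiv 0$ and $x\equiv\theta\pmod 1$. Consequently $g^{-1}$ is a $1$-periodic step function with the same two break points, and
\begin{equation*}
\|g^{-1}\|_\infty \;=\; \max\!\left(\tfrac{1}{\theta},\,\tfrac{1}{1-\theta}\right) \;\leq\; |\mathbf{i}|^\tau/\gamma .
\end{equation*}

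For the bounded-variation part, I would observe that on one period $g^{-1}$ has exactly two jumps, each of magnitude $\bigl|\theta^{-1}-(\theta-1)^{-1}\bigr| = 1/(\theta(1-\theta))$. Using $\max(\theta,1-\theta)\geq 1/2$ gives $\theta(1-\theta)\geq \tfrac12\min(\theta,1-\theta)$, whence $\|g^{-1}\|_{BV}\leq 4/\min(\theta,1-\theta)\leq 4|\mathbf{i}|^\tau/\gamma$. Summing the sup-norm and the BV-seminorm produces an estimate of the form $\|g^{-1}\|_\mathfrak{I}\leq C|\mathbf{i}|^\tau/\gamma$; the harmless absolute constant $C$ can be absorbed into $\gamma^{-1}$ (or, equivalently, one rescales the Diophantine constant), recovering the stated bound $\gamma^{-1}|\mathbf{i}|^\tau$.

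There is no substantive obstacle in this argument: the only mechanism at work is that the sawtooth $\{x\}$ is piecewise linear with unit slope, so that its discrete difference is already a two-step function whose only small parameter is the quantity $\|\mathbf{i}\cdot\vec\omega\|_{\R/\Z}$. The Diophantine condition converts this small parameter into the polynomial bound $\gamma^{-1}|\mathbf{i}|^\tau$, simultaneously controlling the sup-norm (via the smaller of the two step values) and, after inversion, the two jump heights of $g^{-1}$ that make up the BV-seminorm. The main technical care needed is simply to handle the periodicity reduction $\theta\mapsto\theta\bmod 1$ cleanly and to check that the BV count on $[0,1)$ really accounts for all jumps of the periodic extension.
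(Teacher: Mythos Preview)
The paper does not give its own proof of this lemma: it is quoted from P\"oschel \cite{Pos83} and used as a black box. Your argument is a correct direct verification. The reduction to $\theta=(\mathbf{i}\cdot\vec\omega)\bmod 1\in(0,1)$ is valid, the identity $g(x)=\theta+\lfloor x-\theta\rfloor-\lfloor x\rfloor$ is right, and your identification of $g$ (hence $g^{-1}$) as a two-valued periodic step function with jumps exactly at $x\equiv 0$ and $x\equiv\theta$ is accurate. The sup-norm bound $\|g^{-1}\|_\infty=1/\min(\theta,1-\theta)\le\gamma^{-1}|\mathbf{i}|^\tau$ is exact, and your BV estimate $\|g^{-1}\|_{BV}\le 4\gamma^{-1}|\mathbf{i}|^\tau$ is fine.

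The only point to flag is the closing remark about ``absorbing'' the resulting absolute constant into $\gamma^{-1}$. Strictly speaking that changes the hypothesis, not the conclusion, so what you have actually proved is $\|(f-\sigma_{\mathbf{i}\cdot\vec\omega}f)^{-1}\|_{\mathfrak{I}}\le C\gamma^{-1}|\mathbf{i}|^\tau$ with $C\le 5$. This is harmless for every downstream use in the paper (Corollary~\ref{mthm5} and the application to $f(x)=x\bmod 1$ only need a polynomial bound of this shape), but it is worth stating honestly rather than pretending the constant disappears.
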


Combining Corollary \ref{mthm5} and Lemma \ref{Pos} yields
\begin{cor}[]
Fix $\alpha_0>d/2, \tau>0, \gamma>0$ and $\delta>0$. Assume that $T_\phi$ is given by \eqref{tphi} and \eqref{phi} with
\begin{align*}
s>\alpha_0+\tau+d/2+12\delta.
\end{align*}
Let $\vec\omega$ satisfy \eqref{DC1}. Then there is $\epsilon_0=\epsilon_0(s, d, \delta, \tau, \gamma, \alpha_0)>0$ so that for $\|T_\phi\|_{{s-d/2-\delta}}\leq \epsilon_0$,  the following holds true.  There exists some $d'\in \mathfrak{I}$ with
\begin{align*}
\|(x\mod\ 1)-d'\|_\mathfrak{I}\leq C\|T_\phi\|_{{s-d/2-\delta}}^{\frac{1}{s-d/2-5\delta-\alpha_0}}
\end{align*}
such that
\begin{align*}
H'=T_\phi+d'(\mathbf{i}\cdot\vec\omega)\delta_{\mathbf{ii}'}
\end{align*}
has pure point spectrum with a complete set of polynomially localized eigenfunctions $\{e_\mathbf{k}\}_{\mathbf{k}\in\Z^d}$ satisfying
\begin{align*}
|(e_\mathbf{k})_\mathbf{i}|\leq 2\langle \mathbf{i}-\mathbf{k}\rangle^{-s+\tau+d/2+12\delta}\ {\rm for}\ \mathbf{i}, \mathbf{k}\in\Z^d.
\end{align*}
Moreover,  the set of all eigenvalues  for $H'$ is just  $\{(\mathbf{i}\cdot\vec\omega \mod 1):\ \mathbf{i}\in\Z^d\}$.
\end{cor}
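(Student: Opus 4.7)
The plan is to obtain this as a direct specialization of Corollary \ref{mthm5} to the Banach algebra $(\mathfrak{I}, \|\cdot\|_\mathfrak{I})$ of $1$-periodic functions of bounded variation, with the specific generating function $f(x)=x\bmod 1$. Essentially no new analytic input is required; the work is to verify the hypotheses of Corollary \ref{mthm5} and read off its conclusion.

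The first thing to observe is that $(\mathfrak{I},\|\cdot\|_\mathfrak{I})$ is a translation invariant Banach algebra, as already noted in the setup of the example, so it plays the role of $\mathfrak{F}$ in Corollary \ref{mthm5}. Next, Lemma \ref{Pos}, combined with the Diophantine assumption \eqref{DC1} on $\vec\omega$, supplies precisely the distal condition \eqref{DC} of Corollary \ref{mthm5} for the choice $f(x)=x\bmod 1$:
\[
(f-\sigma_{\mathbf{i}\cdot\vec\omega}f)^{-1}\in \mathfrak{I},\qquad \|(f-\sigma_{\mathbf{i}\cdot\vec\omega}f)^{-1}\|_\mathfrak{I}\le \gamma^{-1}|\mathbf{i}|^\tau,\quad \mathbf{i}\neq\mathbf{0}.
\]
The regularity hypothesis $s>\alpha_0+\tau+d/2+12\delta$ on $T_\phi$ is identical to that of Corollary \ref{mthm5}, and the smallness assumption $\|T_\phi\|_{s-d/2-\delta}\le \epsilon_0$ matches exactly.

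With all hypotheses verified, Corollary \ref{mthm5} applied in this setting yields a function $d'\in \mathfrak{I}$ with
\[
\|(x\bmod 1)-d'\|_\mathfrak{I}\le C\|T_\phi\|_{s-d/2-\delta}^{\frac{1}{s-d/2-5\delta-\alpha_0}},
\]
and establishes that $H'=T_\phi+d'(\mathbf{i}\cdot\vec\omega)\delta_{\mathbf{ii}'}$ has pure point spectrum together with the claimed complete set of polynomially localized eigenfunctions $\{e_\mathbf{k}\}_{\mathbf{k}\in\Z^d}$ satisfying the required decay estimate.

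The point spectrum description is then immediate from the reducibility carried through Corollary \ref{mthm5} (inherited from Theorem \ref{mthm1}): since $f$ is real-valued, there is a unitary $U$ with $U^{-1}H'U=\mathrm{diag}_{\mathbf{k}\in\Z^d}(f(\mathbf{k}\cdot\vec\omega))$, so the eigenvalues of $H'$ are exactly $\{f(\mathbf{k}\cdot\vec\omega):\mathbf{k}\in\Z^d\}=\{\mathbf{k}\cdot\vec\omega\bmod 1:\mathbf{k}\in\Z^d\}$. No substantive obstacle is anticipated since all the analytic difficulty is already absorbed by the Nash-Moser iteration that proves Theorem \ref{mthm1} and Corollary \ref{mthm5}; the only subtlety to check is that the framework of Corollary \ref{mthm5}, stated for $\mathfrak{F}$-valued functions on $\C$, applies verbatim to the space $\mathfrak{I}$ of BV functions on $\R$, which is exactly the setting already used implicitly in \cite{Pos83}.
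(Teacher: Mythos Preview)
Your proposal is correct and follows essentially the same route as the paper: the paper's own proof is simply ``Combining Corollary \ref{mthm5} and Lemma \ref{Pos} yields'' together with the observation that the point spectrum is $\{\mathbf{i}\cdot\vec\omega \bmod 1\}$. Your write-up just spells out the verification of the hypotheses of Corollary \ref{mthm5} in the algebra $\mathfrak{I}$ a bit more explicitly, which is fine.
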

\begin{rem}
A natural generation of the potential $x \mod 1$ is the so called Lipschitz monotone potential, which was introduced recently by \cite{JK19}. In \cite{JK19}  the authors proved  all couplings localization for the $1D$ quasi-periodic Schr\"odinger operator with a Lipschitz monotone potential. 
\end{rem}

\end{ex}



\section{Preliminaries}

\subsection{The notation}
\begin{itemize}
\item Denote by $I$ the identity operator on $\mathcal{M}$.
\item For two operators $X, Y$, we define
\begin{align*}
[X, Y]=XY-YX.
\end{align*}

\item By $C=C(f)>0$, we mean the constant $C$ depends only on $f.$

\item Typically, we write $\mathbf{x}\in\R^d$ for a vector,  and $x\in \R$ for a scalar. If $\mathbf{k}\in\Z^d$, then let
\begin{align*}
|\mathbf{k}|=\max_{1\leq v\leq d}|k_v|,\ \langle \mathbf{k}\rangle=\max\{1, |\mathbf{k}|\}.
\end{align*}

\item By $\delta_{\mathbf{ij}}$ we denote the Kronecker delta. Namely, $\delta_{\mathbf{ii}}=1$ , and $\delta_{\mathbf{ij}}=0$ if $\mathbf{i}\neq \mathbf{j}.$

\item By ${\rm diag}_{\mathbf{i}\in\Z^d}(d_\mathbf{i})$ we denote the diagonal matrix whose diagonal elements are $(d_\mathbf{i})_{\mathbf{i}\in\Z^d}.$
\item For any $A=(a_{\mathbf{i},\mathbf{j}})_{\mathbf{i},\mathbf{j}\in\Z^d}$,  let $\overline{A}={\rm diag}_{\mathbf{i}\in\Z^d}(a_{\mathbf{i},\mathbf{i}})$ be the main diagonal part of $A$.
\item By $0\leq x\ll y$ we mean there is a small $c>0$ so that $x\leq cy.$
\item Throughout the paper $\alpha_0>d/2, \gamma>0$ are fixed.
\end{itemize}

\subsection{Tame property}
The Sobolev norm defined by \eqref{sob} has the following important tame property (which is also called the interpolation property). Such tame property is not shared by the exponential one.
\begin{lem}\label{tame}
Fix $\alpha_0>d/2.$ Then for any $s\geq \alpha_0$ and $X,Y\in \mathcal{M}^{s}$, we have
\begin{align}\label{tame1}
\|XY\|_s\leq K_0\|X\|_{\alpha_0}\|Y\|_{s}+K_1\|X\|_{s}\|Y\|_{\alpha_0},
\end{align}
where
\begin{align}
\label{k0}K_0&=\sqrt{20\sum_{\mathbf{k}\in\Z^d}\langle \mathbf{k}\rangle^{-2\alpha_0}},\\
\label{k1}K_1(s)&=(1-10^{-\frac{1}{2s}})^{-s}\sqrt{2\sum_{\mathbf{k}\in\Z^d}\langle \mathbf{k}\rangle^{-2\alpha_0}}.
\end{align}
\end{lem}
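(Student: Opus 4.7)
The plan is to reduce the operator norm estimate to a weighted $\ell^2$ convolution estimate on the sequence of diagonal norms, and then split the convolution at a threshold tuned to reproduce the stated constants. First I would observe from the matrix product formula that for $C=XY$ and any $\mathbf{k}\in\Z^d$,
\begin{align*}
(C_\mathbf{k})_\mathbf{i}=\sum_{\mathbf{m}\in\Z^d}(X_\mathbf{m})_\mathbf{i}\cdot(\sigma_\mathbf{m}Y_{\mathbf{k}-\mathbf{m}})_\mathbf{i},
\end{align*}
the product on the right being pointwise in $\mathfrak{B}$. Since $\mathfrak{B}$ is a Banach algebra whose norm is translation invariant, this yields
\begin{align*}
\|(XY)_\mathbf{k}\|_\mathfrak{B}\leq\sum_{\mathbf{m}\in\Z^d}\|X_\mathbf{m}\|_\mathfrak{B}\|Y_{\mathbf{k}-\mathbf{m}}\|_\mathfrak{B}=(x*y)_\mathbf{k},
\end{align*}
where $x_\mathbf{k}:=\|X_\mathbf{k}\|_\mathfrak{B}$ and $y_\mathbf{k}:=\|Y_\mathbf{k}\|_\mathfrak{B}$. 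Thus it suffices to bound $\|\langle\cdot\rangle^s(x*y)\|_{\ell^2(\Z^d)}$ by the right-hand side of \eqref{tame1}.

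Fix a threshold $\lambda\in(0,1)$ and decompose $(x*y)_\mathbf{k}=I_1(\mathbf{k})+I_2(\mathbf{k})$ according to whether $\langle\mathbf{m}\rangle\leq\lambda\langle\mathbf{k}\rangle$ or not. In the $I_2$ region one has $\langle\mathbf{k}\rangle^s\leq\lambda^{-s}\langle\mathbf{m}\rangle^s$, so Young's inequality $\|f*g\|_{\ell^2}\leq\|f\|_{\ell^2}\|g\|_{\ell^1}$ combined with the Cauchy-Schwarz bound
\begin{align*}
\|y\|_{\ell^1}\leq\Sigma^{1/2}\|Y\|_{\alpha_0},\qquad \Sigma:=\sum_{\mathbf{k}\in\Z^d}\langle\mathbf{k}\rangle^{-2\alpha_0}<\infty
\end{align*}
(which is where the hypothesis $\alpha_0>d/2$ enters) produces
\begin{align*}
\|\langle\cdot\rangle^sI_2\|_{\ell^2}\leq\lambda^{-s}\Sigma^{1/2}\|X\|_s\|Y\|_{\alpha_0}.
\end{align*}
For $I_1$, the constraint $\langle\mathbf{m}\rangle\leq\lambda\langle\mathbf{k}\rangle$ together with $\langle\mathbf{m}\rangle\geq1$ forces $\langle\mathbf{k}\rangle\geq\lambda^{-1}>1$, hence $\langle\mathbf{k}\rangle=|\mathbf{k}|$ and $|\mathbf{k}-\mathbf{m}|\geq(1-\lambda)|\mathbf{k}|$; since $\langle\cdot\rangle\geq|\cdot|$ pointwise, this upgrades to $\langle\mathbf{k}\rangle^s\leq(1-\lambda)^{-s}\langle\mathbf{k}-\mathbf{m}\rangle^s$, and the symmetric application of Young plus Cauchy-Schwarz gives
\begin{align*}
\|\langle\cdot\rangle^sI_1\|_{\ell^2}\leq(1-\lambda)^{-s}\Sigma^{1/2}\|X\|_{\alpha_0}\|Y\|_s.
\end{align*}

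Finally, the choice $\lambda=1-10^{-1/(2s)}$ is exactly the one that makes $(1-\lambda)^{-s}=\sqrt{10}$, and using $(I_1+I_2)^2\leq2(I_1^2+I_2^2)$ followed by $\sqrt{a+b}\leq\sqrt{a}+\sqrt{b}$ reproduces precisely the constants $K_0=\sqrt{20\Sigma}$ and $K_1(s)=(1-10^{-1/(2s)})^{-s}\sqrt{2\Sigma}$ in \eqref{tame1}. The only somewhat delicate step is the bookkeeping that $\langle\mathbf{k}\rangle$ equals $|\mathbf{k}|$ throughout the $I_1$ region; this is automatic because $\langle\mathbf{m}\rangle\geq1$ forces $I_1$ to be empty whenever $\langle\mathbf{k}\rangle<\lambda^{-1}$, so no boundary case survives.
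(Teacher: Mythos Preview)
Your proof is correct and follows essentially the same route as the paper's. Both arguments reduce to the weighted convolution $\|(XY)_\mathbf{k}\|_\mathfrak{B}\le (x*y)_\mathbf{k}$ and then split the sum into two regions, the only cosmetic difference being that the paper thresholds on $\langle\mathbf{k}-\mathbf{j}\rangle$ versus $\langle\mathbf{k}\rangle$ (the set $\mathcal{I}_\mathbf{k}=\{\mathbf{j}:\langle\mathbf{k}\rangle^{2s}\langle\mathbf{k}-\mathbf{j}\rangle^{-2s}\le 10\}$) while you threshold on $\langle\mathbf{m}\rangle$ versus $\langle\mathbf{k}\rangle$; the two splits are dual and produce the identical constants. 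Likewise, the paper applies Cauchy--Schwarz directly to each inner sum and then interchanges the order of summation, whereas you package the same computation as Young's inequality $\|f*g\|_{\ell^2}\le\|f\|_{\ell^1}\|g\|_{\ell^2}$ followed by Cauchy--Schwarz on the $\ell^1$ factor; these are equivalent manipulations.
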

\begin{rem}
Obviously, $K_1(\alpha_0,s)$ is bounded for $s$ is a bounded interval. This Sobolev type norm was previously used by Berti-Bolle \cite{BB13} to prove the existence of finitely smoothing quasi-periodic solutions for some nonlinear Schr\"odinger equations. Recently, such norm was also used by Shi \cite{Shi21} to give a multi-scale proof of power-law localization for some random operators.
\end{rem}
\begin{proof}
For a detailed proof, we refer to the appendix.
\end{proof}

\subsection{Smoothing operator}
The smoothing operator plays  an  essential role in the Nash-Moser iteration scheme. In the present context we have
\begin{defn}\label{smdf}
Fix $\theta>0$. Define the smoothing operator $S_\theta: \mathcal{M}\to \mathcal{M}^\infty$ by
\begin{align*}
(S_\theta X)_{\mathbf{i},\mathbf{j}}&=X_{\mathbf{i},\mathbf{j}}\ {\rm for}\ |\mathbf{i}-\mathbf{j}|\leq \theta,\\
(S_\theta X)_{\mathbf{i},\mathbf{j}}&=0\ {\rm for}\ |\mathbf{i}-\mathbf{j}|> \theta.
\end{align*}
\end{defn}

\begin{lem}\label{smooth}
Fix $\theta>0$. Then  for $X\in \mathcal{M}^{s'}$, we have
\begin{align}
\label{smooth1}\|S_\theta X\|_s&\leq \theta^{s-s'}\|X\|_{s'}\ {\rm for}\ s\geq s'\geq0,\\
\label{smooth2}\|(I-S_\theta) X\|_s&\leq  \theta^{s-s'}\|X\|_{s'}\ {\rm for}\ 0\leq s\leq s'.
\end{align}
\end{lem}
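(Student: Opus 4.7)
Both inequalities follow by unpacking the Sobolev norm \eqref{sob} diagonal-by-diagonal. First I note that since $(S_\theta X)_{\mathbf{i},\mathbf{j}}=X_{\mathbf{i},\mathbf{j}}$ exactly when $|\mathbf{i}-\mathbf{j}|\leq\theta$, the $\mathbf{k}$-diagonal of $S_\theta X$ is
\[
(S_\theta X)_{\mathbf{k}} = \begin{cases} X_{\mathbf{k}} & \text{if } |\mathbf{k}|\leq\theta,\\ 0 & \text{if } |\mathbf{k}|>\theta,\end{cases}
\]
so that $\|S_\theta X\|_s^2=\sum_{|\mathbf{k}|\leq\theta}\|X_{\mathbf{k}}\|_\mathfrak{B}^2\langle\mathbf{k}\rangle^{2s}$ and $\|(I-S_\theta)X\|_s^2=\sum_{|\mathbf{k}|>\theta}\|X_{\mathbf{k}}\|_\mathfrak{B}^2\langle\mathbf{k}\rangle^{2s}$. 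The whole game is then to factor $\langle\mathbf{k}\rangle^{2s}=\langle\mathbf{k}\rangle^{2(s-s')}\langle\mathbf{k}\rangle^{2s'}$ and control the first factor by $\theta^{2(s-s')}$ on the appropriate piece.

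For \eqref{smooth1}, with $s\geq s'\geq 0$, on the range $|\mathbf{k}|\leq\theta$ I bound $\langle\mathbf{k}\rangle\leq\theta$ (this is the step which tacitly needs $\theta\geq 1$, the regime in which the lemma is actually applied in the iteration; otherwise $|\mathbf{k}|\leq\theta<1$ forces $\mathbf{k}=\mathbf{0}$ and the bound is trivial after replacing $\theta$ by $\max\{1,\theta\}$). Since the exponent $s-s'\geq 0$, this gives $\langle\mathbf{k}\rangle^{2(s-s')}\leq\theta^{2(s-s')}$, hence
\[
\|S_\theta X\|_s^2\leq \theta^{2(s-s')}\sum_{|\mathbf{k}|\leq\theta}\|X_{\mathbf{k}}\|_\mathfrak{B}^2\langle\mathbf{k}\rangle^{2s'}\leq \theta^{2(s-s')}\|X\|_{s'}^2.
\]

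For \eqref{smooth2}, with $0\leq s\leq s'$, on the complementary range $|\mathbf{k}|>\theta$ I use $\langle\mathbf{k}\rangle\geq|\mathbf{k}|>\theta$. Because now the exponent $s-s'\leq 0$, the map $t\mapsto t^{2(s-s')}$ is nonincreasing, so $\langle\mathbf{k}\rangle^{2(s-s')}\leq\theta^{2(s-s')}$, giving
\[
\|(I-S_\theta)X\|_s^2\leq \theta^{2(s-s')}\sum_{|\mathbf{k}|>\theta}\|X_{\mathbf{k}}\|_\mathfrak{B}^2\langle\mathbf{k}\rangle^{2s'}\leq \theta^{2(s-s')}\|X\|_{s'}^2.
\]
Taking square roots in each of the two displays yields the stated inequalities. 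No obstacle to speak of: this is a one-line estimate per direction, the only mild subtlety being the $\theta<1$ corner of \eqref{smooth1}, which is irrelevant for the Nash–Moser application where $\theta_k\to\infty$.
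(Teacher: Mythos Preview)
Your proof is correct and is exactly the argument the paper has in mind: the paper's own proof simply says ``Recalling the Definition~\ref{smdf} and~\eqref{sob}, the proof is trivial,'' and what you have written is the natural unpacking of that sentence. Your observation about the $\theta<1$ corner case is a fair caveat on the lemma as literally stated, but as you note it never arises in the iteration since all $\theta_l\geq\theta_0>1$.
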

\begin{proof}
Recalling the Definition \ref{smdf} and \eqref{sob}, the proof is  trivial.
\end{proof}


\section{The Nash-Moser scheme}

The main scheme is to find a sequence of invertible operators $Q_0, \cdots,  Q_k, \cdots $ such that
\begin{align*}
Q_k^{-1}\left(\sum_{l=1}^kT_{l-1}+ D+\sum_{l=1}^k D_{l-1}\right)Q_k=D+R_k,
\end{align*}
where
\begin{align*}
T_l=(S_{\theta_{l}}-S_{\theta_{l-1}})T,\ \theta_l=\theta_0 \Theta^l,
\end{align*}
where $\theta_0$ and  $\Theta>1$ will be specified later. We will show $ \sum_{l=1}^k T_{l-1}\to T, Q_k\to Q$, $\sum_{l=1}^kD_{l-1}\to D_+$ and $R_k\to 0$ as $k\to\infty$ in the $s$-norm for some $s>0.$

The key difficulty in the above scheme is the so called \textit{loss of derivatives} when we determine  $Q_k$.  Since we are in the Sobolev case, we have a $\tau$-order loss of derivative at each iteration step.  Such a loss is not presented in the analytic case (the loss in the analytic norm is of order $\delta$  for \textit{arbitrary} $\delta>0$, cf. \cite{Pos83, BLS83}).


\subsection{The iteration step}
We first deal with the iteration step. We will work with $s$-Sobolev norms with
\begin{align*}
s\in[\alpha_0, \alpha_1],
\end{align*}
where $\alpha_0>d/2$ is fixed and $\alpha_1$ will be specified later.

We also assume \footnote{The exponential scale of smoothing operator was first introduced by  Klainerman \cite{Kla80}.}
\begin{align}\label{Tl}
\theta_{l}=\theta_0\Theta^{l},\  T_l=(S_{\theta_l}-S_{\theta_{l-1}})T \ (l\geq 1).
\end{align}
where $\theta_0>1$ and $\Theta>1$ will be specified later.
We have the following iteration proposition.
\begin{prop}\label{itprop}
Fix $\alpha_0>d/2, \tau>0, \gamma>0, \delta>0.$ Assume that
\begin{align*}
\alpha>\tau+\alpha_0+7\delta,\ \alpha_1\geq2\alpha+\delta.
\end{align*}
Assume further that
\begin{align*}
\|T\|_{\alpha+4\delta}\leq 1,\ \Theta\geq \max\{8^{2/\delta}(K_0+K_1(\alpha_0))^{4/\delta}, 10^{1/\delta}, 10^{1/\alpha_0} \}.
\end{align*}

If for $1\leq l\leq k$, there exist  $W_l, R_l, Q_l\in \mathcal{M}$ and $D_{l-1}\in \mathcal{M}_0^\infty$  so that
\begin{align*}
Q_l^{-1}H_lQ_l=D+R_l,\ H_l= \sum_{i=1}^l(T_{i-1}+D_{i-1})+D
\end{align*}
with
\begin{align*}
Q_0&=I, D_0=0, T_0=S_{\theta_0}T,\\
 Q_l&=Q_{l-1} V_l, V_l=I+W_l,\\
\|W_l\|_{s}&\leq \theta_{l-1}^{s-\alpha+\tau+4\delta}\ {\rm for}\ s\in[\alpha_0, \alpha_1],\\
\|V_l^{-1}-I\|_{s}&\leq 2K_1(s)\theta_{l-1}^{s-\alpha+\tau+4\delta}\ {\rm for}\ s\in[\alpha_0, \alpha_1],\\
 \|R_{l}\|_s&\leq \theta_{l}^{s-\alpha}\ {\rm for}\ s\in[\alpha_0, \alpha_1],\\
\|D_{l-1}\|_0&\leq 3\theta_{l-1}^{\alpha_0-\alpha}, 
\end{align*}
 where $ K_1(s), T_l$ are defined by \eqref{k1} and \eqref{Tl} respectively.
 Then there is some $C=C(\delta, \Theta, \tau, \gamma, \alpha, \alpha_0,\alpha_1)>0$ such that,  for $\theta_0\geq C$,  there exist $W_{k+1}, R_{k+1}, Q_{k+1}\in \mathcal{M}$ and  $D_{k}\in \mathcal{M}_0^\infty$ so that
\begin{align}\label{qhq1}
Q_{k+1}^{-1}H_{k+1}Q_{k+1}=D+R_{k+1},\ H_{k+1}=H_{k}+T_k+D_k
\end{align}
with
\begin{align*}
Q_{k+1}&=Q_kV_{k+1},\ V_{k+1}=I+W_{k+1},\\
\|W_{k+1}\|_{s}&\leq \theta_{k}^{s-\alpha+\tau+4\delta}\ {\rm for}\ s\in[\alpha_0, \alpha_1],\\
\|V_{k+1}^{-1}-I\|_{s}&\leq 2K_1(s)\theta_{k}^{s-\alpha+\tau+4\delta}\ {\rm for}\ s\in[\alpha_0, \alpha_1],\\
\|R_{k+1}\|_s&\leq \theta_{k+1}^{s-\alpha}\ {\rm for}\ s\in[\alpha_0, \alpha_1],\\
\|D_{k}\|_{0}&\leq 3\theta_{k}^{\alpha_0-\alpha}.
\end{align*}
\end{prop}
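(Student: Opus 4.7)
The plan is to carry out one Newton step of the conjugation, using the smoothing operator $S_{\theta_{k+1}}$ to manage the $\tau$-order loss of derivatives that arises when solving the homological equation, and propagating all estimates through the tame property of Lemma \ref{tame} together with the smoothing estimates of Lemma \ref{smooth}.

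\textbf{Step 1 (set up the homological equation).} I first compute
$$Q_{k+1}^{-1} H_{k+1} Q_{k+1} = V_{k+1}^{-1}\bigl(D + R_k + Q_k^{-1}(T_k + D_k) Q_k\bigr) V_{k+1}$$
and linearize in $W_{k+1}$ around the identity. The leading contribution of the conjugation is the commutator $[D,W_{k+1}]$. Writing $A_k := R_k + Q_k^{-1} T_k Q_k$, I set
$$D_k := \overline{A_k}, \qquad [D,W_{k+1}] = -S_{\theta_{k+1}}\bigl(A_k - D_k\bigr),$$
so that the right-hand side of the homological equation has vanishing main diagonal (the obstruction that $D_k$ is precisely designed to absorb).

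\textbf{Step 2 (solve the homological equation).} On the $\mathbf{k}$-diagonal ($\mathbf{k}\ne \mathbf{0}$) the equation becomes $(d_\mathbf{i}-d_{\mathbf{i}-\mathbf{k}})(W_{k+1})_{\mathbf{i},\mathbf{i}-\mathbf{k}} = -(\mathrm{RHS})_{\mathbf{i},\mathbf{i}-\mathbf{k}}$; using $D\in DC_\mathfrak{B}(\tau,\gamma)$, inversion costs $\mathfrak{B}$-norm at most $\gamma^{-1}|\mathbf{k}|^\tau$. Because the right-hand side is supported on $|\mathbf{k}|\le \theta_{k+1}$ thanks to $S_{\theta_{k+1}}$, this $\tau$-loss converts cleanly to a $\theta_{k+1}^\tau$ loss, giving
$$\|W_{k+1}\|_s \le C\gamma^{-1}\theta_{k+1}^{\tau}\,\|A_k\|_s, \qquad s\in[\alpha_0,\alpha_1].$$

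\textbf{Step 3 (estimate $A_k$ via tame and smoothing).} Lemma \ref{smooth} applied to $T_k=(S_{\theta_k}-S_{\theta_{k-1}})T$ gives $\|T_k\|_s \le C\theta_{k-1}^{\,s-\alpha-4\delta}\|T\|_{\alpha+4\delta}$ (the critical $4\delta$-buffer that will absorb the $\tau$-loss). The inductive assumptions control $\|R_k\|_s\le\theta_k^{s-\alpha}$ and $\|W_l\|_s\le\theta_{l-1}^{s-\alpha+\tau+4\delta}$, so Lemma \ref{tame}, applied iteratively to $Q_k=V_1\cdots V_k$ and to $Q_k^{-1} T_k Q_k$, yields $\|A_k\|_s \le C\theta_k^{\,s-\alpha-4\delta}$ up to absorbable constants (using $\alpha-\tau-7\delta>\alpha_0$ to keep the high-frequency $\|W_l\|_{\alpha_0}$ summable). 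Plugging into Step~2 and taking $\theta_0$ large to absorb $C$, I obtain the required
$$\|W_{k+1}\|_s \le \theta_{k+1}^{\,s-\alpha+\tau+4\delta}, \qquad \|D_k\|_0 \le 3\theta_k^{\,\alpha_0-\alpha};$$
the inverse estimate $\|V_{k+1}^{-1}-I\|_s\le 2K_1(s)\|W_{k+1}\|_s$ follows from the Neumann series, with the factor $2K_1(s)$ coming directly from the tame bound.

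\textbf{Step 4 (assemble the new error).} The remainder $R_{k+1}$ splits into three natural pieces: a \emph{substitution error} $(I-S_{\theta_{k+1}})(A_k-D_k)$, controlled by Lemma \ref{smooth} using the high-regularity bound at $\alpha_1$ (this is precisely where $\alpha_1\ge 2\alpha+\delta$ is used); a \emph{Newton quadratic error} consisting of terms of type $W_{k+1}\cdot X\cdot W_{k+1}$, dominated by Lemma \ref{tame} through $\|W_{k+1}\|_{\alpha_0}\|A_k\|_s+\|W_{k+1}\|_s\|A_k\|_{\alpha_0}$; and a \emph{commutator error} $V_{k+1}^{-1}(Q_k^{-1}D_kQ_k-D_k)V_{k+1}$, small because $\|D_k\|_0$ is small while $Q_k$ is close to $I$. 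Summing and using the exponential spacing $\theta_{k+1}=\Theta\cdot\theta_k$, each piece is bounded by $\theta_{k+1}^{\,s-\alpha}$ for $s\in[\alpha_0,\alpha_1]$ provided $\theta_0$ (hence the implicit constants) and $\Theta$ are large enough.

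The main obstacle is entirely the bookkeeping in Steps~3 and 4: one must verify that the $4\delta$ regularity buffer in $\|T\|_{\alpha+4\delta}$ together with the $7\delta$ total cushion in $\alpha>\alpha_0+\tau+7\delta$ are enough to absorb simultaneously the $\tau$-loss from the homological equation, the quadratic Newton error (which needs the base-norm control at $\alpha_0>d/2$ via Lemma \ref{tame}), and the substitution error (which needs high-frequency control up to $\alpha_1$). The assumption $\Theta\ge 8^{2/\delta}(K_0+K_1(\alpha_0))^{4/\delta}$ is tuned precisely so that all three errors beat the target $\theta_{k+1}^{\,s-\alpha}$ uniformly across $s\in[\alpha_0,\alpha_1]$, closing the induction.
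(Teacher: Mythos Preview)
Your overall strategy is sound and follows the same Newton--smoothing architecture as the paper, but there are two slips and one genuine difference in route worth flagging.

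The slips: first, with $H_{k+1}=H_k+T_k+D_k$ the first-order terms after conjugation by $Q_k$ are $A_k+Q_k^{-1}D_kQ_k$, so to cancel the diagonal you need $D_k\approx -\overline{A_k}$, not $+\overline{A_k}$; with your sign the linearized remainder carries $D_k+Q_k^{-1}D_kQ_k\approx 2D_k$, which is too large for the target $\theta_{k+1}^{s-\alpha}$. Second, the claim $\|A_k\|_s\le C\theta_k^{\,s-\alpha-4\delta}$ is false: the inductive bound $\|R_k\|_s\le\theta_k^{s-\alpha}$ already dominates, so the correct estimate is only $\|A_k\|_s\le C\theta_k^{s-\alpha}$ (the $4\delta$ buffer is in $T_k$, not in $R_k$). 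Fortunately your Step~2 inequality $\|W_{k+1}\|_s\le C\gamma^{-1}\theta_{k+1}^{\tau}\|A_k\|_s$ still closes with this weaker input once $\theta_0^\delta\ge C\gamma^{-1}\Theta^\tau$.

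The difference in route: you fix $D_k$ \emph{explicitly} and absorb the discrepancy $Q_k^{-1}D_kQ_k-D_k$ as a third remainder piece. The paper instead determines $D_k$ \emph{implicitly} by a Banach fixed-point argument (Lemma~\ref{fixlem}) so that $\overline{Q_k^{-1}(T_k+D_k)Q_k+R_k}=0$ holds exactly; then $G=Q_k^{-1}(T_k+D_k)Q_k+R_k$ has vanishing main diagonal and your commutator error never appears. Both schemes close under the same type of largeness conditions on $\theta_0$ (your extra term needs essentially $\theta_0^{-\kappa_1}\ge C\Theta^{\alpha-\alpha_0}$, which the paper imposes anyway for the quadratic error). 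Your variant is more elementary---no contraction lemma needed---at the price of one more error term in the bookkeeping; the paper's choice keeps the remainder decomposition to just the smoothing piece $(I-S_{\theta_{k+1}})G$ plus the two Newton errors $R_{(1)},R_{(2)}$.
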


\begin{proof}[\bf Proof of Proposition \ref{itprop}]

We first do some formal computations. Note that for $1\leq l\leq k,$  all the $ W_l, R_l, Q_l, D_{l-1}$ have been constructed so that
\begin{align*}
Q_k^{-1}H_kQ_k=D+R_k.
\end{align*}
We try to determine $W_{k+1}, R_{k+1}, D_{k}$ so that \eqref{qhq1} holds. For this purpose, we set
\begin{align*}
&\ \ \ V_{k+1}^{-1}Q_k^{-1}(H_k+T_k+D_{k})Q_kV_{k+1}\\
&=V_{k+1}^{-1}Q_k^{-1}H_kQ_kV_{k+1} +V_{k+1}^{-1}Q_k^{-1}(T_{k}+D_{k}) Q_kV_{k+1}\\
&=V_{k+1}^{-1}(D+R_k)V_{k+1}+V_{k+1}^{-1}Q_k^{-1}(T_{k}+D_{k}) Q_kV_{k+1}.
\end{align*}
We hope that $V_{k+1}=I+W_{k+1}$ and $W_{k+1}$ would be of order $O(\|R_k\|_s)$.
Then
\begin{align*}
V_{k+1}^{-1}(D+R_k)V_{k+1}
&=D+[D, W_{k+1}]+R_{k}+R_{(1)},
\end{align*}
where
\begin{align}
\nonumber R_{(1)}&=V_{k+1}^{-1}(D+R_k)V_{k+1}-D-[D, W_{k+1}]-R_{k}\\
\nonumber&=V_{k+1}^{-1}DV_{k+1}-D-[D, W_{k+1}]+V_{k+1}^{-1}R_kV_{k+1}-R_k\\
\label{R(1)}&=(V_{k+1}^{-1}-I)[D, W_{k+1}]+(V_{k+1}^{-1}-I)R_kW_{k+1}\\
\nonumber&\ \ \ +(V_{k+1}^{-1}-I)R_k+R_kW_{k+1}.
\end{align}
Then using the decomposition of $T$ (into $T_l$) may yield $\|T_k\|_s=O(\|R_k\|_s).$
As a result, we have
\begin{align}
V_{k+1}^{-1}Q_k^{-1}(T_{k}+D_{k}) Q_kV_{k+1}=Q_k^{-1}(T_{k}+D_{k}) Q_k+R_{(2)},
\end{align}
where
\begin{align}
\nonumber R_{(2)}&=V_{k+1}^{-1}Q_k^{-1}(T_{k}+D_{k}) Q_kV_{k+1}-Q_k^{-1}( T_{k}+D_{k}) Q_k\\
\label{R(2)}&=(V_{k+1}^{-1}-I)Q_k^{-1}(T_{k}+D_{k}) Q_kW_{k+1} \\
\nonumber&\ \ \ +(V_{k+1}^{-1}-I)Q_k^{-1}(T_{k}+D_{k}) Q_k\\
\nonumber&\ \ \ +Q_k^{-1}(T_{k}+D_{k}) Q_kW_{k+1}.
\end{align}
We may assume  prior that $D_{k}=O(\|R_k\|_s)$.
Then it suffices to eliminate terms of order $O(\|R_k\|_s)$. This leads to the following two homological equations
\begin{align}
\label{hme1}&\overline{Q_k^{-1}(D_k+ T_{k})Q_k+R_k}=0,\\
\label{hme2}&[D, W_{k+1}]+S_{\theta_{k+1}}(Q_k^{-1}(T_{k}+D_{k})Q_k+R_k)=0.
\end{align}
In \eqref{hme2} a smoothing operator $S_{\theta_{k+1}}$ is introduced so that one could deal with the \textit{loss of derivatives}.
If we can solve \eqref{hme1} and \eqref{hme2}, then we may obtain
\begin{align*}
Q^{-1}_{k+1}H_{k+1}Q_{k+1}=D+R_{k+1},
\end{align*}
where
\begin{align}\label{Rk1for}
R_{k+1}=(I-S_{\theta_{k+1}})(Q_k^{-1}(T_{k}+D_{k})Q_k+R_k)+R_{(1)}+R_{(2)}
\end{align}
with $\|R_{k+1}\|_s\leq \theta_{k+1}^{s-\alpha}$.

In the following the above arguments  will be made precisely. We first estimate $Q_{l}^{-1}, Q_l$ by using assumptions on $W_l$ ($l\leq k$). Then we determine $D_k$ by solving \eqref{hme1}. Next, we determine $W_{k+1}$ by solving \eqref{hme2}. Finally, we estimate the remainder $R_{k+1}.$

{\bf Estimate of $Q_l^{-1}, Q_l$ ($1\leq l\leq k$)}.

We start with some useful lemmas.
This first one is about the \textit{Tame} estimate on the product of operators.

\begin{lem}\label{tameprod}
Let  \begin{align}\label{C0}
C_0=K_0+K_1(\alpha_0)
\end{align}
with $K_0, K_1(\alpha_0)$ being given by \eqref{k0} and \eqref{k1},  respectively.  {Let $X, X_i\in\mathcal{M}^s$ for  $1\leq i\leq n$.}
Then for  $n\geq 1$ and $s\geq \alpha_0$, we have
\begin{align}
\label{Xin0}\|\prod_{i=1}^n X_i\|_{\alpha_0}&\leq C_0^{n-1}\prod_{i=1}^n \|X_i\|_{\alpha_0},\\
\label{Xins}\|\prod_{i=1}^n X_i\|_s&\leq nC_0^{n}K_1(s)\sum_{i=1}^n\left(\prod_{j\neq i} \|X_j\|_{\alpha_0}\right) \|X_i\|_s.
\end{align}
In particular,
\begin{align}
\label{Xn0}\|X^n\|_{\alpha_0}&\leq C_0^{n-1}\|X\|^n_{\alpha_0},\\
\label{Xns}\|X^n\|_s&\leq n^2C_0^{n}K_1(s)\|X\|^{n-1}_{\alpha_0} \|X\|_s.
\end{align}

\end{lem}

\begin{proof}
We first show that for all $n\geq 1$, \eqref{Xin0} holds true.
We prove it by induction. Using Lemma \ref{tame} with $s=\alpha_0$ yields for $n=2$,
\begin{align*}
\|X_1X_2\|_{\alpha_0}&\leq (K_0+K_1{(\alpha_0)})\|X_1\|_{\alpha_0}\|X_2\|_{\alpha_0}\\
&=C_0\|X_1\|_{\alpha_0}\|X_2\|_{\alpha_0}.
\end{align*}
Assume that
\begin{align}\label{xine1}
&\|\prod_{i=1}^n X_i\|_{\alpha_0}\leq C_0^{n-1}\prod_{i=1}^n \|X_i\|_{\alpha_0}.
\end{align}
Then by using Lemma \ref{tame} again, we obtain since \eqref{xine1} that
\begin{align*}
\|\prod_{i=1}^{n+1} X_i\|_{\alpha_0}&\leq K_0\|X_{n+1}\|_{\alpha_0}\|\prod_{i=1}^n X_i\|_{\alpha_0}+K_1(\alpha_0)\|\prod_{i=1}^n X_i\|_{\alpha_0}\|X_{n+1}\|_{\alpha_0}\\
&\leq C_0\|\prod_{i=1}^n X_i\|_{\alpha_0}\|X_{n+1}\|_{\alpha_0}\\
&\leq C_0^{n}\prod_{i=1}^{n+1}\|X_i\|_{\alpha_0}.
\end{align*}
This proves \eqref{Xin0} and then \eqref{Xn0}.

Next,  we try to prove \eqref{Xins}. We also prove it by induction. For $n=2$, we have by Lemma \ref{tame}, $K_0>1, K_1>1$ that
\begin{align*}
\|X_1X_2\|_{s}\leq C_0K_{1}(s)(\|X_1\|_{\alpha_0}\|X_2\|_s+\|X_1\|_{s}\|X_2\|_{\alpha_0}).
\end{align*}
Assume that
\begin{align}\label{xine2}
\|\prod_{i=1}^n X_i\|_s\leq nC_0^{n}K_1(s)\sum_{i=1}^n\left(\prod_{j\neq i} \|X_j\|_{\alpha_0}\right) \|X_i\|_s.
\end{align}
Then by Lemma \ref{tame}, \eqref{Xin0} and \eqref{xine2}, we get
\begin{align*}
\|\prod_{i=1}^{n+1} X_i\|_s&\leq K_0\|X_{1}\|_{\alpha_0}\|\prod_{i=2}^{n+1} X_i\|_s+K_1(s)\|X_{1}\|_s\|\prod_{i=2}^{n+1} X_i\|_{\alpha_0}\\
&\leq nK_0C_0^{n}K_1(s)\sum_{i=2}^{n+1}\left(\prod_{2\leq j\neq i\leq n+1} \|X_j\|_{\alpha_0}\right) \|X_i\|_s\|X_{1}\|_{\alpha_0}\\
&\ \ \ +C_0^{n-1}K_1(s)\prod_{i=2}^{n+1} \|X_i\|_{\alpha_0}\|X_{1}\|_s\\
&\leq \left(nC_0^{n+1}K_1(s)+C_0^{n-1}K_1(s)\right)\sum_{i=1}^{n+1}\left(\prod_{1\leq j\neq i\leq n+1} \|X_j\|_{\alpha_0}\right) \|X_i\|_s\\
&\leq (n+1)C_0^{n+1}K_1(s)\sum_{i=1}^{n+1}\left(\prod_{1\leq j\neq i\leq n+1} \|X_j\|_{\alpha_0}\right) \|X_i\|_s.
\end{align*}
Obviously, \eqref{Xns} is a direct corollary of \eqref{Xins}.

The proof is finished.
\end{proof}

Now we are ready to estimate $Q_l^{-1}, Q_l$ ($1\leq l\leq k$)
\begin{lem}\label{Qllem}
Assume that
\begin{align}
\label{Theta1}&{\Theta^{\delta/2}\geq 8C_0^2},\\
\label{alpha1}& {\alpha>\alpha_0+\tau+5\delta},
\end{align}
where $C_0$ is given by \eqref{C0}. Then there exists
\begin{align*}
C=C(\delta, \tau, \alpha_0, \alpha_1)>0
\end{align*}
such that the following holds: If $\theta_0\geq C$, then for all $1\leq l\leq k$, we have
\begin{align}
\label{qlinlem}\|Q_l^{-1}\|_s&\leq \theta_{l-1}^{(s-\alpha+\tau+4\delta)_++\delta}\ {\rm for}\  s \in[\alpha_0, \alpha_1], \\
\label{qllem}\|Q_l\|_s&\leq \theta_{l-1}^{(s-\alpha+\tau+4\delta)_++\delta}\ {\rm for}\  s \in[\alpha_0, \alpha_1],\\
\label{qlinqlin1}\|Q_{l}^{-1}-Q_{l-1}^{-1}\|_s&\leq \theta_{l-1}^{s-\alpha+\tau+6\delta}\ {\rm for}\  s \in[\alpha_0, \alpha_1],\\
\label{qlql1}\|Q_l-Q_{l-1}\|_s&\leq \theta_{l-1}^{s-\alpha+\tau+6\delta}\ {\rm for}\  s \in[\alpha_0, \alpha_1],
\end{align}
where $x_+=x$ if $x\geq 0$ and $x_+=0$ if $x<0.$
\end{lem}

\begin{proof}
By induction assumptions, we have for $1\leq l\leq k,$
\begin{align}\label{vlin}
\|V_l^{-1}\|_s\leq 1+2K_1(s)\theta_{l-1}^{s-\alpha+\tau+4\delta},\ \|V_l^{-1}-I\|_s\leq 2K_1(s)\theta_{l-1}^{s-\alpha+\tau+4\delta},
\end{align}
which together with  \eqref{Xins} yields
\begin{align}
\nonumber\|Q_l^{-1}\|_s\leq \|\prod_{i=l}^{1}V_i^{-1}\|_s&\leq lC_0^lK_1(s)\sum_{i=1}^l\left(\prod_{j\neq i}\|V_j^{-1}\|_{\alpha_0}\right)\|V_i^{-1}\|_s\\
\nonumber&\leq lC_0^l2^{l-1}K_1(s)\sum_{i=1}^l\|V_i^{-1}\|_s\ ({\rm since}\ 1+2K_1(\alpha_0)\theta_0^{\alpha_0-\alpha+\tau+4\delta}\leq 2)\\
\label{qlin}&\leq l^2C_0^l2^{l-1}K_1(s)\max_{1\leq i\leq l}\|V_i^{-1}\|_s.
\end{align}
 Let
\begin{align*}
1<C_0\leq \theta_0^{\delta/2}.
\end{align*}
Since \eqref{Theta1}, we have for any $l\geq 2,$
\begin{align}
\nonumber l^2C_0^l2^{l-1}&=(l^{\frac{2}{l-1}}C_0^{\frac{l}{l-1}}2)^{l-1}\leq (8C_0^2)^{l-1}\\
\label{2lthta}&\leq(\theta_0^{\delta/2}\Theta^{\delta/2})^{l-1}=\theta_{l-1}^{\delta/2}.
\end{align}
Since \eqref{qlin} and \eqref{2lthta}, we can obtain for
\begin{align*}
\alpha_0\leq s<\alpha-\tau-4\delta
\end{align*}
that
\begin{align}
\nonumber\|Q_l^{-1}\|_s&\leq l^2C_0^l2^{l-1}K_1(s)(1+2K_1(s)\theta_0^{s-\alpha+\tau+4\delta})\\
\nonumber&\leq l^2C_0^l2^{l-1}K_1(s)(1+2K_1(s))\\
\nonumber&\leq K_1(s)(1+2K_1(s))\theta_{l-1}^{\delta/2} \\
\nonumber&\leq K_1(s)(1+2K_1(s))\theta_0^{-\delta/2}\theta_{l-1}^{\delta}\\
\label{qlinthta1}&\leq \theta_{l-1}^{\delta}\ ({\rm if}\ \theta_0\geq C(\delta, \tau, \alpha_0, \alpha_1)>0).
\end{align}
Similarly,  for
\begin{align*}
\alpha-\tau-4\delta\leq s\leq \alpha_1,
\end{align*}
we obtain
\begin{align}
\nonumber\|Q_l^{-1}\|_s&\leq l^2C_0^l2^{l-1}K_1(s)(1+2K_1(s)\theta_{l-1}^{s-\alpha+\tau+4\delta})\\
\nonumber&\leq 3K_1^2(s)\theta_{l-1}^{\delta/2}\theta_{l-1}^{s-\alpha+\tau+4\delta}\\
\nonumber&\leq 3K_1^2(s)\theta_0^{-\delta/2}\theta_{l-1}^{s-\alpha+\tau+5\delta}\\
\label{qlinthta2}&\leq \theta_{l-1}^{s-\alpha+\tau+5\delta}\ ({\rm if}\ \theta_0\geq C(\delta, \tau, \alpha_0, \alpha_1)>0).
\end{align}
Combining \eqref{qlinthta1} and \eqref{qlinthta2} implies \eqref{qlinlem}. The estimate of $Q_l$ is similar and easier.

We now turn to the differences. Note that
\begin{align*}
Q^{-1}_l-Q_{l-1}^{-1}=(V_l^{-1}-I)Q_{l-1}^{-1},
\end{align*}
which together with  \eqref{tame1}, \eqref{vlin} and \eqref{qlinlem} yields for $l\geq 2$,
\begin{align*}
\|Q^{-1}_l-Q_{l-1}^{-1}\|_s&\leq K_0\|(V_l^{-1}-I)\|_{\alpha_0}\|Q_{l-1}^{-1}\|_s\\
&\ \ \ +K_1(s)\|(V_l^{-1}-I)\|_{s}\|Q_{l-1}^{-1}\|_{\alpha_0}\\
&\leq  2K_0K_1(\alpha_0)\theta_{l-2}^{(s-\alpha+\tau+4\delta)_{+}+\delta}\theta_{l-1}^{\alpha_0-\alpha+\tau+4\delta}\\
&\ \ \ +2K_1(s)\theta_{l-1}^{s-\alpha+\tau+4\delta}\theta_{l-2}^{\delta}\\
&\leq \theta_{l-1}^{s-\alpha+\tau+6\delta},
\end{align*}
where we use the fact $\theta_0\geq C(\delta, \tau, \alpha_0,\alpha_1)>0$ and
\begin{align*}
\theta_{l-2}^{\delta}\theta_{l-1}^{\alpha_0-\alpha+\tau+4\delta}&\leq \theta_{l-1}^{s-\alpha+\tau+6\delta}\theta_0^{\alpha_0-s-\delta}\  {\rm for}\ s-\alpha+\tau+4\delta<0,\\
\theta_{l-1}^{s-\alpha+\tau+4\delta}\theta_{l-2}^{\delta}&\leq \theta_{l-1}^{s-\alpha+\tau+6\delta}\theta_{0}^{-\delta}.
\end{align*}
Finally, if $\theta_0\geq C(\delta, \tau, \alpha_0, \alpha_1)>0$, then
\begin{align*}
\|Q_1^{-1}-Q_0^{-1}\|_s&=\|(I+W_1)^{-1}-I\|_s\\
&\leq 2K_1(s)\theta_0^{s-\alpha+\tau+4\delta}\\
 &\leq \theta_0^{s-\alpha+\tau+6\delta},
\end{align*}
which implies \eqref{qlinqlin1}. The estimate of $Q_l-Q_{l-1}$ (i.e., \eqref{qlql1}) is similar and easier.

This finishes the proof.
\end{proof}

{\bf Estimate of $D_k$}

 Note that if $X\in \mathcal{M}_0^\infty$, then $\|X\|_s=\|X\|_{0}$ for any $s\geq0$. Obviously, $\mathcal{M}^\infty_0$ is a Banach space in the $\|\cdot\|_{\alpha_0}$-norm.
\begin{lem}\label{fixlem}
Let $Q_, \  Q^{-1}\in \mathcal{M}^{\alpha_0}$ satisfy
\begin{align}\label{fixpc}
C_0\|Q-I\|_{\alpha_0}\leq \frac{1}{10},\ C_0\|Q^{-1}-I\|_{\alpha_0}\leq \frac{1}{10},
\end{align}
where $C_0$ is given by \eqref{C0}. Then for any $P,\  P'\in \mathcal{M}^{\alpha_0}$, the equation
\begin{align}\label{hme1lem}
\overline{Q^{-1}(X+P)Q+P'}=0
\end{align}
has a unique solution $X\in \mathcal{M}^\infty_0$ with
\begin{align}\label{fixse}
\|X\|_{\alpha_0}\leq 2(\|Q^{-1}PQ\|_{\alpha_0}+\|P'\|_{\alpha_0}),
\end{align}
where $\overline {X}\in \mathcal{M}_0^\infty$ denotes the diagonal part of some $X\in \mathcal{M}$.
\end{lem}
\begin{proof}
The proof is based on the Banach fixed point theorem. It is easy to see \eqref{hme1lem} is equivalent to
\begin{align}\label{newe}
X-\overline{(Q^{-1}(X+P)Q+P')}=X.
\end{align}
Define a map $f: \mathcal{M}_0^\infty\to \mathcal{M}_0^\infty$  given by
\begin{align*}
f(X)=X-\overline{(Q^{-1}(X+P)Q+P')}.
\end{align*}
To prove the existence and uniqueness of the solution of \eqref{newe}, it suffices to show $f$ is a contractive map. Obviously, we have
\begin{align}\label{yqiny}
Y-Q^{-1}YQ=-(Q^{-1}-I)Y(Q-I)-(Q^{-1}-I)Y-Y(Q-I).
\end{align}
Let $X', X''\in \mathcal{M}_0^\infty$ be arbitrary and $Y=X'-X''$.
This combining \eqref{yqiny} and  Lemma \ref{tame} implies
\begin{align*}
\|f(X')-f(X'')\|_{\alpha_0}&=\|\overline{(Q^{-1}-I)Y(Q-I)+(Q^{-1}-I)Y+Y(Q-I)}\|_{\alpha_0}\\
&\leq \|(Q^{-1}-I)Y(Q-I)+(Q^{-1}-I)Y+Y(Q-I)\|_{\alpha_0}\\
&\leq C_0^2\|Q-I\|_{\alpha_0}\|Q^{-1}-I\|_{\alpha_0}\|Y\|_{\alpha_0}\\
\ \ \ &+C_0\|Q-I\|_{\alpha_0}\|Y\|_{\alpha_0}+C_0\|Q^{-1}-I\|_{\alpha_0}\|Y\|_{\alpha_0}\\
&\leq\frac12\|Y\|_{\alpha_0}= \frac12\|X'-X''\|_{\alpha_0}\ ({\rm since}\ \eqref{fixpc}).
\end{align*}
We have shown that $f$ is a contractive map, and thus the existence and uniqueness of $X$.

Now we estimate the solution $X$. From \eqref{yqiny},  \eqref{fixpc} and Lemma \ref{tame}, we obtain
\begin{align*}
\|X\|_{\alpha_0}&=\|\overline{X-Q^{-1}XQ-Q^{-1}PQ-P'}\|_{\alpha_0}\\
&\leq\|(Q^{-1}-I)X(Q-I)+(Q^{-1}-I)X+X(Q-I)\|_{\alpha_0}\\
\ \ \ &\ \ \ +\|Q^{-1}PQ+P'\|_{\alpha_0}\\
&\leq \frac12\|X\|_{\alpha_0}+\|Q^{-1}PQ\|_{\alpha_0}+\|P'\|_{\alpha_0},
\end{align*}
which implies \eqref{fixse}.

We finish the proof

\end{proof}

Next, we estimate the decomposition $T_k=(S_{\theta_k}-S_{\theta_{k-1}})T.$
\begin{lem}\label{Tklem}
We have that
\begin{align}
\label{Tkls}&\|T_k\|_s\leq \theta_k^{s-s'} \|T\|_{s'}\ {\rm for}\  s\geq s'\geq 0,\\
\label{Tkss}&\| T_k\|_s\leq \theta_{k-1}^{s-s'}\|T\|_{s'}\ {\rm for}\  \alpha_0\leq s\leq s'.
\end{align}
In particular, if
\begin{align}\label{Theta2}
{\|T\|_{s'+\delta}\leq 1,\ \theta_0\geq\Theta^{(s'+\delta-\alpha_0)/\delta}},
\end{align}
then
\begin{align}
\label{Tklsd}\| T_k\|_{s}&\leq \theta_{k}^{s-s'-\delta}\ {\rm for}\  s\geq s'+\delta,\\
\label{Tkssd}\| T_k\|_{s}&\leq \theta_{k}^{s-s'} \ {\rm for}\ s\in[\alpha_0, s'+\delta).
\end{align}
\end{lem}

\begin{proof}
The proof of \eqref{Tkls} and \eqref{Tkss} is trivial and relies on  properties of smoothing operators (i.e.,  \eqref{smooth1} and \eqref{smooth2} of Lemma \ref{smooth}). Now we turn to the proof of \eqref{Tklsd} and \eqref{Tkssd}.
Note that  $\|T\|_{s'+\delta}\leq 1$, which implies \eqref{Tklsd}. While for $\alpha_0\leq s< s'+\delta$, we obatin
\begin{align}
\nonumber \|T_k\|_s&\leq\theta_{k-1}^{s-s'-\delta}=\Theta^{s'+\delta-s}\theta_k^{s-s'-\delta}\\
\nonumber&\leq \theta_0^{-\delta}\Theta^{s'+\delta-\alpha_0}\theta_k^{s-s'} \\
\nonumber&\leq\theta_k^{s-s'},
\end{align}
where in the last inequality we use \eqref{Theta2}.
\end{proof}

Now we estimate  $D_k$.

\begin{lem}\label{Dklem}
Let
\begin{align}
\label{alpha2}&{\kappa=-\alpha+\alpha_0+\tau+6\delta<0},\\
\label{T1}&{\|T\|_{\alpha+4\delta}\leq 1},\\
\label{Theta3}&{\theta_0^\delta\geq \Theta^{\alpha+4\delta-\alpha_0}}.
\end{align}
Then there is a $C=C(\kappa, \delta, \alpha_0)>0$ such that if $\theta_0\geq C$, the homological equation \eqref{hme1} admits a unique solution $D_k\in \mathcal{M}_0^\infty$ satisfying
\begin{align*}
\|D_k\|_{\alpha_0}\leq 3\theta_{k}^{\alpha_0-\alpha}.
\end{align*}

\end{lem}

\begin{proof}
The proof is based on Lemma \ref{Qllem} and  \ref{fixlem}. We set $P= T_k, P'=R_k$ and $Q=Q_k$ in Lemma \ref{fixlem}. It suffices to check the conditions \eqref{fixpc}.
From Lemma \ref{Qllem}, we have since \eqref{qlinqlin1} and \eqref{Theta1} that
\begin{align*}
\|Q_k^{-1}-I\|_{\alpha_0}&\leq \sum_{l=1}^{k}\theta_{l-1}^{\kappa}=\theta_0^{\kappa}\sum_{l=1}^{k} \Theta^{(l-1)\kappa}\\
&\leq \theta_0^{\kappa}\sum_{l=1}^{\infty} (8C_0^2)^{\frac{2\kappa(l-1)}{\delta}}\\
&\leq \theta_0^{\kappa}C(\kappa, \delta, \alpha_0)\leq \frac{1}{10C_0},
\end{align*}
where in the last inequality we use $\theta_0\geq C(\delta, \kappa, \alpha_0)>0.$
Similarly, we obtain $\|Q_k-I\|_{\alpha_0}\leq \frac{1}{10C_0}.$ Thus the conditions of Lemma \ref{fixlem} are verified.

Now let $s'=\alpha+3\delta$ in \eqref{Tkssd} of Lemma \ref{Tklem}. We have since \eqref{T1} and Lemma \ref{Tklem} that
\begin{align*}
\|P\|_s=\|T_k\|_s\leq \theta_{k}^{s-\alpha-3\delta}\ {\rm for}\ s\in [\alpha_0, \alpha+4\delta),
\end{align*}
which together with Lemma \ref{fixlem} implies
\begin{align*}
\|D_k\|_{\alpha_0}&\leq 2(\|Q_k^{-1}PQ_k\|_{\alpha_0}+\|P'\|_{\alpha_0})\\
&\leq 2C_0^2\|Q_k\|_{\alpha_0}\|Q_k^{-1}\|_{\alpha_0}\|P\|_{\alpha_0}+2\theta_k^{\alpha_0-\alpha}\\
&\leq 2C_0^2\theta_{k-1}^{2\delta}\|P\|_{\alpha_0}+2\theta_k^{\alpha_0-\alpha}\\
&\leq 4C_0^2\theta_{k-1}^{2\delta}\theta_k^{\alpha_0-\alpha-3\delta}+2\theta_k^{\alpha_0-\alpha}\\
&= 4C_0^2\theta_0^{-\delta}\theta_k^{\alpha_0-\alpha}+2\theta_k^{\alpha_0-\alpha}\leq 3\theta_k^{\alpha_0-\alpha},
\end{align*}
where in the last inequality we use $\theta_0\geq C(\delta, \alpha_0)>0$.
\end{proof}

{\bf Estimate of $W_{k+1}$}

Now we try to solve the homological equation \eqref{hme2}. The \textit{loss of derivatives} appears in this step. We first establish some useful estimates.
\begin{lem}\label{QDTlem}
Under the assumptions of Lemma \ref{Dklem} and assuming further
\begin{align}\label{alpha3}
{\kappa_1=\alpha_0-\alpha+\tau+7\delta<0},
\end{align}
then we have for $T_k=(S_{\theta_k}-S_{\theta_{k-1}})T,$
\begin{align}
\label{qktkqk}&\|Q_k^{-1}T_kQ_k\|_s\leq \theta_k^{s-\alpha}\ {\rm for}\ s\in [\alpha_0, \alpha_1],\\
\label{qkdkqks}&\|Q_k^{-1}D_kQ_k\|_s\leq \theta_k^{\alpha_0-\alpha+3\delta}\ {\rm for}\ s\in [\alpha_0, \alpha-\tau-4\delta),\\
\label{qkdkqkl}&\|Q_k^{-1}D_kQ_k\|_s\leq \theta_k^{s-\alpha}\ {\rm for}\ s\in [\alpha-\tau-4\delta, \alpha_1].
\end{align}

\end{lem}

\begin{proof}
From  \eqref{Xins}, \eqref{Tklsd} and \eqref{Tkssd}, we obtain for $s\in [\alpha_0, \alpha_1]$,
\begin{align*}
\| Q_k^{-1}T_kQ_k\|_s&\leq 3C_0^3\|Q_k^{-1}\|_{\alpha_0}\|Q_k^{-1}\|_{\alpha_0}\|T_k\|_s\\
&\ \ \ +3C_0^3\|Q_k\|_{\alpha_0}\|T_k\|_{\alpha_0}\|Q_k^{-1}\|_s\\
&\ \ \ +3C_0^3\|Q_k^{-1}\|_{\alpha_0}\|T_k\|_{\alpha_0}\|Q_k\|_s\\
&\leq 3C_0^3\theta_k^{s-\alpha-3\delta}\theta_{k-1}^{2\delta}+6C_0^3\theta_k^{\alpha_0-\alpha-3\delta}\theta_{k-1}^{(s-\alpha+\tau+4\delta)_++2\delta}\\
&\leq 3C_0^3\theta_k^{s-\alpha-\delta}+6C_0^3\theta_{k}^{(s-\alpha+\tau+4\delta)_++\alpha_0-\alpha-\delta}\\
&\leq 3C_0^3\theta_k^{s-\alpha-\delta}+6C_0^3\theta_{k}^{s-\alpha-\delta}\   ({\rm since} \ \kappa_1<0)\\
&\leq 9C_0^3\theta_0^{-\delta}\theta_k^{s-\alpha}\leq\theta_k^{s-\alpha} \ ({\rm if}\ \theta_0\geq C(\delta, \alpha_0)>0),
\end{align*}
which implies \eqref{qktkqk}. Similarly, we obtain
\begin{align}
\nonumber\|Q_k^{-1}D_kQ_k\|_s&\leq 3C_0^3\|Q_k^{-1}\|_{\alpha_0}\|Q_k\|_{\alpha_0}\|D_k\|_s\\
\nonumber&\ \ \ +3C_0^3\|Q_k\|_{\alpha_0}\|D_k\|_{\alpha_0}\|Q_k^{-1}\|_s\\
\nonumber&\ \ \ +3C_0^3\|Q_k^{-1}\|_{\alpha_0}\|D_k\|_{\alpha_0}\|Q_k\|_s\\
\nonumber&\leq 9C_0^3\theta_k^{\alpha_0-\alpha}\theta_{k-1}^{2\delta}+18C_0^3\theta_k^{\alpha_0-\alpha}\theta_{k-1}^{(s-\alpha+\tau+4\delta)_++2\delta}\\
\label{qkindkqk}&\leq 9C_0^3\theta_k^{\alpha_0-\alpha+2\delta}+18C_0^3\theta_{k}^{(s-\alpha+\tau+4\delta)_++\alpha_0-\alpha+2\delta}.
\end{align}
Then if $\alpha_0\leq s<\alpha-\tau-4\delta$, we obtain since \eqref{qkindkqk} that
\begin{align*}
\|Q_k^{-1}D_kQ_k\|_s\leq 27C_0^3\theta_0^{-\delta}\theta_k^{\alpha_0-\alpha+3\delta}\leq \theta_k^{\alpha_0-\alpha+3\delta}\ {\rm if}\ \theta_0\geq C(\delta, \alpha_0)>0.
\end{align*}
If $\alpha-\tau-4\delta\leq s\leq \alpha_1$, then
\begin{align}
\nonumber\alpha_0-\alpha+2\delta&=s-\alpha+(-s+\alpha_0+2\delta)\\
\nonumber&\leq s-\alpha+(\alpha_0-\alpha+\tau+6\delta)\\
\label{appk1}&\leq s-\alpha-\delta,\\
 \nonumber(s-\alpha+\tau+4\delta)_++\alpha_0-\alpha+2\delta&=s-\alpha+(\alpha_0-\alpha+\tau+6\delta)\\
 \label{appk2}&\leq s-\alpha-\delta,
\end{align}
where in \eqref{appk1} and \eqref{appk2} we use the fact $\kappa_1<0.$ Hence if $\alpha-\tau-4\delta\leq s\leq \alpha_1$, we have 
\begin{align*}
\|Q_k^{-1}D_kQ_k\|_s\leq 27C_0^3\theta_0^{-\delta}\theta_k^{s-\alpha}\leq \theta_k^{s-\alpha}\ {\rm if}\ \theta_0\geq C(\delta, \alpha_0)>0.
\end{align*}

This proves \eqref{qkdkqks} and \eqref{qkdkqkl}.

\end{proof}

We are ready to solve the homological equation \eqref{hme2}. Write
\begin{align}
G=Q_k^{-1}(T_k+D_k)Q_k+R_k.
\end{align}
Since Lemma \ref{Dklem}, we have $\overline{G}=0$. By Lemma \ref{QDTlem}, we obtain
\begin{align}\label{Gs}
\|G\|_s\leq 3\theta_k^{s-\alpha+3\delta}\ {\rm for}\ s\in[\alpha_0, \alpha_1].
\end{align}

\begin{lem}\label{Wk1lem}
The homological equation \eqref{hme2} admits a unique solution $W=W_{k+1}$ which is given by
\begin{align}\label{Wk1for}
W_{\mathbf{i},\mathbf{j}}=\frac{(S_{\theta_{k+1}}G)_{\mathbf{i},\mathbf{j}}}{d_{\mathbf{j}}-d_\mathbf{i}}\ {\rm for}\ \mathbf{i}\neq \mathbf{j},\ \overline{W}=0.
\end{align}
Under the assumptions of Lemma \ref{QDTlem} and assuming further
\begin{align}\label{Theta4}
{\theta_0^{\delta}\geq3\gamma^{-1} \Theta^\tau},
\end{align}
then
\begin{align}\label{Wk1}
\|W_{k+1}\|_s\leq \theta_k^{s-\alpha+\tau+4\delta}\ {\rm for}\ s\in [\alpha_0, \alpha_1].
\end{align}
\end{lem}

\begin{proof}
Note that $[D, W]_{\mathbf{i},\mathbf{j}}=(d_\mathbf{i}-d_\mathbf{j})W_{\mathbf{i},\mathbf{j}}$,  $d_\mathbf{i}-d_\mathbf{j}\neq 0$ for $\mathbf{i}\neq \mathbf{j}$,  and $\overline{G}=0$. Then \eqref{Wk1for} holds true.

We then turn to the estimate. Recall that $h=(d_\mathbf{i})_{\mathbf{i}\in\Z^d}$ is a $(\tau, \gamma)$-distal sequence, i.e.,
 \begin{align}\label{distale}
 \|(h-\sigma_\mathbf{p}h)^{-1}\|_{\mathfrak{B}}\leq \gamma^{-1}\langle \mathbf{p}\rangle^{\tau}\ {\rm for}\ \mathbf{p}\neq\mathbf{0}.
 \end{align}
Let $G_\mathbf{p}$ be the $\mathbf{p}$-diagonal of $G$ (i.e., $G_\mathbf{p}(\mathbf{i})=G_{\mathbf{i},\mathbf{i}-\mathbf{p}}$). Then we obtain since \eqref{distale} and  \eqref{sob} that
 \begin{align*}
\|W\|_s^2&\leq\sum_{0<|\mathbf{p}|\leq \theta_{k+1}}\|G_{\mathbf{p}}\|_\mathfrak{B}^2\|(h-\sigma_\mathbf{p}h)^{-1}\|_\mathfrak{B}^{2}\langle \mathbf{p}\rangle^{2s}\\
&\leq \gamma^{-2}\sum_{0<|\mathbf{p}|\leq \theta_{k+1}}\|G_{\mathbf{p}}\|_\mathfrak{B}^2\langle \mathbf{p}\rangle^{2s+2\tau}\\
&=\gamma^{-2}\|S_{\theta_{k+1}}G\|_{s+\tau}^2,
\end{align*}
which implies
\begin{align*}
\|W_{k+1}\|_s\leq \gamma^{-1}\|S_{\theta_{k+1}}G\|_{s+\tau}.
\end{align*}
Thus recalling \eqref{Gs}, we have for $\alpha_0\leq s\leq \alpha_1-\tau$ (i.e., $s+\tau\leq \alpha_1$),
\begin{align*}
\|W_{k+1}\|_s&\leq \|G\|_{s+\tau}\\
 &\leq 3\gamma^{-1}\theta_{k}^{s-\alpha+\tau+3\delta}\\
&\leq3\gamma^{-1}\theta_0^{-\delta} \theta_{k}^{s-\alpha+\tau+4\delta}\\
&\leq \theta_{k}^{s-\alpha+\tau+4\delta}\ ({\rm since}\ \eqref{Theta4}).
\end{align*}
If $\alpha_1-\tau\leq s\leq \alpha_1$ (i.e., $s+\tau\in [\alpha_1, \alpha_1+\tau]$), then we have
\begin{align*}
\|W_{k+1}\|_s&\leq \gamma^{-1}\|S_{\theta_{k+1}}G\|_{s+\tau}\\
 &\leq\gamma^{-1}\theta_{k+1}^{\tau}\|G\|_{s}\\
&\leq3\gamma^{-1}\theta_{k+1}^{\tau} \theta_{k}^{s-\alpha+3\delta}\\
&= 3\gamma^{-1} \Theta^\tau\theta_0^{-\delta} \theta_k^{s-\alpha+\tau+4\delta}\\
&\leq \theta_{k}^{s-\alpha+\tau+4\delta}\ ({\rm since}\ \eqref{Theta4}).
\end{align*}

This proves \eqref{Wk1}.
\end{proof}

{\bf Estimate of $V_{k+1}^{-1}$}

We first introduce a perturbation argument.

\begin{lem}\label{neulem}
Let  $C_0$ be given by \eqref{C0} and  $W\in \mathcal{M}^s$  with $s\geq\alpha_0$. If
\begin{align}\label{w0s}
4C_0^2\|W\|_{\alpha_0}\leq 1/2,
\end{align}
then we have that $V=I+W$ is invertible in $\mathcal{M}^s$ and
\begin{align*}
\|V^{-1}\|_s&\leq 1+2K_1(s)\|W\|_s\ (s>\alpha_0),\\
\|V^{-1}\|_{\alpha_0}&\leq 2,\\
\|V^{-1}-I\|_s&\leq 2K_1(s)\|W\|_s.
\end{align*}
where $K_1(s)$ is given by \eqref{k1} and $I$ denotes the identity operator.
\end{lem}

\begin{proof}
The proof is based on the Neumann series argument and Lemma \ref{tameprod}.  From \eqref{Xins}, we have for $n\geq 2,$
\begin{align}
\nonumber\|W^n\|_{s}&\leq n^2C_0^{n}K_1(s)\|W\|^{n-1}_{\alpha_0}\|W\|_s\\
\label{wn0}&\leq (4C_0^{2}\|W\|_{\alpha_0})^{n-1}K_1(s)\|W\|_s,
\end{align}
where we use the fact  $n^{{2}}\leq 4^{n-1}$ and $n\geq 2$. Then by \eqref{w0s} and \eqref{wn0},
\begin{align*}
\|W^n\|_s\leq 2^{-(n-1)}K_1(s)\|W\|_s\ {\rm for}\ n\geq 2,
\end{align*}
which implies
\begin{align*}
\sum_{n=0}^{\infty}\|W^n\|_s&\leq 1+\|W\|_s+\sum_{n\geq 2}2^{-(n-1)}K_1(s)\|W\|_s\\
&\leq 1+2K_1(s)\|W\|_s<\infty,\\
\sum_{n=1}^{\infty}\|W^n\|_s&\leq \|W\|_s+\sum_{n\geq 2}2^{-(n-1)}K_1(s)\|W\|_s\\
&\leq  2K_1(s)\|W\|_s.
\end{align*}
Since $\mathcal{M}^s$ is a Banach space, applying the standard Neumann series argument shows
\begin{align*}
(I+W)^{-1}&=\sum_{n=0}^{\infty}(-W)^n\in\mathcal{M}^s,\ \|(I+W)^{-1}\|_s\leq 1+2K_1(s)\|W\|_s,\\
\|(I+W)^{-1}-I\|_s&\leq 2K_1(s)\|W\|_s,\ \|(I+W)^{-1}\|_{\alpha_0}\leq 2.
\end{align*}

\end{proof}

Then we estimate $V_{k+1}^{-1}$ with $V_{k+1}=I+W_{k+1}$ and $W_{k+1}$ being given by Lemma \ref{Wk1lem}.

\begin{lem}\label{Vk1lem}
Under the assumptions of Lemma \ref{Wk1lem}, we have
\begin{align}
\|V_{k+1}^{-1}\|_s&\leq 1+2K_1(s)\theta_k^{s-\alpha+\tau+4\delta}\ {\rm for} \ s\in [\alpha_0, \alpha_1],\\
\|V_{k+1}^{-1}-I\|_s&\leq 2K_1(s)\theta_k^{s-\alpha+\tau+4\delta}\ {\rm for} \ s\in [\alpha_0, \alpha_1].
\end{align}
\end{lem}
\begin{proof}
This is a direct consequence of Lemma \ref{neulem} and  \ref{Wk1lem}.
\end{proof}

{\bf Estimate of $R_{k+1}$}

Now we estimate the reminders. Recalling \eqref{Rk1for}, we have
\begin{align*}
R_{k+1}&=R'_{k+1}+R_{k+1}'',\\
R_{k+1}'&=(I-S_{\theta_{k+1}})\left( Q_k^{-1}(T_k+D_k)Q_k+R_k\right),\\
R_{k+1}''&=R_{(1)}+R_{(2)},
\end{align*}
where $R_{(1)}, R_{(2)}$ are given by \eqref{R(1)} and \eqref{R(2)}, respectively. Note that $R_{k+1}'$ comes from the smoothing procedure, while $R_{k+1}''$ is the standard Newton error.

We first estimate $R_{k+1}'$.

\begin{lem}\label{Rk1lem1}
Under the assumptions of Lemma \ref{Wk1lem} and assuming further
\begin{align}
\label{alpha11}&\alpha_1\geq 2\alpha+\delta,\\
\label{Theta5}&\max{(\Theta^{-\alpha_0}, \Theta^{-\delta})}\leq1/10,
\end{align}
then for $\theta_0\geq C(\delta, \tau, \alpha_0)>0,$ we have
\begin{align}\label{Rk1e1}
\|R_{k+1}'\|_s\leq \frac{1}{2}\theta_{k+1}^{s-\alpha}\ {\rm for}\ s\in [\alpha_0, \alpha_1].
\end{align}
\end{lem}
\begin{proof}
First, we estimate $\|(I-S_{\theta_{k+1}})R_k\|_s$. We have two cases.
\begin{itemize}
\item[\textbf{Case 1.}] $s\in[\alpha+\delta, \alpha_1].$ In this case we have
\begin{align}
\nonumber\|(I-S_{\theta_{k+1}})R_k\|_s&\leq\|R_k\|_s\leq \theta_k^{s-\alpha}\\
\nonumber&=\Theta^{\alpha-s}\theta_{k+1}^{s-\alpha}\\
\nonumber&\leq \Theta^{-\delta}\theta_{k+1}^{s-\alpha}\\
\label{Rk1ls}&\leq \frac{1}{10}\theta_{k+1}^{s-\alpha}\ ({\rm since }\ \eqref{Theta5}).
\end{align}
\item[\textbf{Case 2.}]$s\in[\alpha_0, \alpha+\delta).$ In this case we have since \eqref{alpha11} that
\begin{align}
\nonumber\|(I-S_{\theta_{k+1}})R_k\|_s&\leq\theta_{k+1}^{-\alpha}\|R_k\|_{s+\alpha}\\
\nonumber&\leq \theta_{k+1}^{-\alpha}\theta_{k}^{s}\\
\nonumber&=\Theta^{-s}\theta_{k+1}^{s-\alpha}\\
\label{Rk1ss}&\leq \Theta^{-\alpha_0}\theta_{k+1}^{s-\alpha}\leq \frac{1}{10}\theta_{k+1}^{s-\alpha}\ ({\rm since }\ \eqref{Theta5}).
\end{align}
\end{itemize}

Next, recalling \eqref{qktkqk}, we can prove similarly
\begin{align*}
\|(I-S_{\theta_{k+1}})Q_k^{-1}T_kQ_k\|_s\leq \frac{1}{10}\theta_{k+1}^{s-\alpha}\ {\rm for}\ s\in [\alpha_0, \alpha_1].
\end{align*}

Finally, for $\|(I-S_{\theta_{k+1}})Q_k^{-1}D_kQ_k\|_s$, we have
\begin{itemize}
\item[\textbf{Case 1.}] $s\in[\alpha+\delta, \alpha_1].$ Similar to the proof of \eqref{Rk1ls},  we have in this case
\begin{align*}
\|(I-S_{\theta_{k+1}})Q_k^{-1}D_kQ_k\|_s\leq\frac{1}{10}\theta_{k+1}^{s-\alpha}.
\end{align*}
\item[\textbf{Case 2.}]$s\in[\alpha-\tau-4\delta, \alpha+\delta).$ Similar to the proof of \eqref{Rk1ss}, we have in this case
\begin{align*}
\|(I-S_{\theta_{k+1}})Q_k^{-1}D_kQ_k\|_s\leq\frac{1}{10}\theta_{k+1}^{s-\alpha}.
\end{align*}
\item[\textbf{Case 3.}]$s\in[\alpha_0, \alpha-\tau-4\delta).$ In this case we have since $s+\alpha\geq \alpha_0+\alpha>\alpha-\tau-4\delta$ that
\begin{align*}
\|(I-S_{\theta_{k+1}})Q_k^{-1}D_kQ_k\|_s&\leq\theta_{k+1}^{-\alpha}\|Q_k^{-1}D_kQ_k\|_{s+\alpha}\\
&\leq \theta_{k+1}^{-\alpha}\theta_{k}^{s}\ ({\rm since }\ \eqref{qkdkqkl})\\
&\leq \Theta^{-\alpha_0}\theta_{k+1}^{s-\alpha}\leq \frac{1}{10}\theta_{k+1}^{s-\alpha}\ ({\rm if }\ \Theta^{-\alpha_0}\leq1/10).
\end{align*}
\end{itemize}

This finishes the proof of \eqref{Rk1e1}.
\end{proof}

Now we estimate $R_{k+1}''$. We have
\begin{lem}\label{Rk1lem2}
Under the assumptions of Lemma \ref{Wk1lem} and assuming further
\begin{align}\label{Theta6}
\theta_0^{-\kappa_1}\geq C(\alpha_0,\alpha_1)\Theta^{\alpha-\alpha_0-\kappa_1},
\end{align}
where $\kappa_1<0$ is given by \eqref{alpha3},
then we have
\begin{align*}
\|R_{k+1}''\|_s\leq \frac12\theta_{k+1}^{s-\alpha}\  {\rm for}\ s\in[\alpha_0, \alpha_1].
\end{align*}
\end{lem}

\begin{proof}
From \eqref{Gs} and Lemma \ref{Wk1lem}, we obtain
\begin{align*}
\|[D, W_{k+1}]\|_s&\leq \|S_{\theta_{k+1} }G\|_s\\
&\leq \|G\|_s\leq 3\theta_k^{s-\alpha+3\delta}\ (s\in[\alpha_0, \alpha_1]),
\end{align*}
which together with Lemma \ref{tame} and Lemma \ref{Vk1lem} implies for $s\in [\alpha_0, \alpha_1]$,
\begin{align*}
\|R_{(1)}\|_s&=\|(V_{k+1}^{-1}-I)[D, W_{k+1}]\|_s\\
&\ \ \ +\|(V_{k+1}^{-1}-I)R_kW_{k+1}\|_s\\
&\ \ \ +\|(V_{k+1}^{-1}-I)R_k\|_s+\|R_kW_{k+1}\|_s\\
&\leq C(s, \alpha_0)\theta_k^{s-2\alpha+\alpha_0+\tau+7\delta}\\
&\ \ \ +C(s, \alpha_0)\theta_k^{s-3\alpha+2\alpha_0+2\tau+8\delta}\\
&\ \ \ +C(s, \alpha_0)\theta_k^{s-2\alpha+\alpha_0+\tau+4\delta}\\
&\leq C(s, \alpha_0)\theta_k^{s-\alpha+\kappa_1} \ ({\rm since}\ \kappa_1<0)\\
&\leq C(s, \alpha_0)\Theta^{-(s-\alpha+\kappa_1)}\theta_{k+1}^{s-\alpha+\kappa_1}\\
&\leq C(\alpha_1, \alpha_0) \Theta^{\alpha-\kappa_1-\alpha_0}\theta_{0}^{\kappa_1}\theta_{k+1}^{s-\alpha}\\
&\leq \frac14\theta_{k+1}^{s-\alpha}\ ({\rm since}\ \eqref{Theta6}).
\end{align*}
Similarly, we have by recalling
\begin{align*}
\|Q_k^{-1}(T_{k}+D_{k})Q_k\|_s\leq 2\theta_k^{s-\alpha+3\delta}\  {\rm for}\ s\in[\alpha_0, \alpha_1]
\end{align*}
that
\begin{align*}
\|R_{(2)}\|_s&=\|(V_{k+1}^{-1}-I)Q_k^{-1}(T_{k}+D_{k})Q_kW_{k+1}\|_s\\
&\ \ \ +\|(V_{k+1}^{-1}-I)Q_k^{-1}(T_{k}+D_{k})Q_k\|_s\\
&\ \ \ +\|Q_k^{-1}(T_{k}+D_{k})Q_kW_{k+1}\|_s\\
&\leq C(s, \alpha_0)\theta_k^{s-3\alpha+2\alpha_0+2\tau+11\delta}\\
&\ \ \ +C(s, \alpha_0)\theta_k^{s-2\alpha+\alpha_0+\tau+7\delta}\\
&\leq C(s, \alpha_0)\theta_k^{s-\alpha+\kappa_1}\  ({\rm since}\ \kappa_1<0)\\
&\leq \frac14\theta_{k+1}^{s-\alpha}.
\end{align*}

This finishes the proof. 
\end{proof}

Combining Lemma \ref{Wk1lem},  \ref{Vk1lem},   \ref{Rk1lem1} and  \ref{Rk1lem2} then completes the proof of Proposition \ref{itprop}.

\end{proof}

\subsection{ The initial step}

Now we turn to the initial step. 
\begin{prop}\label{prop0}
Assume that
\begin{align}
\label{alpha0}&-\alpha+\alpha_0+\tau+3\delta<0,\\
\label{T2}&\|T\|_{\alpha+3\delta}\leq\theta_0^{\alpha_0-\alpha}\leq 1,\\
\label{Theta0}&\theta_0^{\delta}\geq C(\alpha_0, \alpha_1)\Theta^{\alpha-\alpha_0+\delta}.
\end{align}
Then there is some $C=C(\delta, \tau, \gamma, \alpha_0, \alpha_1)>0$ such that for  $\theta_0\geq C$,
there exist $W_1,  R_1\in \mathcal{M}$ with $V_1=I+W_1$  such that
\begin{align}\label{H1}
V_1^{-1}(T_0+D)V_1=D+R_1,
\end{align}
where
\begin{align}
\label{W1s}\|W_1\|_s &\leq \theta_0^{s-\alpha+\tau+\delta}\ {\rm for}\ s\in [\alpha_0, \alpha_1],\\
\label{V1ins}\|V_1^{-1}\|_s&\leq  \theta_0^{(s-\alpha+\tau+\delta)_++\delta}\ {\rm for}\ s\in [\alpha_0, \alpha_1],\\
\label{V1ins1}\|V_1^{-1}-I\|_s&\leq \theta_0^{s-\alpha+\tau+2\delta}\ {\rm for}\ s\in [\alpha_0, \alpha_1],\\
\label{R1s}\|R_1\|_s&\leq \theta_1^{s-\alpha}\ {\rm for}\ s\in [\alpha_0, \alpha_1].
\end{align}

\end{prop}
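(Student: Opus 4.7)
The plan for Proposition \ref{prop0} is to carry out a simplified version of the inductive step from Proposition \ref{itprop}, exploiting the fact that $T_0 = S_{\theta_0}T$ is already supported on $|\mathbf{i}-\mathbf{j}| \leq \theta_0 < \theta_1$, so that the outer smoothing operator $S_{\theta_1}$ used in \eqref{hme2} acts trivially here. Setting $V_1 = I + W_1$ with $\overline{W_1} = 0$, I would solve the homological equation
$$[D, W_1] + (T_0 - \overline{T_0}) = 0,$$
which entrywise gives $(W_1)_{\mathbf{i},\mathbf{j}} = -(T_0)_{\mathbf{i},\mathbf{j}}/(d_\mathbf{i} - d_\mathbf{j})$ for $\mathbf{i} \neq \mathbf{j}$. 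The absence of an auxiliary $D_0$ at this stage (recall $D_0=0$) means the diagonal leftover $\overline{T_0}$ must be absorbed into the remainder $R_1$.

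First I would estimate $W_1$ via the $(\tau,\gamma)$-distal bound on $D$, exactly as in the proof of Lemma \ref{Wk1lem}, obtaining $\|W_1\|_s \leq \gamma^{-1}\|T_0\|_{s+\tau}$. Combining with the smoothing estimates of Lemma \ref{smooth} applied to $T_0 = S_{\theta_0}T$ (splitting at the threshold $s+\tau \lessgtr \alpha+3\delta$) and with the smallness hypothesis \eqref{T2}, this yields \eqref{W1s} provided $\theta_0^\delta$ is large enough to absorb $\gamma^{-1}$, as guaranteed by the assumption $\theta_0 \geq C(\delta,\tau,\gamma,\alpha_0,\alpha_1)$. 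Next, condition \eqref{alpha0} ensures that $\|W_1\|_{\alpha_0}$ is small enough to invoke Lemma \ref{neulem} on $V_1 = I+W_1$, which then yields \eqref{V1ins} and \eqref{V1ins1} for $V_1^{-1}$ via its Neumann series representation.

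Finally, the remainder is obtained by direct expansion: setting $(T_0+D)V_1 = V_1(D+R_1)$ and substituting $[D,W_1] = -(T_0 - \overline{T_0})$, one finds the closed form
$$R_1 = V_1^{-1}\bigl(\overline{T_0} + T_0 W_1\bigr).$$
I would then estimate $\|R_1\|_s$ via the tame estimate of Lemma \ref{tame}, combining: (a) the smoothing bounds on $T_0$, (b) the estimate on $W_1$ just obtained, and (c) the control on $V_1^{-1}$ from step two. The ``quadratic'' term $V_1^{-1} T_0 W_1$ is straightforward, and its contribution lies well below the target \eqref{R1s} by virtue of \eqref{alpha0}.

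The main obstacle is controlling the diagonal leftover $V_1^{-1}\overline{T_0}$ in the $s$-norm at the lower end of the range (near $s=\alpha_0$), because $\overline{S_{\theta_0}T} = \overline{T}$ carries no intrinsic smoothing gain from $S_{\theta_0}$. Here one must use the smallness hypothesis \eqref{T2} together with the relation \eqref{Theta0} between $\theta_0$ and $\Theta$: roughly, \eqref{Theta0} forces $\Theta^{\alpha-\alpha_0+\delta} \leq \theta_0^\delta/C$, which is precisely what is needed to convert a bound of order $\theta_0^{\alpha_0-\alpha}$ obtained from \eqref{T2} into one of the form $\theta_1^{s-\alpha}$ at the price of a factor $\Theta^\delta/\theta_0^\delta \leq 1/C$. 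Verifying this conversion in all subcases $s \in [\alpha_0, \alpha_1]$, and checking that the implicit constants can be absorbed into $\theta_0^{-\delta}$ by taking $\theta_0$ sufficiently large, is the technical heart of the proof.
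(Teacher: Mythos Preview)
Your outline for $W_1$, $V_1^{-1}$, and the quadratic piece $V_1^{-1}T_0W_1$ is essentially the paper's argument. The gap is in the treatment of what you call ``the main obstacle,'' the diagonal leftover $V_1^{-1}\overline{T_0}$: this term simply does not exist, because the paper uses $\overline{T}=0$ (stated at the very start of the proof), hence $\overline{T_0}=\overline{S_{\theta_0}T}=0$, the homological equation reads $[D,W_1]+T_0=0$, and the remainder is the clean quadratic expression $R_1=V_1^{-1}T_0W_1$.

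Your proposed mechanism for bounding $V_1^{-1}\overline{T_0}$ would in fact fail. Since $\overline{T_0}$ is diagonal, $\|\overline{T_0}\|_s=\|\overline{T}\|_0$ for every $s$, and the only available bound is $\|\overline{T}\|_0\le\|T\|_{\alpha+3\delta}\le\theta_0^{\alpha_0-\alpha}$ from \eqref{T2}. At $s=\alpha_0$ the target in \eqref{R1s} is $\theta_1^{\alpha_0-\alpha}=\theta_0^{\alpha_0-\alpha}\Theta^{\alpha_0-\alpha}$, which is \emph{smaller} than $\theta_0^{\alpha_0-\alpha}$ by the factor $\Theta^{\alpha_0-\alpha}<1$. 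Condition \eqref{Theta0} goes the wrong way here: it says $\Theta^{\alpha-\alpha_0}$ can be as large as $\theta_0^{\delta}/C$, so the ratio $\theta_0^{\alpha_0-\alpha}/\theta_1^{\alpha_0-\alpha}=\Theta^{\alpha-\alpha_0}$ is not bounded by $1$ but may grow with $\theta_0$. No choice of $\theta_0$ large fixes this. The correct resolution is to observe (or assume WLOG, absorbing any diagonal part of $T$ into $D_+$) that $\overline{T}=0$, after which your outline coincides with the paper's proof.
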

\begin{proof}
 Note that $\overline{T}=0$. Then
\begin{align*}
[D, W_1]+ T_0=0
\end{align*}
has a unique solution $W_1.$  We also have
\begin{align}
\nonumber V_1^{-1} (T_0+D)V_1-D&=V_1^{-1}([D, W_1]+T_0V_1)\\
\label{R1for}&=V_1^{-1}T_0 W_1=R_1,
\end{align}
which yields \eqref{H1}.

Next, we estimate $W_1, V_1, R_1$. Because of $T_0=S_{\theta_0}T$,  we obtain for $s\geq \alpha+3\delta,$
\begin{align}\label{T0l}
\| T_0\|_s\leq \theta_0^{s-\alpha-3\delta}\|T\|_{\alpha+3\delta}\leq \theta_0^{s-\alpha-3\delta}\ ({\rm since}\ \eqref{T2}).
\end{align}
If $s<\alpha+3\delta$, we also have by \eqref{T2} that
\begin{align*}
\|T_0\|_s&\leq \|T\|_{\alpha+3\delta}\leq \theta_0^{\alpha_0-\alpha}\leq \theta_0^{s-\alpha}\ ( {\rm since}\ s\geq \alpha_0),
\end{align*}
which together with \eqref{T0l} implies
\begin{align}\label{T0s}
\|T_0\|_s\leq \theta_0^{s-\alpha}\ {\rm for}\ s\in [\alpha_0, \alpha_1].
\end{align}
Similar to the proof of Lemma \ref{Wk1lem}, we obtain
\begin{align*}
\|W_1\|_s\leq \gamma^{-1}\varepsilon \|T_0\|_{s+\tau},
\end{align*}
which implies for $s+\tau\in[\alpha_0+\tau, \alpha_1]$ and $\theta_0\geq C(\gamma, \delta)>0$,
\begin{align*}
\|W_1\|_s&
\leq \gamma^{-1}\theta_0^{-\delta}\theta_0^{s-\alpha+\tau+\delta}\\
&\leq \theta_0^{s-\alpha+\tau+\delta}.
\end{align*}
If $s+\tau\in[\alpha_1, \alpha_1+\tau]$,
we have since \eqref{T2} that
\begin{align*}
\|W_1\|_s
&\leq \gamma^{-1}\theta_0^{\tau}\|T_0\|_{s}\\
&\leq  \gamma^{-1}\theta_0^{\tau+s-\alpha}\\
&\leq \theta_0^{s-\alpha+\tau+\delta}.
\end{align*}
This proves \eqref{W1s}. Now, since Lemma \ref{neulem}, we have for $\theta_0\geq C(\delta, \tau, \alpha_0, \alpha_1)>0,$
\begin{align*}
\|V_1^{-1}\|_s&\leq 1+2K_1(s)\theta_0^{s-\alpha+\tau+\delta}\leq \theta_0^{(s-\alpha+\tau+\delta)_++\delta},\\
\|V_1^{-1}-I\|_s&\leq 2K_1(s)\theta_0^{s-\alpha+\tau+\delta}\leq \theta_0^{s-\alpha+\tau+2\delta},
\end{align*}
which yields \eqref{V1ins} and \eqref{V1ins1}. Finally, we estimate $R_1$. From \eqref{R1for}, \eqref{T0s}, \eqref{V1ins} and Lemma \ref{tame}, we have
\begin{align}
\nonumber\|R_1\|_s&=\|V_1^{-1}T_0W_1\|_s\\
\nonumber&\leq C(\alpha_0, s) \theta_0^{(s-\alpha+\tau+\delta)_+-2\alpha+2\alpha_0+\tau+2\delta} \\
\label{R1se}&\ \ \ +C(\alpha_0, s)\theta_0^{s-2\alpha+\alpha_0+\tau+2\delta}.
\end{align}
If $\alpha_0\leq s<\alpha-\tau-\delta$, then we obtain since \eqref{alpha0} that
\begin{align*}
(s-\alpha+\tau+\delta)_+-2\alpha+2\alpha_0+\tau+2\delta&=-2\alpha+2\alpha_0+\tau+2\delta\\
&=s-\alpha+(-\alpha+2\alpha_0+\tau+2\delta-s)\\
&\leq s-\alpha+(-\alpha+\alpha_0+\tau+2\delta) \ ({\rm since}\ -s\leq -\alpha_0)\\
&\leq s-\alpha-\delta.
\end{align*}
If $\alpha-\tau-\delta\leq s\leq \alpha_1$, we also have by \eqref{alpha0} that
\begin{align*}
(s-\alpha+\tau+\delta)_+-2\alpha+2\alpha_0+\tau+2\delta&=s-3\alpha+2\alpha_0+2\tau+3\delta\\
&=s-\alpha+(-2\alpha+2\alpha_0+2\tau+3\delta)\\
&\leq s-\alpha-3\delta.
\end{align*}
Thus recalling \eqref{R1se}, we have
\begin{align*}
\|R_1\|_s&\leq C(\alpha_0, s) \theta_0^{s-\alpha-\delta} \\
&=C(\alpha_0, s)\Theta^{-(s-\alpha-\delta)}\theta_1^{s-\alpha-\delta}\\
&\leq C(\alpha_0, \alpha_1)\Theta^{\alpha+\delta-\alpha_0}\theta_0^{-\delta}\theta_1^{s-\alpha}\\
&\leq \theta_1^{s-\alpha}\ ({\rm since}\ \eqref{Theta0}).
\end{align*}
This proves \eqref{R1s}.

\end{proof}

\section{The iteration theorem}
In this section we combine Proposition \ref{itprop} and \ref{prop0} to  establish the iteration theorem.

Fix any $\delta>0, \alpha_0>d/2, \tau>0, \gamma>0$. We  collect the conditions imposed on parameters $\Theta, \alpha, \alpha_1, \theta_0$.
Then
\begin{itemize}
\item We let (recalling \eqref{Theta1} and \eqref{Theta5})
\begin{align*}
\Theta=\Theta(\delta, \alpha_0)=\max\{8^{2/\delta}C_0^{4/\delta}, 10^{1/\delta}, 10^{1/\alpha_0} \},
\end{align*}
where $C_0=C_0(\alpha_0)$ is given by \eqref{C0}.
\item We let (recalling \eqref{alpha3})
\begin{align*}
\alpha>\alpha_0+\tau+7\delta.
\end{align*}
\item We let (recalling \eqref{alpha11})
\begin{align*}
\alpha_1\geq 2\alpha+\delta.
\end{align*}
\item We let (recalling \eqref{T1} and  \eqref{T2})
\begin{align*}
\|T\|_{\alpha+4\delta}\leq\theta_0^{\alpha_0-\alpha}\leq 1.
\end{align*}
\item We also assume $\theta_0$ is large enough, i.e.,
\begin{align*}
\theta_0\geq C(\delta, \tau, \gamma, \alpha, \alpha_0, \alpha_1)>0.
\end{align*}
\end{itemize}

\begin{thm}\label{itthm}
Fix $\delta>0, \alpha_0>d/2, \tau>0, \gamma>0$. Assume that
\begin{align*}
&\alpha>\alpha_0+\tau+7\delta,\\
&\alpha_1\geq 2\alpha+\delta,\\
&\Theta=\Theta(\delta, \alpha_0)=\max\{8^{2/\delta}C_0^{4/\delta}, 10^{1/\delta}, 10^{1/\alpha_0} \},\\
&\|T\|_{\alpha+4\delta}\leq \theta_0^{\alpha_0-\alpha}\leq 1.
\end{align*}
Then there is some $C=C(\delta, \tau, \gamma, \alpha, \alpha_0, \alpha_1)>0$ such that for $\theta_0\geq C$,  the following holds true:  For any $k\geq1$ and $1\leq l\leq k$, there exist  $W_l, R_l, Q_l\in \mathcal{M}$ and $D_{l-1}\in \mathcal{M}_0^\infty$  so that
\begin{align}
\nonumber Q_l^{-1}H_lQ_l=D+R_l,\ H_l= \sum_{i=1}^l(T_{i-1}+D_{i-1})+D
\end{align}
with
\begin{align}
\nonumber Q_0&=I, D_0=0, T_0=S_{\theta_0}T,\\
 \nonumber Q_l&=Q_{l-1} V_l, V_l=I+W_l,\\
\nonumber\|W_l\|_{s}&\leq \theta_{l-1}^{s-\alpha+\tau+4\delta}\ {\rm for}\ s\in[\alpha_0, \alpha_1],\\
\nonumber\|V_l^{-1}-I\|_{s}&\leq 2K_1(s)\theta_{l-1}^{s-\alpha+\tau+4\delta}\ {\rm for}\ s\in[\alpha_0, \alpha_1],\\
\nonumber \|R_{l}\|_s&\leq \theta_{l}^{s-\alpha}\ {\rm for}\ s\in[\alpha_0, \alpha_1],\\
\label{Dl0}\|D_{l-1}\|_0&\leq 3\theta_{l-1}^{\alpha_0-\alpha}, 
\end{align}
 where $K_1(s), T_l$ are defined by \eqref{k1} and \eqref{Tl} respectively.

\end{thm}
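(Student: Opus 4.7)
The plan is to prove Theorem \ref{itthm} by induction on $k$, using Proposition \ref{prop0} for the base case $k=1$ and Proposition \ref{itprop} for the inductive step $k \mapsto k+1$. The bulk of the work is bookkeeping: checking that the standing assumptions of Theorem \ref{itthm} imply every condition required by both propositions, uniformly in $k$, and reconciling the two slightly different exponent conventions appearing in the $W_l$ and $V_l^{-1}$ bounds.

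For the base case, set $Q_0 = I$, $D_0 = 0$, $T_0 = S_{\theta_0}T$ and apply Proposition \ref{prop0}. Its hypothesis $-\alpha + \alpha_0 + \tau + 3\delta < 0$ is weaker than $\alpha > \alpha_0 + \tau + 7\delta$; its hypothesis $\|T\|_{\alpha+3\delta} \leq \theta_0^{\alpha_0 - \alpha}$ follows from $\|T\|_{\alpha+4\delta} \leq \theta_0^{\alpha_0-\alpha}$ together with monotonicity of the Sobolev norm in $s$; and the largeness condition on $\theta_0$ is absorbed into the single threshold $\theta_0 \geq C(\delta,\tau,\gamma,\alpha,\alpha_0,\alpha_1)$. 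The resulting $V_1^{-1}(T_0 + D)V_1 = D + R_1$ is precisely $Q_1^{-1} H_1 Q_1 = D + R_1$ once one notes that $H_1 = T_0 + D_0 + D = T_0 + D$. To match the bounds stated in Theorem \ref{itthm} at $l=1$, observe that $\theta_0 > 1$ and $\delta < 4\delta$ give $\theta_0^{s-\alpha+\tau+\delta} \leq \theta_0^{s-\alpha+\tau+4\delta}$ and $\theta_0^{s-\alpha+\tau+2\delta} \leq 2K_1(s)\theta_0^{s-\alpha+\tau+4\delta}$, so the Proposition \ref{prop0} bounds imply those of Theorem \ref{itthm}; the estimate $\|D_0\|_0 \leq 3\theta_0^{\alpha_0-\alpha}$ holds trivially since $D_0 = 0$.

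For the inductive step, suppose that $W_l, R_l, Q_l, D_{l-1}$ have been constructed for all $1 \leq l \leq k$ satisfying every bound in the theorem. Feed this data into Proposition \ref{itprop}. The parameter conditions $\alpha > \alpha_0 + \tau + 7\delta$, $\alpha_1 \geq 2\alpha + \delta$, the explicit lower bound on $\Theta$ (which matches the one fixed in the theorem), and $\|T\|_{\alpha + 4\delta} \leq \theta_0^{\alpha_0 - \alpha} \leq 1$ are exactly the standing hypotheses of Theorem \ref{itthm}; the remaining inductive hypotheses on $W_l, V_l^{-1}, R_l, D_{l-1}$ match the hypotheses of Proposition \ref{itprop} verbatim. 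The proposition then produces $W_{k+1}, V_{k+1} = I + W_{k+1}, R_{k+1}, D_k$ and $Q_{k+1} = Q_k V_{k+1}$ satisfying exactly the bounds needed to continue the induction by one more step.

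The only subtlety is that the largeness requirement on $\theta_0$ must be chosen \emph{once and for all}, independently of $k$. This is automatic because Proposition \ref{itprop} produces a threshold $C = C(\delta,\Theta,\tau,\gamma,\alpha,\alpha_0,\alpha_1)$ with no dependence on the iteration index, and $\Theta$ has already been fixed as a function of $\delta$ and $\alpha_0$; taking the maximum of this threshold with the one arising in the base case yields the single constant $C = C(\delta,\tau,\gamma,\alpha,\alpha_0,\alpha_1)$ asserted in the theorem, and the induction then goes through for all $k \geq 1$.
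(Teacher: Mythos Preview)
Your proposal is correct and follows exactly the approach of the paper, whose own proof of Theorem~\ref{itthm} is the single sentence ``The proof is based on a combination of Proposition~\ref{itprop} and~\ref{prop0}.'' You have simply filled in the bookkeeping (matching of hypotheses, reconciling the exponents $\tau+\delta$ and $\tau+2\delta$ from Proposition~\ref{prop0} with the $\tau+4\delta$ in the theorem, and the uniformity of the $\theta_0$-threshold in $k$) that the paper leaves implicit.
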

\begin{proof}
The proof is  based on a combination of Proposition \ref{itprop} and  \ref{prop0}.
\end{proof}

\section{ Proofs of Theorem \ref{mthm1} and  \ref{mthm2}}

\begin{proof}[ Proof of Theorem \ref{mthm1}]

It suffices to prove the convergence of
\begin{align*}
Q_k,\  Q_k^{-1},\  \sum_{i=1}^kD_{l-1},\ \sum_{i=1}^kT_{l-1}\ ({\rm as}\  k\to\infty)
\end{align*}
appeared in Theorem \ref{itthm}.
We set
\begin{align*}
\|T\|_{\alpha+4\delta}=\theta_0^{\alpha_0-\alpha},
\end{align*}
i.e.,
\begin{align*}
\theta_0^{-\delta}=\|T\|_{\alpha+4\delta}^{\frac{\delta}{\alpha-\alpha_0}}.
\end{align*}

We first prove the convergence and estimates of $Q_k$ and $Q_k^{-1}$. From the proof of Lemma \ref{Qllem}, we have
\begin{align*}
\|Q_{l}^{-1}-Q_{l-1}^{-1}\|_s\leq \theta_{l-1}^{s-\alpha+\tau+6\delta}\ (l\geq 1).
\end{align*}
Then
\begin{align*}
\sum_{l\geq 1}\|Q_{l}^{-1}-Q_{l-1}^{-1}\|_{\alpha-\tau-7\delta}&\leq \sum_{l\geq 1}\theta_{l-1}^{-\delta}\leq \theta_0^{-\delta}\sum_{l\geq 0}\Theta^{-l\delta}<\infty,
\end{align*}
which implies
\begin{align}
\nonumber&Q'=I+\sum_{l\geq1}(Q_l^{-1}-Q_{l-1}^{-1})\in \mathcal{M}^{\alpha-\tau-7\delta},\\
\label{q+in}&\|Q'-I\|_{\alpha-\tau-7\delta}\leq C(\delta, \alpha_0)\theta_0^{-\delta}\leq C\|T\|_{\alpha+4\delta}^{\frac{\delta}{\alpha-\alpha_0}}, \\
\nonumber&\|Q_k^{-1}-Q'\|_{\alpha-\tau-7\delta}\leq \sum_{l\geq k+1}\theta_{l-1}^{-\delta}\to 0\ ({\rm as}\  k\to\infty).
\end{align}
Similarly, we also have there is some $Q_+\in M^{\alpha-\tau-7\delta}$ so that
\begin{align}
\label{q+}&\|Q_+-I\|_{\alpha-\tau-7\delta}\leq C(\delta, \alpha_0)\theta_0^{-\delta}\leq C\|T\|_{\alpha+4\delta}^{\frac{\delta}{\alpha-\alpha_0}}, \\
\nonumber&\|Q_k-Q_+\|_{\alpha-\tau-7\delta}\to 0\ ({\rm as}\  k\to\infty).
\end{align}
It is easy to see $Q'=Q^{-1}_+.$

We then show the convergence of $\sum\limits_{l=1}^{k}D_{l-1}$. Recalling \eqref{Dl0}, we obtain
\begin{align*}
\sum_{l\geq 1}\|D_{l-1}\|_{0}&\leq \sum_{l\geq 1}3\theta_{l-1}^{\alpha_0-\alpha}\leq 3\theta_0^{\alpha_0-\alpha}\sum_{l\geq 0}\Theta^{-(\alpha-\alpha_0)l}\\
&\leq C(\delta, \alpha, \alpha_0)\theta_0^{\alpha_0-\alpha}.
\end{align*}
Then there is some $D_+\in \mathcal{M}_0^{\infty}$ so that
\begin{align*}
&\|D_+\|_{0}\leq C(\delta, \alpha, \alpha_0)\theta_0^{\alpha_0-\alpha}\leq C\|T\|_{\alpha+4\delta}^{\frac{1}{\alpha-\alpha_0}},\\
&\|\sum_{l= 1}^kD_{l-1}-D_+\|_{0}\leq  \sum_{l\geq {k+1}}3\theta_{l-1}^{\alpha_0-\alpha}\to 0\ ({\rm as}\  k\to\infty).
\end{align*}

Considering $\sum\limits_{l= 1}^kT_{l-1}$, we have
\begin{align*}
\sum_{l=1}^kT_{l-1}=S_{\theta_{k-1}}T,
\end{align*}
which implies
\begin{align*}
\|T-\sum_{l=1}^kT_{l-1}\|_{\alpha-\tau-7\delta}&=\|(I-S_{\theta_{k-1}})T\|_{\alpha-\tau-7\delta}\\
&\leq \theta_{k-1}^{-\tau-7\delta}\|T\|_{\alpha}\to 0\ ({\rm as}\ k\to\infty).
\end{align*}
Obviously, we obtain
\begin{align*}
\|R_k\|_{\alpha-7\delta}&\leq\theta_k^{-\tau-7\delta}\to 0\ ({\rm as}\ k\to\infty).
\end{align*}

Next, we will show
\begin{align}\label{qpdp1}
Q_+^{-1}(T+D+D_+)Q_+=D.
\end{align}
As mentioned above, $D$ is not necessary in $\mathcal{M}$. However, \eqref{qpdp1} is equivalent to
\begin{align*}
T+D_+&=Q_+ DQ_+^{-1}-D=-[D, Q_+]Q_+^{-1}.
\end{align*}
We will show  $[D, Q_+]\in\mathcal{M}^{\alpha-\tau-7\delta}.$
Note that
\begin{align*}
 \sum _{l=0}^kT_l+\sum_{l=0}^kD_l=-[D, Q_{k+1}]Q_{k+1}^{-1}+Q_{k+1}R_{k+1}Q_{k+1}^{-1},
\end{align*}
where
\begin{align}\label{bradq}
[D, Q_{l+1}]=[D, Q_l(I+W_{l+1})]=[D, Q_l]V_{l+1}+Q_l[D, W_{l+1}],\ [D, Q_0]=0.
\end{align}
Obviously, $[D, Q_+]\in\mathcal{M}.$  It suffices  to prove
\begin{align*}
\lim_{k\to\infty}\|Q_{k+1}R_{k+1}Q_{k+1}^{-1}\|_{\alpha-\tau-7\delta}=0,\\
\lim_{k\to\infty}\|[D, Q_+-Q_{k+1}]\|_{\alpha-\tau-7\delta}=0.
\end{align*}
It is easy to see
\begin{align*}
\|Q_{k+1}R_{k+1}Q_{k+1}^{-1}\|_{\alpha-\tau-7\delta}\leq C(\delta, \tau, \alpha)\theta_k^{-\tau-5\delta}\to0\ ({\rm as}\ k\to\infty).
\end{align*}
From \eqref{bradq} and Theorem \ref{itthm}, we have
\begin{align*}
\|[D, Q_{l}-Q_{l-1}]\|_{\alpha-\tau-7\delta}&\leq C\|[D, Q_{l-1}]\|_{\alpha-\tau-7\delta}\theta_{l-1}^{(\alpha-\tau-7\delta-\alpha)+\tau+4\delta}\\
&\ \ \ +C\theta_{l-1}^{(\alpha-\tau-7\delta)-\alpha+4\delta}\\
&\leq C\|[D, Q_{l-1}]\|_{\alpha-\tau-7\delta}\theta_{l-1}^{-3\delta}+C\theta_{l-1}^{-3\delta}\\
&\leq\|[D, Q_{l-1}]\|_{\alpha-\tau-7\delta}\theta_{l-1}^{-2\delta}+\theta_{l-1}^{-2\delta}\ ({\rm if}\ \theta_0>C).
\end{align*}
Let $a_l=\|[D, Q_{l}]\|_{\alpha-\tau-7\delta}$. Then $0\leq a_1\leq \theta_0^{-7\delta}$ and
\begin{align*}
a_l&\leq (1+\theta_{l-1}^{-2\delta})a_{l-1}+\theta_{l-1}^{-2\delta}\\
&\leq 2a_{l-1}+\theta_{l-1}^{-2\delta}\ ({\rm since}\ \theta_0>C)\\
&\leq 2^2a_{l-2}+2\theta_{l-1}^{-2\delta}+\theta_{l-2}^{-2\delta}\\
&\leq \cdots\\
&\leq 2^{l-1}a_1+2^{l-2}\theta_{1}^{-2\delta}+2^{l-3}\theta_{2}^{-2\delta}+\cdots+ \theta_{l-1}^{-2\delta}\\
&\leq \theta_{l-1}^{\delta}.
\end{align*}
As a result,
\begin{align*}
\|[D, Q_{l}]-[D, Q_{l-1}]\|_{\alpha-\tau-7\delta}&\leq\theta_{l-1}^{-\delta}+\theta_{l-1}^{-2\delta}\leq 2\theta_{l-1}^{-\delta},
\end{align*}
which implies 
\begin{align*}
\|[D, Q_+-Q_{k+1}]\|_{\alpha-\tau-7\delta}&\leq \sum_{l\geq k+2}\|[D, Q_l]-[D, Q_{l-1}]\|_{\alpha-\tau-7\delta}\\
&\leq 2\sum_{l\geq k+2}\theta_{l-1}^{-\delta}\\
&\leq C\theta_{k+1}^{-\delta}\to 0\ ({\rm as}\ k\to\infty).
\end{align*}
Thus $[D, Q_+]=[D, Q_+]+[D, Q_+-Q_{k+1}]\in \mathcal{M}^{\alpha-\tau-7\delta}$ and \eqref{qpdp1} follows.

It remains  to show if   both $T$ and $D$ are real symmetric, then $Q_+$  can be improved to become a unitary operator. Suppose now that
\begin{align*}
(d_\mathbf{i})_{\mathbf{i}\in\Z^d}\in\R^{\Z^d},\ T^t=T.
\end{align*}
 From \eqref{hme1}, we know that $D_+\in\R^{\Z^d}$.
Thus by taking transpose on both sides of \eqref{qpdp1}, we obtain
\begin{align*}
Q_+ DQ_+^{-1}=D+D_++ T=(Q^t_+)^{-1}DQ^t_+,
\end{align*}
which implies
\begin{align*}
D(Q_+^tQ_+)=(Q_+^tQ_+)D.
\end{align*}
As a result, we have
\begin{align}\label{qiqjdidj}
(Q_+^tQ_+)_{\mathbf{i},\mathbf{j}}(d_\mathbf{i}-d_\mathbf{j})=0.
\end{align}
However, since $(d_\mathbf{i})_{\mathbf{i}\in\Z^d}$ is a $(\tau, \gamma)$-distal sequence, we obtain in particular
\begin{align*}
(d_\mathbf{i}-d_\mathbf{j})\neq0 \ {\rm for}\  \mathbf{i}\neq \mathbf{j},
\end{align*}
which together with \eqref{qiqjdidj} yields
\begin{align*}
(Q_+^tQ_+)_{\mathbf{i},\mathbf{j}}=0 \ {\rm for}\  \mathbf{i}\neq \mathbf{j}.
\end{align*}
Thus we have shown $Q_+^tQ_+\in \mathcal{M}_0^{\infty}$. This means that we can set
\begin{align*}
U=Q_+(Q_+^tQ_+)^{-\frac12}.
\end{align*}
It is easy to check that $U$ is a unitary operator and
\begin{align*}
&\ \ \ U^{-1}(T+D+D_+)U\\
&=(Q_+^tQ_+)^{\frac12}(Q_+^{-1}(T+D+D_+)Q_+)(Q_+^tQ_+)^{-\frac12}\\
&=(Q_+^tQ_+)^{\frac12}D(Q_+^tQ_+)^{-\frac12}\\
&=D,
\end{align*}
where in the last equality we use the fact that $Q_+^tQ_+$ is a diagonal operator.

Finally, we estimate $U^t,  U$.
First, we observe that
\begin{align*}
Q_+^tQ_+&=I+(Q_+^t-I)(Q_+-I)+(Q_+^t-I)+(Q_+-I)\\
&:=I+P.
\end{align*}
From \eqref{q+in} and \eqref{q+}, we have
\begin{align}\label{p015}
\|P\|_{0}\leq C\theta_0^{-\delta}.
\end{align}
Since $Q_+^tQ_+$ is diagonal,  $P$ is diagonal as well. Then we can let $P={\rm diag}_{\mathbf{i}\in\Z^d}(p_\mathbf{i})$. It also follows from
\eqref{p015} that
\begin{align*}
\sup_{\mathbf{i}\in\Z^d}|p_\mathbf{i}|\leq C\theta_0^{-\delta}.
\end{align*}
Consequently, we have for $\theta_0\geq C$,
\begin{align*}
(Q_+^tQ_+)^{-\frac12}&={\rm diag}_{\mathbf{i}\in\Z^d}(\sqrt{1+p_\mathbf{i}})\\
&={\rm diag}_{\mathbf{i}\in\Z^d}\left(1+\frac12 p_\mathbf{i}+O(p_\mathbf{i}^2)\right)\\
&=I+P',
\end{align*}
where $P'\in \mathcal{M}_0^\infty$ satisfies $\|P'\|_{0}\leq C\theta_0^{-\delta} $.
Hence we have since Lemma \ref{tame} that
\begin{align*}
\|U^t-I\|_{\alpha-\tau-7\delta}
&=\|U-I\|_{\alpha-\tau-7\delta}\\
&=\|Q_+(Q_+^tQ_+)^{-\frac12}-I\|_{\alpha-\tau-7\delta}\\
&=\|Q_+-I-Q_+ P'\|_{\alpha-\tau-7\delta}\\
&\leq \|Q_+-I\|_{\alpha-\tau-7\delta}+\|Q_+ P'\|_{\alpha-\tau-7\delta}\\
&\leq C\theta_0^{-\delta}+C(\tau, \alpha, \alpha_0)\|Q_+\|_{\alpha-\tau-7\delta}\|P'\|_{\alpha-\tau-7\delta}\\
&\leq C\theta_0^{-\delta}+C\|Q_+\|_{\alpha-\tau-7\delta}\theta_0^{-\delta}\\
&\leq C\theta_0^{-\delta}\leq C\|T\|_{\alpha+4\delta}^{\frac{\delta}{\alpha-\alpha_0}}.
\end{align*}

This proves Theorem \ref{mthm1}.
\end{proof}

\begin{proof}[ Proof of Theorem \ref{mthm2}]
The proof is similar to that of Theorem \ref{mthm1}. Assume that at the $k$-th iteration step we have
\begin{align*}
Q_k^{-1}\left(\sum_{l=1}^kT_{l-1}+D\right)Q_k=D+ \sum_{l=1}^kD_{l-1}+R_k,
\end{align*}
where $D_{l-1}\in \mathcal{M}_0^{\infty}$ ($1\leq l\leq k$).  We want to find $Q_{k+1}=Q_k(I+W_{k+1})\in \mathcal{M}, \ D_k\in \mathcal{M}_0^\infty$ so that
\begin{align*}
Q_{k+1}^{-1}\left(\sum_{l=1}^{k+1}T_{l-1}+D\right)Q_{k+1}=D+ \sum_{l=1}^{k+1}D_{l-1}+R_{k+1}.
\end{align*}
Note that
\begin{align*}
&\ \ \ Q_{k+1}^{-1}\left(\sum_{l=1}^kT_{l-1}+T_k+D\right)Q_{k+1}\\
&=  (I+W_{k+1})^{-1}\left(D+\sum_{l=1}^kD_{l-1}+R_k\right)(I+W_{k+1})\\
&\ \ \ +Q_{k+1}^{-1}(T_k)Q_{k+1}\\
&=D+ \sum_{l=1}^kD_{l-1}+[D+\sum_{l=1}^kD_{l-1}, W_{k+1}]+R_k'+R_{k+1},
\end{align*}
where $S_{\theta_{k+1}}R_k=R_k'$ and $\|R_k'\|_s=O(\|R_k\|_s)$.
Our aim is to eliminate terms of order $O(\|R_k\|_s)$. Then it needs to solve the new homological equation
\begin{align}\label{hme3}
\left[D+\sum_{l=1}^kD_{l-1}, W_{k+1}\right]+R_k'-\overline{R_k'}=0,
\end{align}
which then implies
\begin{align*}
D_k=\overline{R_k'}.
\end{align*}
As compared with \eqref{hme2}, the main part of \eqref{hme3} becomes  $D+\sum\limits_{l=1}^kD_{l-1}$, rather than $D$!  Fortunately, we have a much stronger assumption, i.e.,
$D+D'\in DC_{\mathfrak{B}}(\tau,\gamma)$ for any $D'\in \mathcal{M}_0^\infty$ with $\|D'\|_0\leq \eta$ ($\eta>0$). It is easy to see
\begin{align*}
D+\sum_{l=1}^kD_{l-1}\in DC_{\mathfrak{B}}(\tau,\gamma)
\end{align*}
if $\theta_0\geq C(\eta)>0.$ As a result, the equation \eqref{hme3} can be solved almost the  same as that of \eqref{hme2}.

Once the equation \eqref{hme3} is solved and estimated, the remaining  issue is just to perform a similar iteration  as that in proving Theorem \ref{mthm1}. Thus we omit the details here.

\end{proof}

\section*{Acknowledgements}

This work was supported by the NSFC  (No. 12271380). The author would like to thank the editor and referees for their helpful suggestions. 

\section*{Data Availability}
The manuscript has no associated data.
\section*{Declarations}
\textbf{Conflict of interest} The  author states
that there is no conflict of interest.

\appendix


\section{}
In this appendix we prove Lemma \ref{tame}.
\begin{proof}[{ Proof of Lemma  \ref{tame}}]
The proof is  standard and is based on the H\"older inequality. For any $X\in \mathcal{M}^s$, recall that
\begin{align*}
X_\mathbf{k}=(X_\mathbf{k}(\mathbf{i}))_{\mathbf{i}\in\Z^d},\  X_\mathbf{k}(\mathbf{i})=X_{\mathbf{i}, \mathbf{i}-\mathbf{k}}.
\end{align*}
We first show if $Z=XY,$ then
\begin{align}\label{Zkfor}
Z_\mathbf{k}=\sum_{\mathbf{j}\in\Z^d}X_\mathbf{j}(\sigma_\mathbf{j}Y_{\mathbf{k}-\mathbf{j}}).
\end{align}
In fact, we have
\begin{align*}
Z_\mathbf{k}(\mathbf{i})=Z_{\mathbf{i},\mathbf{i}-\mathbf{k}}
=\sum_{\mathbf{j}\in\Z^d}X_{\mathbf{i},\mathbf{i}-\mathbf{j}}Y_{\mathbf{i}-\mathbf{j}, \mathbf{i}-\mathbf{k}}=\sum_{\mathbf{j}\in\Z^d}X_\mathbf{j}(\mathbf{i})(\sigma_{\mathbf{j}}Y_{\mathbf{k}-\mathbf{j}})(\mathbf{i}),
\end{align*}
which implies \eqref{Zkfor}.

Next, from \eqref{sob}, we obtain since \eqref{Zkfor}
\begin{align*}
\|Z\|_s^2=\sum_{\mathbf{k}\in\Z^d}\|Z_\mathbf{k}\|_\mathfrak{B}^2\langle \mathbf{k}\rangle^{2s}&=\sum_{\mathbf{k}\in\Z^d}\|\sum_{\mathbf{j}\in\Z^d}X_\mathbf{j}(\sigma_\mathbf{j}Y_{\mathbf{k}-\mathbf{j}})\|_\mathfrak{B}^2\langle \mathbf{k}\rangle^{2s}\\
&\leq \sum_{\mathbf{k}\in\Z^d}\left(\sum_{\mathbf{j}\in\Z^d}\|X_\mathbf{j}\|_\mathfrak{B}\|\sigma_\mathbf{j}Y_{\mathbf{k}-\mathbf{j}}\|_\mathfrak{B}\right)^2\langle \mathbf{k}\rangle^{2s}\\
&= \sum_{\mathbf{k}\in\Z^d}\left(\sum_{\mathbf{j}\in\Z^d}\|X_\mathbf{j}\|_\mathfrak{B}\|Y_{\mathbf{k}-\mathbf{j}}\|_\mathfrak{B}\right)^2\langle \mathbf{k}\rangle^{2s},
\end{align*}
where in the last equality we use the translation invariance of $\mathfrak{B}$. Let $a_\mathbf{k}=\|X_\mathbf{k}\|_\mathfrak{B},\  b_{\mathbf{k}}=\|Y_{\mathbf{k}}\|_\mathfrak{B}$. It suffices to study the sum
$
\left(\sum_{\mathbf{j}\in\Z^d}a_\mathbf{j}b_{\mathbf{k}-\mathbf{j}}\right)^2\langle \mathbf{k}\rangle^{2s}.
$
We have the following two cases.
\begin{itemize}
\item[{\bf Case 1.}] $\mathbf{j}\in \mathcal{I}_\mathbf{k}:=\{\mathbf{j}\in\Z^d:\ {\langle \mathbf{k}\rangle^{2s}}{\langle \mathbf{k}-\mathbf{j}\rangle^{-2s}}\leq 10\}.$ In this case we have
\begin{align}\label{inIk}
{\langle \mathbf{k}\rangle^{2s}}{\langle \mathbf{k}-\mathbf{j}\rangle^{-2s}}\langle \mathbf{j}\rangle^{-2\alpha_0}\leq 10\langle \mathbf{j}\rangle^{-2\alpha_0}.
\end{align}
Hence we have by using the H\"older inequality
\begin{align*}
\sum_{\mathbf{k}\in\Z^d}\left(\sum_{\mathbf{j}\in \mathcal{I}_\mathbf{k}}a_\mathbf{j}b_{\mathbf{k}-\mathbf{j}}\right)^2\langle \mathbf{k}\rangle^{2s}
&\leq\sum_{\mathbf{k}\in\Z^d}\left(\sum_{\mathbf{j}\in \mathcal{I}_\mathbf{k}}a_\mathbf{j}^2\langle \mathbf{j}\rangle^{2\alpha_0}\right)\left(\sum_{\mathbf{j}\in \mathcal{I}_\mathbf{k}}b_{\mathbf{k}-\mathbf{j}}^2\langle \mathbf{j}\rangle^{-2\alpha_0}\right) \langle \mathbf{k}\rangle^{2s}\\
&\leq\|X\|_{\alpha_0}^2\sum_{\mathbf{k}\in\Z^d}\left(\sum_{\mathbf{j}\in \mathcal{I}_\mathbf{k}}b_{\mathbf{k}-\mathbf{j}}^2\langle \mathbf{k}-\mathbf{j}\rangle^{2s}\langle \mathbf{j}\rangle^{-2\alpha_0}\langle \mathbf{k}\rangle^{2s}\langle \mathbf{k}-\mathbf{j}\rangle^{-2s}\right) \\
&\leq10\|X\|_{\alpha_0}^2\sum_{\mathbf{k}\in\Z^d}\left(\sum_{\mathbf{j}\in \mathcal{I}_\mathbf{k}}b_{\mathbf{k}-\mathbf{j}}^2\langle \mathbf{k}-\mathbf{j}\rangle^{2s}\langle \mathbf{j}\rangle^{-2\alpha_0}\right)\ ({\rm since}\ \eqref{inIk})\\
&\leq10\|X\|_{\alpha_0}^2\sum_{\mathbf{j}\in\Z^d}\langle \mathbf{j}\rangle^{-2\alpha_0}\left(\sum_{\mathbf{k}\in \Z^d}b_{\mathbf{k}-\mathbf{j}}^2\langle \mathbf{k}-\mathbf{j}\rangle^{2s}\right)\\
&\leq M_0^2\|X\|_{\alpha_0}^2\|Y\|_{s}^2,
\end{align*}
where $M_0=\sqrt{10\sum\limits_{\mathbf{k}\in\Z^d}\langle \mathbf{k}\rangle^{-2\alpha_0}}<\infty$ since $\alpha_0>d/2.$

\item[{\bf Case 2.}] $\mathbf{j}\notin \mathcal{I}_\mathbf{k}.$ In this case we must have $\mathbf{k}\neq \mathbf{0}.$ Then
\begin{align*}
\langle \mathbf{k}\rangle&>10^{\frac{1}{2s}}\langle \mathbf{k}-\mathbf{j}\rangle>10^{\frac{1}{2s}}|\mathbf{k}-\mathbf{j}|\geq 10^{\frac{1}{2s}}(|\mathbf{k}|-|\mathbf{j}|)\geq 10^{\frac{1}{2s}}(\langle \mathbf{k}\rangle-\langle \mathbf{j}\rangle),
\end{align*}
which yields
\begin{align}\label{ninIk}
\langle \mathbf{j}\rangle^{-2s}\leq  (1-10^{-\frac{1}{2s}})^{-2s}\langle \mathbf{k}\rangle^{-2s}.
\end{align}
Thus using again the H\"older inequality implies
\begin{align*}
\sum_{\mathbf{k}\in\Z^d}\left(\sum_{\mathbf{j}\notin\mathcal{I}_{\mathbf{k}}}a_\mathbf{j}b_{\mathbf{k}-\mathbf{j}}\right)^2\langle \mathbf{k}\rangle^{2s}
&\leq \sum_{\mathbf{k}\in\Z^d}\left(\sum_{\mathbf{j}\in\Z^d}b_{\mathbf{k}-\mathbf{j}}^2\langle \mathbf{k}-\mathbf{j}\rangle^{2\alpha_0}\right)\left(\sum_{\mathbf{j}\notin \mathcal{I}_\mathbf{k}}a_{\mathbf{j}}^2\langle \mathbf{k}-\mathbf{j}\rangle^{-2\alpha_0}\right) \langle \mathbf{k}\rangle^{2s}\\
&\leq\|Y\|_{\alpha_0}^2\sum_{\mathbf{k}\in\Z^d}\left(\sum_{\mathbf{j}\notin \mathcal{I}_\mathbf{k}}a_{\mathbf{j}}^2\langle \mathbf{j}\rangle^{2s}\langle \mathbf{k}-\mathbf{j}\rangle^{-2\alpha_0}\langle \mathbf{j}\rangle^{-2s}\langle \mathbf{k}\rangle^{2s}\right) \\
&\leq(1-10^{-\frac{1}{2s}})^{-2s}\|Y\|_{\alpha_0}^2\sum_{\mathbf{k}\in\Z^d}\left(\sum_{\mathbf{j}\in \Z^d}a_{\mathbf{j}}^2\langle \mathbf{j}\rangle^{2s}\langle \mathbf{k}-\mathbf{j}\rangle^{-2\alpha_0}\right)\ ({\rm since}\ \eqref{ninIk})\\
&\leq(1-10^{-\frac{1}{2s}})^{-2s}\|Y\|_{\alpha_0}^2\sum_{\mathbf{j}\in\Z^d}a_{\mathbf{j}}^2\langle \mathbf{j}\rangle^{2s}\left(\sum_{\mathbf{k}\in \Z^d}\langle \mathbf{k}-\mathbf{j}\rangle^{-2\alpha_0}\right)\\
&\leq M_1^2(s)\|Y\|_{\alpha_0}^2\|X\|_{s}^2,
\end{align*}
where $M_1(s)=(1-10^{-\frac{1}{2s}})^{-s}\sqrt{\sum\limits_{\mathbf{k}\in\Z^d}\langle \mathbf{k}\rangle^{-2\alpha_0}}<\infty$ since $\alpha_0>d/2.$
\end{itemize}
Combining {\bf Case 1} and {\bf Case 2} implies
\begin{align*}
\|XY\|_s&\leq\sqrt{2M^2_0(\alpha_0)\|X\|_{\alpha_0}^2\|Y\|_{s}^2+2M^2_1(s)\|Y\|_{\alpha_0}^2\|X\|_{s}^2}\\
&\leq K_0(\alpha_0)\|X\|_{\alpha_0}\|Y\|_{s}+K_1(s)\|X\|_{s}\|Y\|_{\alpha_0},
\end{align*}
which proves Lemma \ref{tame}, where $K_0=\sqrt{2}M_0,\ K_1(s)=\sqrt{2}M_1(s).$

\end{proof}

\bibliographystyle{alpha} 

\begin{thebibliography}{RMDA00}

\bibitem[AA80]{AA80}
S.~Aubry and G.~Andr\'{e}.
\newblock Analyticity breaking and {A}nderson localization in incommensurate
  lattices.
\newblock {\em Ann. Israel Phys. Soc.}, 3:133--164, 1980.

\bibitem[AFK11]{AFK11}
A.~Avila, B.~Fayad, and R.~Krikorian.
\newblock A {KAM} scheme for {${\rm SL}(2,\Bbb R)$} cocycles with {L}iouvillean
  frequencies.
\newblock {\em Geom. Funct. Anal.}, 21(5):1001--1019, 2011.

\bibitem[AG07]{AG07}
S.~Alinhac and P.~G\'{e}rard.
\newblock {\em Pseudo-differential operators and the {N}ash-{M}oser theorem},
  volume~82 of {\em Graduate Studies in Mathematics}.
\newblock American Mathematical Society, Providence, RI, 2007.
\newblock Translated from the 1991 French original by Stephen S. Wilson.

\bibitem[AL97]{AL97}
B.~L. Altshuler and L.~S. Levitov.
\newblock Weak chaos in a quantum {K}epler problem.
\newblock {\em Phys. Rep.}, 288(1-6):487--512, 1997.

\bibitem[AM93]{AM93}
M.~Aizenman and S.~Molchanov.
\newblock Localization at large disorder and at extreme energies: an elementary
  derivation.
\newblock {\em Comm. Math. Phys.}, 157(2):245--278, 1993.

\bibitem[And58]{And58}
P.~W. Anderson.
\newblock Absence of diffusion in certain random lattices.
\newblock {\em Physical review}, 109(5):1492--1505, 1958.

\bibitem[Arn63]{Arn63}
V.~I. Arnold.
\newblock Proof of a theorem of {A}. {N}. {K}olmogorov on the preservation of
  conditionally periodic motions under a small perturbation of the
  {H}amiltonian.
\newblock {\em Uspehi Mat. Nauk}, 18(5 (113)):13--40, 1963.

\bibitem[ASK15]{ASK15}
G.~A. {A}lvarez, D.~Suter, and R.~Kaiser.
\newblock Localization-delocalization transition in the dynamics of
  dipolar-coupled nuclear spins.
\newblock {\em Science}, 349(6250):846--848, 2015.

\bibitem[Avi09]{Avi09}
A.~Avila.
\newblock On the spectrum and {L}yapunov exponent of limit periodic
  {S}chr\"{o}dinger operators.
\newblock {\em Comm. Math. Phys.}, 288(3):907--918, 2009.

\bibitem[AYZ17]{AYZ17}
A.~Avila, J.~You, and Q.~Zhou.
\newblock Sharp phase transitions for the almost {M}athieu operator.
\newblock {\em Duke Math. J.}, 166(14):2697--2718, 2017.

\bibitem[BB13]{BB13}
M.~Berti and P.~Bolle.
\newblock Quasi-periodic solutions with {S}obolev regularity of {NLS} on {$\Bbb
  T^d$} with a multiplicative potential.
\newblock {\em J. Eur. Math. Soc. (JEMS)}, 15(1):229--286, 2013.

\bibitem[BLS83]{BLS83}
J.~Bellissard, R.~Lima, and E.~Scoppola.
\newblock Localization in {$v$}-dimensional incommensurate structures.
\newblock {\em Comm. Math. Phys.}, 88(4):465--477, 1983.

\bibitem[BLT83]{BLT83}
J.~Bellissard, R.~Lima, and D.~Testard.
\newblock A metal-insulator transition for the almost {M}athieu model.
\newblock {\em Comm. Math. Phys.}, 88(2):207--234, 1983.

\bibitem[Bou02]{Bou02}
J.~Bourgain.
\newblock Estimates on {G}reen's functions, localization and the quantum kicked
  rotor model.
\newblock {\em Ann. of Math. (2)}, 156(1):249--294, 2002.

\bibitem[Bou05]{Bou05}
J.~Bourgain.
\newblock {\em Green's function estimates for lattice {S}chr\"{o}dinger
  operators and applications}, volume 158 of {\em Annals of Mathematics
  Studies}.
\newblock Princeton University Press, Princeton, NJ, 2005.

\bibitem[CD93]{CD93}
V.~A. Chulaevsky and E.~I. Dinaburg.
\newblock Methods of {KAM}-theory for long-range quasi-periodic operators on
  {${\bf Z}^\nu$}. {P}ure point spectrum.
\newblock {\em Comm. Math. Phys.}, 153(3):559--577, 1993.

\bibitem[Cea79]{Cas79}
G.~Casati and et~al.
\newblock Stochastic behavior of a quantum pendulum under a periodic
  perturbation.
\newblock In {\em Stochastic behavior in classical and quantum Hamiltonian
  systems}, pages 334--352. Springer, 1979.

\bibitem[CGS89]{CGS89}
G.~Casati, I.~Guarneri, and D.~L. Shepelyansky.
\newblock Anderson transition in a one-dimensional system with three
  incommensurate frequencies.
\newblock {\em Phys. Rev. Lett.}, 62(4):345--348, 1989.

\bibitem[Cra83]{Cra83}
W.~Craig.
\newblock Pure point spectrum for discrete almost periodic {S}chr\"{o}dinger
  operators.
\newblock {\em Comm. Math. Phys.}, 88(1):113--131, 1983.

\bibitem[Dam17]{Dam17}
D.~Damanik.
\newblock Schr\"{o}dinger operators with dynamically defined potentials.
\newblock {\em Ergodic Theory Dynam. Systems}, 37(6):1681--1764, 2017.

\bibitem[Dea19]{Den19}
X.~Deng and et~al.
\newblock One-dimensional quasicrystals with power-law hopping.
\newblock {\em Phys. Rev. Lett.}, 123(2):025301, 1--6, 2019.

\bibitem[DF88]{DF88}
E.~Doron and S.~Fishman.
\newblock Anderson localization for a two-dimensional rotor.
\newblock {\em Phys. Rev. Lett.}, 60(10):867--870, 1988.

\bibitem[DG10]{DG10}
D.~Damanik and Z.~Gan.
\newblock Limit-periodic {S}chr\"{o}dinger operators in the regime of positive
  {L}yapunov exponents.
\newblock {\em J. Funct. Anal.}, 258(12):4010--4025, 2010.

\bibitem[DG11]{DG11}
D.~Damanik and Z.~Gan.
\newblock Limit-periodic {S}chr\"{o}dinger operators with uniformly localized
  eigenfunctions.
\newblock {\em J. Anal. Math.}, 115:33--49, 2011.

\bibitem[DG13]{DG13}
D.~Damanik and Z.~Gan.
\newblock Limit-periodic {S}chr\"{o}dinger operators on {$\Bbb{Z}^d$}: uniform
  localization.
\newblock {\em J. Funct. Anal.}, 265(3):435--448, 2013.

\bibitem[DS75]{DS75}
E.~I. Dinaburg and J.~G. Sinai.
\newblock The one-dimensional {S}chr\"{o}dinger equation with quasiperiodic
  potential.
\newblock {\em Funkcional. Anal. i Prilo\v{z}en.}, 9(4):8--21, 1975.

\bibitem[Eli92]{Eli92}
L.~H. Eliasson.
\newblock Floquet solutions for the {$1$}-dimensional quasi-periodic
  {S}chr\"{o}dinger equation.
\newblock {\em Comm. Math. Phys.}, 146(3):447--482, 1992.

\bibitem[Eli97]{Eli97}
L.~H. Eliasson.
\newblock Discrete one-dimensional quasi-periodic {S}chr\"{o}dinger operators
  with pure point spectrum.
\newblock {\em Acta Math.}, 179(2):153--196, 1997.

\bibitem[FGP82]{FGP82}
S.~Fishman, D.~R. Grempel, and R.~E. Prange.
\newblock Chaos, quantum recurrences, and {A}nderson localization.
\newblock {\em Phys. Rev. Lett.}, 49(8):509--512, 1982.

\bibitem[FP84]{FP84}
A.~L. Figotin and L.~A. Pastur.
\newblock An exactly solvable model of a multidimensional incommensurate
  structure.
\newblock {\em Comm. Math. Phys.}, 95(4):401--425, 1984.

\bibitem[GFP82]{GFP82}
D.~Grempel, S.~Fishman, and R.~Prange.
\newblock Localization in an incommensurate potential: {A}n exactly solvable
  model.
\newblock {\em Phys. Rev. Lett.}, 49(11):833, 1982.

\bibitem[GGW05]{GW05}
A.~M. Garc{\'\i}a-Garc{\'\i}a and J.~Wang.
\newblock Anderson transition in quantum chaos.
\newblock {\em Phys. Rev. Lett.}, 94(24):244102, 1--4, 2005.

\bibitem[H{\"{o}}r76]{Hor76}
L.~H{\"{o}}rmander.
\newblock The boundary problems of physical geodesy.
\newblock {\em Arch. Rational Mech. Anal.}, 62(1):1--52, 1976.

\bibitem[H{\"o}r77]{Hor77}
L.~H{\"o}rmander.
\newblock Implicit function theorems.
\newblock {\em Lectures at Stanford University, summer quarter}, 1977.

\bibitem[Izr90]{Izr90}
F.~M. Izrailev.
\newblock Simple models of quantum chaos: Spectrum and eigenfunctions.
\newblock {\em Phys. Rep.}, 196(5-6):299--392, 1990.

\bibitem[JC86]{JC86}
J.~V. Jos{\'e} and R.~Cordery.
\newblock Study of a quantum fermi-acceleration model.
\newblock {\em Phys. Rev. Lett.}, 56(4):290--293, 1986.

\bibitem[JK16]{JK16}
S.~Jitomirskaya and I.~Kachkovskiy.
\newblock {$L^2$}-reducibility and localization for quasiperiodic operators.
\newblock {\em Math. Res. Lett.}, 23(2):431--444, 2016.

\bibitem[JK19]{JK19}
S.~Jitomirskaya and I.~Kachkovskiy.
\newblock All couplings localization for quasiperiodic operators with monotone
  potentials.
\newblock {\em J. Eur. Math. Soc. (JEMS)}, 21(3):777--795, 2019.

\bibitem[JL17]{JL17}
S.~Jitomirskaya and W.~Liu.
\newblock Arithmetic spectral transitions for the {M}aryland model.
\newblock {\em Comm. Pure Appl. Math.}, 70(6):1025--1051, 2017.

\bibitem[Kla80]{Kla80}
S.~Klainerman.
\newblock Global existence for nonlinear wave equations.
\newblock {\em Comm. Pure Appl. Math.}, 33(1):43--101, 1980.

\bibitem[Kol54]{Kol54}
A.~N. Kolmogorov.
\newblock On conservation of conditionally periodic motions for a small change
  in {H}amilton's function.
\newblock {\em Dokl. Akad. Nauk SSSR (N.S.)}, 98:527--530, 1954.

\bibitem[KPS20]{KPS20}
I.~Kachkovskiy, L.~Parnovski, and R.~Shterenberg.
\newblock Convergence of perturbation series for unbounded monotone
  quasiperiodic operators.
\newblock {\em arXiv:2006.00346}, 2020.

\bibitem[MJ17]{MJ17}
C.~A. Marx and S.~Jitomirskaya.
\newblock Dynamics and spectral theory of quasi-periodic {S}chr\"{o}dinger-type
  operators.
\newblock {\em Ergodic Theory Dynam. Systems}, 37(8):2353--2393, 2017.

\bibitem[Mos61]{Mos61}
J.~Moser.
\newblock A new technique for the construction of solutions of nonlinear
  differential equations.
\newblock {\em Proc. Nat. Acad. Sci. U.S.A.}, 47:1824--1831, 1961.

\bibitem[Mos62]{Mos62a}
J.~Moser.
\newblock On invariant curves of area-preserving mappings of an annulus.
\newblock {\em Nachr. Akad. Wiss. G\"{o}ttingen Math.-Phys. Kl. II},
  1962:1--20, 1962.

\bibitem[MP84]{MP84}
J.~Moser and J.~P\"{o}schel.
\newblock An extension of a result by {D}inaburg and {S}inai on quasiperiodic
  potentials.
\newblock {\em Comment. Math. Helv.}, 59(1):39--85, 1984.

\bibitem[Nas56]{Nas56}
J.~Nash.
\newblock The imbedding problem for {R}iemannian manifolds.
\newblock {\em Ann. of Math. (2)}, 63:20--63, 1956.

\bibitem[Ne95]{Nab95}
A.~Nabetani and {et al.}
\newblock Optical properties of two-dimensional dye aggregate.
\newblock {\em J. Chem. Phys.}, 102(13):5109--5117, 1995.

\bibitem[P{\"o}s83]{Pos83}
J.~P{\"o}schel.
\newblock Examples of discrete {S}chr\"{o}dinger operators with pure point
  spectrum.
\newblock {\em Comm. Math. Phys.}, 88(4):447--463, 1983.

\bibitem[Rea03]{Rod03}
A.~Rodr{\'\i}guez and et~al.
\newblock Anderson transition in low-dimensional disordered systems driven by
  long-range nonrandom hopping.
\newblock {\em Phys. Rev. Lett.}, 90(2):027404, 1--4, 2003.

\bibitem[RMDA00]{RMD00}
A.~Rodr{\'\i}guez, V.~A. Malyshev, and F.~Dom{\'\i}nguez-Adame.
\newblock Quantum diffusion and lack of universal one-parameter scaling in
  one-dimensional disordered lattices with long-range coupling.
\newblock {\em J. Phys. A: Math. Gen.}, 33(15):L161--L166, 2000.

\bibitem[R{\"u}s80]{Rus80}
H.~R{\"u}ssmann.
\newblock On the one-dimensional {S}chr\"{o}dinger equation with a
  quasiperiodic potential.
\newblock {\em Ann. New York Acad. Sci.}, 357:90--107, 1980.

\bibitem[Sal04]{Sal04}
D.~A. Salamon.
\newblock The {K}olmogorov-{A}rnold-{M}oser theorem.
\newblock {\em Math. Phys. Electron. J.}, 10:1--37, 2004.

\bibitem[Sar82]{Sar82}
P.~Sarnak.
\newblock Spectral behavior of quasiperiodic potentials.
\newblock {\em Comm. Math. Phys.}, 84(3):377--401, 1982.

\bibitem[Shi21]{Shi21}
Y.~Shi.
\newblock A multi-scale analysis proof of the power-law localization for random
  operators on {$\Bbb{Z}^d$}.
\newblock {\em J. Differential Equations}, 297:201--225, 2021.

\bibitem[Sim82]{Sim82}
B.~Simon.
\newblock Almost periodic {S}chr\"{o}dinger operators: a review.
\newblock {\em Adv. in Appl. Math.}, 3(4):463--490, 1982.

\bibitem[Sim85]{Sim85}
B.~Simon.
\newblock Almost periodic {S}chr\"{o}dinger operators. {IV}. {T}he {M}aryland
  model.
\newblock {\em Ann. Physics}, 159(1):157--183, 1985.

\bibitem[Sin87]{Sin86}
Y.~G. Sinai.
\newblock Anderson localization for one-dimensional difference
  {S}chr\"{o}dinger operator with quasiperiodic potential.
\newblock {\em J. Statist. Phys.}, 46(5-6):861--909, 1987.

\bibitem[SPK22]{SPK22}
M.~S. Santhanam, S.~Paul, and J.~B. Kannan.
\newblock Quantum kicked rotor and its variants: Chaos, localization and
  beyond.
\newblock {\em Phys. Rep.}, 956:1--87, 2022.

\end{thebibliography}

 \end{document}